\DeclareMathOperator*{\argmin}{arg\,min}
\newcommand{\oomit}[1]{}
\newtheorem{assumption}{Assumption}
\begin{document}

\title{PAC One-Step Safety Certification for Black-Box Discrete-Time Stochastic Systems}

\titlerunning{PAC One-Step Safety Certification for Black-Box Stochastic Systems} 

 \author{Taoran Wu\inst{1,2}, Dominik Wagner\inst{3}, Jingduo Pan\inst{1,2},
    Luke Ong\inst{3}, Arvind Easwaran\inst{3}, and Bai Xue\inst{1,2}}
\institute{Key Laboratory of System Software, Institute of Software, Chinese Academy of Sciences, Beijing, China \email{\{wutr,panjd,xuebai\}@ios.ac.cn}
\and
University of Chinese Academy of Sciences, Beijing, China
\and
Nanyang Technological University, Singapore
\email{\{luke.ong,dominik.wagner,arvinde\}@ntu.edu.sg}}

\authorrunning{T. Wu, D. Wagner, J. Pan, L. Ong, A. Easwaran, and B. Xue}

\maketitle

\begin{abstract}
This paper investigates the problem of safety certification for black-box discrete-time stochastic systems, where both the system dynamics and disturbance distributions are unknown, and only sampled data are available. Under such limited information, ensuring robust or classical quantitative safety over finite or infinite horizons is generally infeasible. To address this challenge, we propose a data-driven framework that provides theoretical one-step safety guarantees in the Probably Approximately Correct (PAC) sense. This one-step guarantee can be applied recursively online at each time step, thereby yielding step-by-step safety assurances over extended horizons. Our approach formulates barrier certificate conditions based solely on sampled data and establishes PAC safety guarantees by leveraging the VC dimension, scenario approaches, Markov’s inequality, and Hoeffding’s inequality. Two sampling procedures are proposed, and three methods are proposed to derive PAC safety guarantees. The properties and comparative advantages of these three methods are thoroughly discussed. Finally, the effectiveness of the proposed methods are demonstrated through several numerical examples.
\keywords{Probably Approximately Correct \and One-step Safety Certification \and  Stochastic Discrete-time Systems \and Black-box Models}
\end{abstract}

\section{Introduction}
\label{sec:intro}
With the rapid advancement of artificial intelligence, the deployment of intelligent autonomous systems has significantly increased. Ensuring the safety of such systems is of great significance. A system is considered \textit{safe} if its trajectories do not leave a designated set of safe states, the \emph{safe set}.
In practice, however, system dynamics are often stochastic due to uncertainties and disturbances present in real-world environments, posing major challenges to such strong safety guarantees.

In recent years, several methods have been developed for safety/reach-avoid verification of stochastic systems. Notable approaches include abstraction-based techniques (e.g., \cite{zamani2014symbolic,lahijanian2015formal}) and barrier certificates (e.g., \cite{prajna2007framework, chakarov2013probabilistic, jagtap2020formal,kenyon2021supermartingales, zhi2024unifying,xue2024sufficient}). However, these methods typically require accurate system dynamics and disturbance distributions, which are often unavailable or too costly to obtain, leaving many systems effectively black-box. Although one can try to learn these models from data, doing so is difficult and time-consuming, and the resulting models are often too imprecise to support reliable formal guarantees. As a result, providing robust or quantitative safety guarantees over either finite or infinite horizons is especially challenging, and sometimes impossible, for black-box stochastic systems. Data-driven methods (e.g., \cite{nejati2023formal, nazeri2025data} \footnote{{In our view, the statistical guarantee in \cite{nazeri2025data} requires additional justification due to the reuse of the confidence parameter $\beta$ across multiple inference stages. Statistically, this amounts to treating a joint estimation problem as a series of independent marginal estimations without appropriately allocating the confidence budget. This raises a multiple comparisons concern: to ensure a valid system-level guarantee, the confidence budget should be distributed across all inference stages rather than reused, as otherwise the probability of system-wide failure could exceed $\beta$. A possible resolution is to employ confidence-sequence techniques, which provide uniform-in-time guarantees while appropriately controlling the overall error probability \cite{waudby2024estimating}.}}) have been proposed recently to provide safety guarantees for such systems. {Although these works target finite or infinite-horizon safety verification, they typically require strong prior knowledge, such as Lipschitz constants, to generalize from finite samples to the entire state space. In a more general black-box setting where such constants are unknown, ensuring such strong safety guarantees is generally infeasible.}
To address these challenges, and inspired by \cite{wu2025convex,xue2020pac}, this paper uses barrier certificates to provide formal one-step safety guarantees for black-box discrete-time stochastic systems, relying solely on sampled data instead of explicit models.


In this paper, we propose a data-driven framework for safety certification of black-box discrete-time stochastic systems. The system dynamics and disturbance distributions are assumed to be unknown, and only sampled data is available. Based on these samples, the proposed approaches establish formal one-step safety guarantees in the PAC sense by solving barrier certificate conditions. We investigate two sampling strategies in combination with two types of barrier certificates, robust and stochastic, resulting in three distinct forms of PAC safety certification, as formulated in Subsection \ref{sub:ps}. The first method solves a robust barrier certificate condition using one-to-one state-disturbance sample pairs. PAC safety certification is then established by combining VC dimension theory or scenario approaches with Markov's inequality. The second method also solves a robust barrier certificate condition but employs one-to-many state-disturbance samples. The corresponding PAC safety certification is derived through scenario approaches, combined with Markov's and Hoeffding's inequalities.  The third method addresses a stochastic barrier certificate condition using one-to-many state-disturbance samples, and the PAC safety certification is similarly established via scenario approaches, together with Markov's and Hoeffding's inequalities. The properties and respective advantages of these three methods are also discussed, and the appropriate choice depends on the specific problem setting. Finally, the effectiveness of the proposed approaches are demonstrated through several numerical examples.

The main contributions of this work are summarized as follows:
\begin{enumerate}
\item We propose three PAC methods for one-step safety certification of black-box discrete-time stochastic systems. To our knowledge, this is the first work to provide PAC guarantees for safety certification in such systems.
\item We correct the PAC statement in Theorem 4 of \cite{xue2020pac}, although our present characterization is inspired by its original formulation. See Remark \ref{correction} in Subsection \ref{sec:probabilistic} for details.
\item We implement a prototype tool to support the proposed framework, available at \href{https://github.com/TaoranWu/PSC-BDSS}{https://github.com/TaoranWu/PSC-BDSS}. The effectiveness of our approach is demonstrated through several numerical examples.
\end{enumerate}

\textbf{Related Work.}  Barrier certificates were first introduced for deterministic systems as Lyapunov-like tools for proving safety and reachability \cite{prajna2004safety,prajna2007convex}, and later extended to stochastic systems for finite and infinite-horizon safety and reach-avoid analysis. Subsequent work developed probability bounds using Ville’s inequality \cite{prajna2007framework,steinhardt2012finite,jagtap2018temporal,jagtap2020formal,santoyo2021barrier,vzikelic2023learning,zhi2024unifying} and equation relaxations \cite{xue2021reach,xue2022reach,yu2023safe,xue2024sufficient,xue2024finite,chen2025construction}, and applied barrier methods to probabilistic programs and $\omega$-regular properties \cite{chakarov2013probabilistic,mciver2017new,kenyon2021supermartingales,abate2024stochastic,wang2025verifying}. 
The above methods assume full knowledge of the system, including both dynamics and disturbance distributions. In contrast, this work considers systems with unknown dynamics and disturbance distributions. 

On the other hand, data-driven methods for formal verification and design of black-box systems have received increasing attention, e.g., \cite{samari2024single,wooding2024learning}. A representative data-driven approach is scenario optimization, originally proposed in \cite{calafiore2006scenario} to address robust optimization problems. 
It has since been integrated with barrier certificates for safety verification and controller synthesis in deterministic systems \cite{xue2019probably,nejati2023formal, wu2025convex, rickard2025data, samari2025data}. Later, \cite{mathiesen2023inner,gracia2024data,mathiesen2024data,wwox2026} investigated finite-horizon safety guarantees for stochastic discrete-time systems with known dynamics but unknown disturbance distributions. Subsequently, \cite{nejati2023data,salamati2024data} addressed the more challenging case of black-box stochastic systems with both unknown dynamics and disturbance distributions. Furthermore, \cite{salamati2022data} and \cite{salamati2022safety} aimed to reduce the sample complexity by employing the wait-and-judge technique~\cite{campi2018wait} and the repetitive scenario technique~\cite{calafiore2016repetitive}, respectively. {Although these works target finite or infinite-horizon safety verification, they typically require strong prior knowledge, such as Lipschitz constants, to generalize from finite samples to the entire state space. In a more general black-box setting where such constants are unknown, ensuring such strong safety guarantees is generally infeasible. In contrast, our approach eliminates the need for such prior constants. Moreover, we would like to highlight a technical subtlety in these approaches. To the best of our knowledge, works such as \cite{salamati2024data,salamati2022data,salamati2022safety} apply Chebyshev’s inequality to obtain uniform characterizations of expectations (e.g., via Rademacher complexity). However, the validity of this step in the current setting requires additional justification, since applying a pointwise bound to a data-dependent selection does not automatically extend to a uniform guarantee.}



\textbf{Notations.} Let $\mathbb{R}$, $\mathbb{R}^n$, and $\mathbb{N}$ denote real numbers, $n$-dimensional vectors, and non-negative integers. For $\bm{v} \in \mathbb{R}^n$, $\bm{v}[j]$ is its $j$-th component. For sets $\mathbb{A}$ and $\mathbb{B}$, $\mathbb{A} \setminus \mathbb{B}$ is the difference, $\overline{\mathbb{A}}$ the closure, and $\mathbb{A}^c$ the complement. The indicator $1_{\mathbb{S}}(\bm{x})$ equals 1 if $\bm{x} \in \mathbb{S}$, 0 otherwise. For a probability space $(\mathbb{Y}, \mathcal{F}, \textnormal{P}_{\bm{y}})$, $\textnormal{P}_{\bm{y}}[\mathbb{E}]$ is the probability of $\mathbb{E}$, and $\textnormal{E}_{\bm{y}}[\cdot]$ its expectation.

\section{Preliminaries}
\label{sec:pre}
In this section, we define black-box stochastic systems and three PAC one-step safety certification problems, and then briefly review the scenario approach.
\subsection{Problem Statement}
\label{sub:ps}
We consider a class of black-box discrete-time stochastic systems
\begin{equation}
\label{eq:system}
    \bm{x}(t+1) = \bm{f}(\bm{x}(t), \bm{d}(t)), \quad \forall t \in \mathbb{N},
\end{equation}
where $\bm{x}(t)\in\mathbb{R}^n$ is the state and $\bm{d}(t)\in\mathbb{D}$ is a stochastic disturbance. The disturbance sequence $\{\bm d(t)\}_{t\ge0}$ is i.i.d.\ with distribution $\mathrm{P}_{\bm d}$ supported on $\mathbb D$. Both the dynamics $\bm f$ and the disturbance distribution $\mathrm{P}_{\bm d}$ are unknown.

Let the safe set be a compact set $\mathbb{X}\subset\mathbb{R}^n$. We equip $\mathbb X$ with the probability space $(\mathbb{X},\mathcal{F}_{\bm x},\mathrm{P}_{\bm x})$, where $\mathrm{P}_{\bm x}$ denotes the uniform probability measure on $\mathbb X$ (i.e., the normalized Lebesgue measure). {It is worth remarking that the paper illustrates the problem using a uniform distribution to determine the ``volume'' of the safe set. For black-box systems, assuming a uniform distribution is a general and natural choice, since it assigns equal importance to all states and naturally characterizes the volume of a computed set. However, the underlying theorems and proofs are distribution-agnostic. The method remains valid for any state distribution $\mathrm{P}_x$, provided that the training samples are drawn from the same distribution.} 
Starting from any $\bm x\in\mathbb X$ and under any disturbance $\bm d\in\mathbb D$, the next state is given by $\bm x^+=\bm f(\bm x,\bm d)$.

We now formulate the three safety certification problems considered.

\begin{problem}[PAC One-Step Safety Certification I]
\label{prob:1}
The objective is to design a certification procedure $\mathcal{A}$ that provides one-step safety guarantees in the single-layer PAC sense by leveraging one-to-one state–disturbance samples. Specifically, with confidence at least $1-\delta$ over the sample set $
\mathbb{S} = \{(\bm{x}^{(i)}, \bm{d}^{(i)})\}_{i=1}^N 
\stackrel{\text{i.i.d.}}{\sim} 
\textnormal{P}_{\bm{x}} \times \textnormal{P}_{\bm{d}}$ \footnote{{We clarify that our method does not require observing the explicit values of the disturbances $d^{(i)}$. In our notation $\mathbb{S} = \{(\bm{x}^{(i)}, \bm{d}^{(i)})\}_{i=1}^N$, $\bm{d}^{(i)}$ formally denotes the realization of the random disturbance that generated the transition.
However, the optimization constraints in this paper only require evaluating the barrier function at the current state $\bm{x}^{(i)}$ and the observed next state $\bm{x}^{(i)+} = \bm{f}(\bm{x}^{(i)}, \bm{d}^{(i)})$. The numerical value of $\bm{d}^{(i)}$ itself is never used in the computation. This is a key feature of our black-box approach, as it relies solely on observed state transitions $(\bm{x}^{(i)}, \bm{x}^{(i)+})$}.}, if $\mathcal{A}(\mathbb{S})$ accepts, then for at least a $(1-\alpha_1)$ fraction of states $\bm{x}$ in $\mathbb{X}$, the following holds:  the probability that the system~\eqref{eq:system}, starting from $\bm{x}$, remains within $\mathbb{X}$ at the next step is no less than $1-\alpha_2$, where $\alpha_1,\alpha_2,\delta \in (0,1)$.

Formally, the single-layer one-step PAC safety certification is expressed as
\[
\textnormal{P}_{\mathbb{S}} \!\left[
\mathcal{A}(\mathbb{S}) \Rightarrow 
\textnormal{P}_{\bm{x}}\left[
\textnormal{P}_{\bm{d}}\left[\bm{f}(\bm{x},\bm{d})\in \mathbb{X}\right]
\ge 1-\alpha_2
\right] \ge 1-\alpha_1
\right] \ge 1-\delta,
\]
where $\textnormal{P}_{\mathbb{S}} := (\textnormal{P}_{\bm{x}} \times \textnormal{P}_{\bm{d}})^N$.
\end{problem}

In Problem~\ref{prob:1}, the confidence level $1-\delta$ corresponds to the joint probability measure $(\textnormal{P}_{\bm{x}} \times \textnormal{P}_{\bm{d}})^N$, which is the $N$-fold product of $\textnormal{P}_{\bm{x}} \times \textnormal{P}_{\bm{d}}$ defined on the sample space $(\mathbb{X} \times \mathbb{D})^N$ (the Cartesian product of $\mathbb{X} \times \mathbb{D}$ with itself $N \in \mathbb{N}$ times).


\begin{problem}[PAC One-Step Safety Certification II]
\label{prob:2}
The goal is to design a certification procedure $\mathcal{A}$ that provides one-step safety guarantees in the PAC sense by leveraging one-to-many state–disturbance samples. 
Specifically, with confidence at least $1-\delta$ over the sample set 
$\mathbb{S} = \{(\bm{x}^{(i)}, \{\bm{d}^{(i,j)}\}_{j=1}^M)\}_{i=1}^N
\stackrel{\text{i.i.d.}}{\sim} 
\textnormal{P}_{\bm{x}} \times \textnormal{P}_{\bm{d}}^M$, where $\{\bm{d}^{(i,j)}\}_{j=1}^M \stackrel{\text{i.i.d.}}{\sim} \textnormal{P}_{\bm{d}}$, 
if $\mathcal{A}(\mathbb{S})$ accepts, then for at least a $(1-\alpha_1)$ fraction of states $\bm{x}$ in $\mathbb{X}$, it holds:  
the probability that the system~\eqref{eq:system}, starting from $\bm{x}$, remains within $\mathbb{X}$ at the next step is no less than $1-\alpha_2$, where $\alpha_1,\alpha_2,\delta \in (0,1)$.

Formally, the PAC one-step safety certification is expressed as
\[
\textnormal{P}_{\mathbb{S}} \!\left[
\mathcal{A}(\mathbb{S}) \Rightarrow 
\textnormal{P}_{\bm{x}}\left[ 
\begin{split}
\textnormal{P}_{\bm{d}}\left[\bm{f}(\bm{x},\bm{d})\in \mathbb{X}\right]\ge 1-\alpha_2
\end{split}
\right] \ge 1-\alpha_1
\right] \ge 1-\delta,
\]
where $\textnormal{P}_{\mathbb{S}} := (\textnormal{P}_{\bm{x}} \times \textnormal{P}_{\bm{d}}^M)^N$.
\end{problem}

In Problem \ref{prob:2}, the probability $1-\delta$ corresponds to the joint probability $(\textnormal{P}_{\bm{x}}\times \textnormal{P}_{\bm{d}}^M)^N$ (the product of the probability $\textnormal{P}_{\bm{x}}\times \textnormal{P}_{\bm{d}}^M $ repeated $N$ times) in the space $(\mathbb{X}\times \mathbb{D}^M)^N$(the Cartesian product of $\mathbb{X}\times \mathbb{D}^M$ with itself $N$ times). Setting $M=1$ recovers Problem \ref{prob:1} from Problem \ref{prob:2}. \textit{However, we do not merge Problems \ref{prob:1} and \ref{prob:2} in this paper; see Remark \ref{remark_com_single_and_nested} for an explanation.} 

\begin{problem}[PAC One-Step Safety Certification III]
\label{prob:3}
Similar to Problem~\ref{prob:2}, the goal is to design a certification procedure $\mathcal{A}$ that yields PAC one-step safety guarantees using one-to-many state--disturbance samples. The key difference is that the fixed threshold $\alpha_2$ is replaced by a state-dependent function $\epsilon(\cdot): \mathbb{X} \to [0,1]$, which may also depend on the sampled dataset $\mathbb{S}$. 

Formally, the PAC one-step safety requirement is
\[
\textnormal{P}_{\mathbb{S}}\!\left[
\mathcal{A}(\mathbb{S}) \Rightarrow 
\textnormal{P}_{\bm{x}}\!\left[
\textnormal{P}_{\bm{d}}\!\left[\bm{f}(\bm{x},\bm{d}) \in \mathbb{X}\right]
\ge 1 - \epsilon(\bm{x})
\right] \ge 1 - \alpha_1
\right] \ge 1 - \delta,
\]
where $\textnormal{P}_{\mathbb{S}} := (\textnormal{P}_{\bm{x}} \times \textnormal{P}_{\bm{d}}^M)^N$ and $\alpha_1, \delta \in (0,1)$.
\end{problem}


{The three problems were introduced to establish a hierarchy of increasing flexibility and reduced conservatism for safety certification.
\begin{enumerate}
    \item Problem \ref{prob:1} establishes the baseline framework using standard one-to-one sampling (pairs of states and disturbances). While simple, it assesses the safety of a state based on a single realization of the disturbance. Consequently, a state might be deemed `unsafe' in the worst-case analysis due to one rare, unlucky disturbance, without accounting for its robustness against the broader disturbance distribution. 
    \item Problem \ref{prob:2} improves upon this by utilizing one-to-many sampling (multiple disturbances per state). This decoupling allows for tighter variance reduction and stronger guarantees, particularly in settings where we can query the system multiple times from the same initial state (e.g., via a simulator or reset mechanism), rather than relying solely on single trajectories. It is worth remarking here that setting $M=1$ recovers Problem \ref{prob:1}. 
    \item Problem \ref{prob:3} provides the most general formulation by allowing state-dependent safety thresholds. Instead of enforcing a uniform safety probability across the entire state space (as in Problems 1 and 2), it adapts the guarantee based on the specific location in the state space, which is essential for complex systems where safety margins vary.
\end{enumerate}
}

\begin{remark}
It is worth remarking here that our one-step guarantee is designed to be applied iteratively online for ongoing safety evaluation. Fundamentally, our framework establishes a probabilistic invariance property for the safe set $\mathbb{X}$. Specifically, the guarantee states that, with confidence at least $1-\delta$, a large fraction (at least $1-\alpha_1$) of states in $\mathbb{X}$ are `safe'—meaning they will remain in $\mathbb{X}$ with probability $\ge 1-\alpha_2$.

 In our framework, the initial sampling distribution $\mathrm{P}_{\bm{x}}$ serves solely as a fixed reference measure to quantify the fraction of the safe region. Since this measure characterizes the extent of the safe set itself rather than the system's dynamics, it remains a valid metric for safety evaluation independent of the system's actual evolving state distribution. In other words, safety is a conditional property: it depends on the state's location within the safe set, not on the probability density of visiting that location.

Consequently, this probabilistic guarantee supports iterative application online: at any time step $t$, provided the system is in $\mathbb{X}$, there is a high probability (\emph{measured by the reference volume $1-\alpha_1$ with confidence $1-\delta$}) that the current state is one for which the safety transition holds. While this does not guarantee safety for every possible trajectory (which is infeasible for black-box stochastic systems), it provides a rigorous, step-by-step statistical assurance that the system remains in the `safe' region. This constitutes a valid recursive framework for online safety monitoring and risk assessment over extended horizons.
\end{remark}
\color{black}
\subsection{Scenario Optimization}
\label{sec: sop}
We review the scenario optimization framework, based on the foundational work in \cite{calafiore2006scenario}, which addresses robust  convex programs of the form:
\begin{equation}
\label{ROP}
    \begin{split}
        \bm{z}^*=&\argmin_{\bm{z}\in \mathcal{Z}\subset \mathbb{R}^m} \bm{c}^{\top} \bm{z}, \quad\texttt{s.t.~} \bm{g}(\bm{z},\bm{\delta})\leq 0, ~\forall \bm{\delta}\in \Delta,
    \end{split}
\end{equation}
where the uncertain parameter $\bm{\delta} \in \Delta$ follows a probability distribution $\textnormal{P}_\delta$. 
\begin{assumption}
\label{as:convex}
$\mathcal{Z} \subset \mathbb{R}^m$ is a convex and closed set, $\Delta\subseteq \mathbb{R}^{n_{\delta}}$, and $\bm{g}(\bm{z}, \bm{\delta}): \mathcal{Z}\times \Delta \rightarrow [-\infty, \infty]$ is continuous and convex in $\bm{z}$, for any fixed value of $\bm{\delta} \in \Delta$.  
\end{assumption}

Since the constraint set is infinite, a fully robust solution $\bm{z}^*$ is generally unattainable. Scenario optimization addresses this by enforcing the constraints only on a finite number of sampled scenarios. {Conceptually, this approach relies on the idea that if a solution is robust to a sufficiently large number of random samples, it is statistically likely to be robust to the entire uncertainty set. This allows us to convert the intractable infinite-constraint problem into a standard convex optimization problem, while providing rigorous probabilistic guarantees on the solution's validity.}
For $N$ i.i.d. samples $\Delta^N=\{\bm{\delta}_i\}_{i=1}^N\stackrel{\text{i.i.d.}}{\sim}\textnormal{P}_\delta$, we consider the following scenario optimization:
\begin{equation}
\label{ROP-N}
    \begin{split}
        \bm{z}_N^*=&\argmin_{\bm{z}\in \mathcal{Z}\subset \mathbb{R}^m} \bm{c}^{\top} \bm{z}, \quad\texttt{s.t.~}\bm{g}(\bm{z},\bm{\delta}_i)\leq 0, ~i=1,\ldots, N, 
    \end{split}
\end{equation}
under the following assumption:
\begin{assumption}[\cite{calafiore2006scenario}]
\label{as:solution}
For any given set of N samples $\Delta^N=\{\bm{\delta}_i\}_{i=1}^N$, program \eqref{ROP-N} has a unique solution $\bm{z}_N^*$ after applying a  tie-break rule.
\end{assumption}

To assess the performance of a candidate solution $\bm{z}_N^*$ across all uncertainties in $\Delta$, we define the violation set $F(\bm{z}) = \{\bm{\delta} \in \Delta \mid \bm{g}(\bm{z}, \bm{\delta}) \not\leq 0\}$ and the violation probability $V(\bm{z})=\textnormal{P}_{\delta}[\bm{\delta}\in F(\bm{z})]$ of a given $\bm{z}\in \mathcal{Z}$, which is the probability of sampling a constant $\bm{\delta}$ to which $\bm{z}$ is not robust. A central result in scenario optimization is as follows.
\begin{proposition}[\cite{campi2009scenario}]
\label{prop:pac}
For $\alpha, \beta, N, m$ satisfying $\alpha,\beta\in (0,1)$ and $\alpha \ge \frac{2}{N}(\ln{\frac{1}{\beta}} + m)$, we have
$\textnormal{P}_{\Delta^N}[V(\bm{z}_N^*) \le \alpha] \ge 1 - \beta$, where $\textnormal{P}_{\Delta^N} = \textnormal{P}_\delta^N$ is the probability measure over $N$ i.i.d. samples of $\bm{\delta}$.
\end{proposition}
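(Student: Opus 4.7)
The plan is to follow the classical scenario-approach argument of Calafiore and Campi, pivoting on the notion of \emph{support constraints} together with a union bound over the possible support sets.

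First, I would recall the support-constraint lemma: under Assumptions \ref{as:convex} and \ref{as:solution}, for any realization of $\Delta^N$ the scenario solution $\bm z_N^*$ admits at most $m$ support constraints, i.e.\ indices $i\in\{1,\dots,N\}$ whose removal from \eqref{ROP-N} strictly decreases the optimal objective. This is a Helly-type fact, exploiting convexity of $\bm g$ in $\bm z$ and the uniqueness ensured by the tie-break rule.

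Next, I would partition the ``bad'' event $\{V(\bm z_N^*)>\alpha\}$ according to which subset $I\subseteq\{1,\dots,N\}$ of indices is the support set of $\bm z_N^*$. For a fixed $I$ of size $k\le m$, the solution $\bm z_I^*$ of the auxiliary scenario program restricted to the samples in $I$ coincides with $\bm z_N^*$ on that event, while the remaining $N-k$ samples $\{\bm\delta_j\}_{j\notin I}$ are independent of $\bm z_I^*$ and must each satisfy $\bm g(\bm z_I^*,\bm\delta_j)\le 0$. Conditioning on $\bm z_I^*$ and using $V(\bm z_I^*)>\alpha$, the probability that all $N-k$ independent samples comply is at most $(1-\alpha)^{N-k}$.

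Then I would apply a union bound over the $\binom{N}{k}$ subsets of each size $k=0,\dots,m$ to obtain
\[
\textnormal{P}_{\Delta^N}\!\left[V(\bm z_N^*)>\alpha\right]\;\le\;\sum_{k=0}^{m}\binom{N}{k}(1-\alpha)^{N-k}.
\]
The final step is to verify that the hypothesis $\alpha\ge\tfrac{2}{N}(\ln(1/\beta)+m)$ forces this right-hand side to be at most $\beta$. This reduces to routine estimates: bound $\binom{N}{k}\le N^k/k!$, factor out $(1-\alpha)^{N-m}$, and use $1-\alpha\le e^{-\alpha}$ to dominate the sum by an expression of the form $\exp(-\alpha N/2 + m)$, which the assumed lower bound on $\alpha$ pushes below $\beta$.

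I expect the main obstacle to be the support-constraint lemma itself: its proof is a non-trivial Helly-type argument that leans crucially on convexity of $\bm g$ in $\bm z$ and on Assumption \ref{as:solution} to guarantee a unique minimizer and a well-defined notion of support. Once that structural bound is in hand, the combinatorial union bound and the closing algebraic manipulation are comparatively routine.
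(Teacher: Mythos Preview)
The paper does not prove this proposition; it is quoted verbatim from the cited reference and used as a black box in the later theorems, so there is no ``paper's proof'' to compare against.

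Your sketch follows the original Calafiore--Campi (2006) union-bound route, and the structural part (at most $m$ support constraints, then a union over candidate support sets) is correct. The gap is in the final algebraic step. The bound you write, $\sum_{k=0}^{m}\binom{N}{k}(1-\alpha)^{N-k}$, cannot be dominated by anything of the shape $\exp(-\alpha N/2+m)$: after replacing $\binom{N}{k}\le N^{k}/k!$ and pulling out $(1-\alpha)^{N-m}$, what remains is of order $(N/m)^{m}$, contributing a term $m\ln(N/m)$ rather than just $m$. Consequently the union-bound route only yields a sample complexity of order $\tfrac{1}{\alpha}\bigl(m\ln\tfrac{1}{\alpha}+\ln\tfrac{1}{\beta}\bigr)$, strictly weaker than the stated $N\ge\tfrac{2}{\alpha}\bigl(m+\ln\tfrac{1}{\beta}\bigr)$.

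The explicit bound in the proposition comes instead from the tighter Campi--Garatti inequality
\[
\textnormal{P}_{\Delta^{N}}\!\bigl[V(\bm z_{N}^{*})>\alpha\bigr]\;\le\;\sum_{k=0}^{m-1}\binom{N}{k}\alpha^{k}(1-\alpha)^{N-k},
\]
which is a binomial lower tail and therefore admits a clean Chernoff estimate with no spurious $\ln N$ factor. That inequality is \emph{not} obtained by the union bound you describe; its proof exploits exchangeability of the samples and a more refined accounting of the event that a given $m$-subset is exactly the support set, rather than merely containing it. If you want to reproduce the proposition as stated, you need this sharper intermediate bound before the algebra.
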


\section{PAC Safety Certification with One-to-One Samples}
\label{sec:verification}
In this section, we introduce our data-driven method for solving Problem~\ref{prob:1}.
Subsections~\ref{subsec:PACRC} and~\ref{subsec:PACSA} address it using two approaches: one combines robust barrier certificates (RBC) with VC-dimension theory, and the other uses the scenario approach. The VC-based method accommodates more general classes of RBCs but typically requires a larger sample size, whereas the scenario approach achieves comparable guarantees with fewer samples but assumes convexity of the barrier certificate in its unknown parameters.

A RBC $h(\bm{x}):\mathbb{R}^n \rightarrow \mathbb{R}$ whose existence ensures safety of system~\eqref{eq:system}: if the system starts in $\mathbb{X}$, it will remain within $\mathbb{X}$ for all future time steps.
\begin{definition}
\label{def:rbc}
    Given $\gamma \in (0,1)$ and the safe set $\mathbb{X}$, a function $h(\cdot):\mathbb{R}^n\rightarrow\mathbb{R}$ is a \textnormal{RBC} for the system 
    \eqref{eq:system} if the following inequalities hold:
    \begin{subequations}
    \label{eq:rbf}
    \begin{empheq}[left=\empheqlbrace]{align}
        &h(\bm{x}) < 0, & \forall \bm{x} \in \mathbb{R}^n\setminus\mathbb{X}, \label{eq:rbf1} \\
        &h(\bm{x}) \geq 0, &\forall \bm{x}\in \mathbb{X}, \label{eq:rbf2} \\
        &h(\bm{f}(\bm{x},\bm{d})) \geq \gamma h(\bm{x}), & \forall \bm{x} \in \mathbb{X}, ~\forall \bm{d} \in \mathbb{D}. \label{eq:rbf3}
    \end{empheq}
    \end{subequations}
\end{definition}

\begin{proposition}[\cite{prajna2004safety}]
If there exists a \textnormal{RBC} $h(\cdot): \mathbb{R}^n\rightarrow\mathbb{R}$, then for any disturbance $\bm{d} \in \mathbb{D}$ and initial state $\bm{x} \in \mathbb{X}$, the system \eqref{eq:system} will remain within the safe set $\mathbb{X}$ at the next time step, i.e., $\forall \bm{d} \in \mathbb{D}. \forall \bm{x} \in \mathbb{X}. \bm{f}(\bm{x},\bm{d}) \in \mathbb{X}$. 
\end{proposition}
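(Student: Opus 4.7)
The plan is to chain the three defining conditions of the RBC in a single pass. I would fix arbitrary $\bm{x} \in \mathbb{X}$ and $\bm{d} \in \mathbb{D}$, set $\bm{x}^+ := \bm{f}(\bm{x},\bm{d})$, and aim to show $h(\bm{x}^+) \ge 0$, from which membership $\bm{x}^+ \in \mathbb{X}$ follows by the contrapositive of \eqref{eq:rbf1}.

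Concretely, I would carry out three short steps. First, since $\bm{x} \in \mathbb{X}$, condition \eqref{eq:rbf2} gives $h(\bm{x}) \ge 0$; combined with $\gamma \in (0,1)$, which is in particular nonnegative, this yields $\gamma h(\bm{x}) \ge 0$. Second, applying \eqref{eq:rbf3} at the pair $(\bm{x},\bm{d})$ gives $h(\bm{x}^+) \ge \gamma h(\bm{x}) \ge 0$. Third, I would argue by contradiction: if $\bm{x}^+ \notin \mathbb{X}$, then $\bm{x}^+ \in \mathbb{R}^n \setminus \mathbb{X}$, and \eqref{eq:rbf1} would force $h(\bm{x}^+) < 0$, contradicting the inequality just derived. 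Hence $\bm{x}^+ \in \mathbb{X}$, and since $\bm{x}$ and $\bm{d}$ were arbitrary, the universally quantified conclusion follows.

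There is essentially no obstacle here; the statement is a direct logical consequence of Definition~\ref{def:rbc} and is included mainly to motivate the certification problems studied later. The only mild subtlety worth flagging is the use of $\gamma \ge 0$ to preserve the inequality $h(\bm{x}) \ge 0 \Rightarrow \gamma h(\bm{x}) \ge 0$; this is guaranteed by $\gamma \in (0,1)$, and strict positivity of $\gamma$ plays no role in this particular one-step argument (its role surfaces only when one iterates toward longer-horizon guarantees).
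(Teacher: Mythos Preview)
Your argument is correct and is exactly the standard one-line derivation from Definition~\ref{def:rbc}. The paper does not supply its own proof of this proposition; it simply attributes the result to~\cite{prajna2004safety}, so there is nothing further to compare.
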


In the following, we use the sample set $\mathbb{S} = \{(\bm{x}^{(i)}, \bm{d}^{(i)})\}_{i=1}^N$ to solve \eqref{eq:rbf1}--\eqref{eq:rbf3} and thereby address Problem~\ref{prob:1}. 
\subsection{PAC Safety Certification Based on VC Dimension}
\label{subsec:PACRC}
In this section, we present our first method for addressing Problem~\ref{prob:1}. In this approach, the sample complexity required to achieve the desired safety guarantees is derived using the VC dimension. {Intuitively, the VC dimension measures the `expressiveness' or complexity of the chosen class of barrier functions. A class with a finite VC dimension prevents the algorithm from `overfitting' to the sampled data, ensuring that a barrier certificate validated on finite samples will generalize to unseen states with high probability. }

We consider a parameterized function $h(\bm{a},\cdot):\mathbb{R}^n \to \mathbb{R}$ of the form
\begin{equation}
\label{barrier_form}
h(\bm{a},\bm{x})
= 1_{\mathbb X}(\bm{x}) \cdot h_1(\bm{a},\bm{x})
+ C \cdot 1_{\mathbb{X}^c}(\bm{x}),
\end{equation}
where $C<0$ is a fixed scalar.  
For $\bm{a}\in [0,U_a]^m$ and $\xi\in[0,\overline{\xi}]$ with $U_a>0$ and $\overline{\xi}>-C$ being user-specified bounds for each component of $\bm{a}$ and for $\xi$, respectively, we define the indicator function $\phi_{\bm{a},\xi}(\bm{x},\bm{d})
=\mathbf{1}_{\left\{(\bm{x},\bm{d}):
h(\bm{a}, \bm{f}(\bm{x},\bm{d}))
<\gamma\, h(\bm{a},\bm{x}) - \xi
\right\}}(\bm{x},\bm{d})$, collect these classifiers in the family
$\mathbb{H}
=\left\{
\phi_{\bm{a},\xi} :
\bm{a}\in[0,U_a]^m,\;
\xi\in[0,\overline{\xi}]
\right\}$, and impose the following assumptions: 1). The parametric real-valued function class
     $\{h(\bm{a},\cdot) : \bm{a}\in[0,U_a]^m\}$
   has finite pseudo-dimension (equivalently, finite VC--subgraph dimension); 2). $h(\bm{a},\bm{x})\ge 0$ for all $(\bm{x},\bm{a})\in \mathbb{X}\times[0,U_a]^m$, and 
   $h_1(\bm{0},\bm{x})=0$ for all $\bm{x}\in\mathbb{X}$. Under these assumptions, the induced classifier family $\mathbb{H}$ has finite VC dimension; that is,
$\operatorname{vc}(\mathbb H) < \infty$.

\begin{remark}
\label{remark:barrier}
Using the barrier function in \eqref{barrier_form} has two main advantages. 
First, it simplifies computation by avoiding the need to enforce 
$h(\bm{a},\bm{x}) \ge 0$ for all $\bm{x}\in\mathbb{X}$ and 
$h(\bm{a},\bm{x}) < 0$ for all $\bm{x}\notin\mathbb{X}$, conditions that cannot be reliably checked by sampling.  Second, the structure of \eqref{barrier_form} prevents the safety test from requiring the system to enter the unsafe set. 
If a prediction shows $\bm{f}(\bm{x},\bm{d}) \in \mathbb{X}^c$, we can simply set 
$h(\bm{a},\bm{f}(\bm{x},\bm{d})) = C$ without needing to observe the true next state, which could put the system at risk. This choice provides a practical balance between computational tractability, conservativeness, and safety. More general barrier forms will be explored in future work. \qed
\end{remark}

Then, we search for $\bm{a}$ by solving the following optimization \eqref{eq:rbc_rc}, which is constructed with the sample set $\mathbb{S}=\{(\bm{x}^{(i)},\bm{d}^{(i)})\}_{i=1}^N$ 
and the constraint \eqref{eq:rbf3}:
\begin{equation}
    \label{eq:rbc_rc}
  \begin{split}
    &\textstyle \min_{\xi\in\mathbb{R},\bm{a}\in \mathbb{R}^m } \xi \\
    \text{s.t.}&
    \begin{cases}
        h(\bm{a}, \bm{f}(\bm{x}^{(i)},\bm{d}^{(i)})) \geq \gamma h(\bm{a}, \bm{x}^{(i)}) - \xi,  \\
        \xi \in [0,\overline{\xi}]; ~ \bm{a}[k]  \in [0, U_a], \\
        i=1,\ldots,N;~ k=1,\ldots,m. \\
    \end{cases}
    \end{split}
\end{equation}
Here, $\gamma \in (0,1)$ is a user-specified scalar. It is noted that \eqref{eq:rbc_rc} is always feasible, since  $(\bm{a}=\bm{0},\xi=-C)$ constitutes a feasible solution.



%

Let $\widehat{\textnormal{E}}_{(\bm{x},\bm{d})}[\phi_{\bm{a},\xi}]$ denote the empirical mean of $\phi_{\bm{a},\xi}(\bm{x},\bm{d})$ over the sample set 
$\mathbb{S}=\{(\bm{x}^{(i)},\bm{d}^{(i)})\}_{i=1}^N$, i.e.,
$\widehat{\textnormal{E}}_{(\bm{x},\bm{d})}[\phi_{\bm{a},\xi}] := \frac{1}{N} \sum_{i=1}^N \phi_{\bm{a},\xi}(\bm{x}^{(i)},\bm{d}^{(i)})$.
Since \eqref{eq:rbc_rc} always has solutions, there exists $(\bm{a}(\mathbb{S}), \xi(\mathbb{S}))$ with  $\widehat{\textnormal{E}}_{(\bm{x},\bm{d})}[\phi_{\bm{a}(\mathbb{S}),\xi(\mathbb{S})}] = 0$. By Corollary 4 in~\cite{alamo2009randomized}, the value $N$ can be explicitly bounded to ensure that the empirical probability/estimate $\widehat{\textnormal{E}}_{(\bm{x},\bm{d})}[\phi_{\bm{a}(\mathbb{S}),\xi(\mathbb{S})}]$ differs from the true probability $
\textnormal{E}_{(\bm{x},\bm{d})}[\phi_{\bm{a}(\mathbb{S}),\xi(\mathbb{S})}]$
by at most $\alpha$ with confidence at least $1-\delta$.
Specifically,
\begin{equation}
\label{NN1}
\textstyle N \ge \frac{5}{\alpha} \left( \ln{\frac{4}{\delta}} + \operatorname{vc}(\mathbb{H}) \ln{\frac{40}{\alpha}} \right)
\end{equation}
can guarantee 
$\textnormal{P}_{\mathbb{S}}\!\left[\textnormal{E}_{(\bm{x},\bm{d})}[\phi_{\bm{a}(\mathbb{S}),\xi(\mathbb{S})}]\le \alpha \right] \ge 1-\delta$,
where $\textnormal{P}_{(\bm{x},\bm{d})}:=\textnormal{P}_{\bm{x}} \times \textnormal{P}_{\bm{d}}$.
\begin{lemma}[Corollary 4, \cite{alamo2009randomized}]
\label{coro4}
Suppose $\alpha_1 \in (0,1)$ and $\delta\in (0,1)$ are given. Then, the probability of one-sided constrained failure $p(N,\alpha_1,0)$ is smaller than $\delta$, where $p(N,\alpha_1,0)$ is defined as
\[p(N,\alpha_1,0):=\textnormal{P}_{\mathbb{S}}\left[
\begin{split}
&\exists (\bm{a},\xi)\in [-U_a,U_a]^m \times [0,\overline{\xi}], \\&\Big(
\widehat{\textnormal{E}}_{(\bm{x},\bm{d})}[\phi_{\bm{a},\xi}]\leq 0 \wedge \textnormal{E}_{(\bm{x},\bm{d})}[\phi_{\bm{a},\xi}]>\alpha\Big)
\end{split}
\right], \]  
if the sample size $N$ satisfies \eqref{NN1}.
\end{lemma}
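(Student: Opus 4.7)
The plan is to recognize the stated inequality as a one-sided uniform convergence bound for the classifier family $\mathbb{H}$ and then invoke a VC-dimension-based concentration result for indicator function classes, of exactly the form proved in \cite{alamo2009randomized}. The conclusion follows by matching the sample size \eqref{NN1} with the explicit constants that come out of that bound.

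First I would reformulate the event inside $p(N,\alpha_1,0)$. On the event that witnesses the failure, there exist parameters $(\bm{a},\xi)$ such that $\widehat{\textnormal{E}}_{(\bm{x},\bm{d})}[\phi_{\bm{a},\xi}]\le 0$ while $\textnormal{E}_{(\bm{x},\bm{d})}[\phi_{\bm{a},\xi}]>\alpha_1$. Since $\phi_{\bm{a},\xi}$ is an indicator, $\widehat{\textnormal{E}}\le 0$ forces $\widehat{\textnormal{E}}=0$, so the failure event is contained in the one-sided uniform deviation event
\[
\mathcal{E}:=\Bigl\{\,\exists\,\phi\in\mathbb{H}:\;\textnormal{E}_{(\bm{x},\bm{d})}[\phi]-\widehat{\textnormal{E}}_{(\bm{x},\bm{d})}[\phi]>\alpha_1\Bigr\}.
\]
Thus it suffices to bound $\textnormal{P}_{\mathbb{S}}[\mathcal{E}]\le\delta$ under \eqref{NN1}.

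Next I would apply the VC-dimension-based one-sided uniform convergence inequality for Boolean classifier classes. Because $\mathbb{H}$ has finite VC dimension $\operatorname{vc}(\mathbb{H})$, a standard argument combining symmetrization, permutation, and a growth-function bound (Sauer–Shelah with $\Pi_{\mathbb{H}}(2N)\le\left(\tfrac{2eN}{\operatorname{vc}(\mathbb{H})}\right)^{\operatorname{vc}(\mathbb{H})}$) yields an exponential tail of the form $\textnormal{P}_{\mathbb{S}}[\mathcal{E}]\le c_1\Pi_{\mathbb{H}}(2N)\,e^{-c_2\alpha_1 N}$. The sharpened constants of \cite{alamo2009randomized} give precisely the statement that, whenever
\[
N\ge \tfrac{5}{\alpha_1}\!\left(\ln\tfrac{4}{\delta}+\operatorname{vc}(\mathbb{H})\ln\tfrac{40}{\alpha_1}\right),
\]
one has $\textnormal{P}_{\mathbb{S}}[\mathcal{E}]\le\delta$. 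Inverting this inequality in $\alpha_1$ produces exactly the sample-size requirement \eqref{NN1}, so substituting \eqref{NN1} directly gives $p(N,\alpha_1,0)\le\textnormal{P}_{\mathbb{S}}[\mathcal{E}]\le\delta$.

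The main obstacle — which I would treat as black-boxed via \cite{alamo2009randomized} — is tracking the numerical constants $(5,4,40)$ through the symmetrization and growth-function arguments; getting them in closed form, rather than just an $O(\cdot)$ bound, is the delicate part. Once one accepts the explicit bound from \cite{alamo2009randomized}, the lemma is an immediate corollary via the containment $\{\text{failure}\}\subseteq\mathcal{E}$ established above, and the finiteness of $\operatorname{vc}(\mathbb{H})$ granted by the two structural assumptions on the parametric class $\{h(\bm{a},\cdot)\}$ (finite pseudo-dimension together with the indicator-thresholding construction in \eqref{barrier_form}).
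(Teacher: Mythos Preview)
Your proposal is correct and matches the paper's treatment: the paper does not prove this lemma at all but simply imports it verbatim as Corollary~4 of \cite{alamo2009randomized}, and you likewise black-box the explicit constants to that reference after observing the trivial containment $\{\widehat{\textnormal{E}}\le 0,\ \textnormal{E}>\alpha\}\subseteq\{\textnormal{E}-\widehat{\textnormal{E}}>\alpha\}$. Your additional sketch of the symmetrization/Sauer--Shelah machinery behind the cited bound is accurate but goes beyond anything the paper provides.
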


Based on Lemma \ref{coro4} and Markov’s inequality, we conclude that, with confidence at least $1-\delta$ over the sample set $\mathbb{S}$, if the sampled program \eqref{eq:rbc_rc} returns $(\xi^*(\mathbb{S})=0, \bm{a}^*(\mathbb{S}))$, then at least a fraction $1-\alpha_1$ of the states $\bm{x} \in \mathbb{X}$ satisfy ``\text{the probability that } $h(\bm{a}^*(\mathbb{S}),\bm{f}(\bm{x},\bm{d})) \ge \gamma h(\bm{a}^*(\mathbb{S}),\bm{x})$ \text{ is at least} $1-\alpha_2$'', 
\[
\begin{split}
\text{i.e.,~}\textnormal{P}_{\mathbb{S}}\left[\xi^*(\mathbb{S})=0 \Rightarrow \textnormal{P}_{\bm{x}} \left[
\begin{split}
&\textnormal{P}_{\bm{d}}\left[
\begin{split}
   &h(\bm{a}^*(\mathbb{S}),\bm{f}(\bm{x},\bm{d}))\\
   &\ge\gamma\,h(\bm{a}^*(\mathbb{S}),\bm{x}) 
\end{split}
\right]\geq 1-\alpha_2
\end{split}
\right] \ge 1-\alpha_1\right]\geq 1-\delta,
\end{split}
\]
further implying $\mathbb{P}_{\mathbb{S}}\left[\xi^*(\mathbb{S})=0 \Rightarrow \textnormal{P}_{\bm{x}} \left[
\textnormal{P}_{\bm{d}}[\bm{f}(\bm{x},\bm{d}) \in \mathbb{X}] \geq 1-\alpha_2
\right] \ge 1-\alpha_1\right]\geq 1-\delta$.
 The above statements hold provided that $N$ satisfies \eqref{NN1} with $\alpha=\alpha_1\alpha_2$. The proof is shown in Appendix \ref{appendix:a1}.
\begin{theorem}[PAC Safety Certification I Based on VC Dimension]
\label{thm:main-fixedpoint}
Fix probability threshold $\alpha_1, \alpha_2 \in (0,1)$  and cofidence level $\delta \in (0,1)$.  Let $(\bm{a}^*(\mathbb{S}), \xi^*(\mathbb{S}))$ denote the solution obtained by solving~\eqref{eq:rbc_rc}, 
Define the computational procedure:
\begin{equation}
\label{cp}
\mathcal{A}(\mathbb{S}) = 1_{\{\xi^*(\mathbb{S}) = 0\}}(\mathbb{S}).
\end{equation}
If $N$ satisfies \eqref{NN1} with $\alpha=\alpha_1 \alpha_2$, then, with confidence at least $1 - \delta$ over the sample set $\mathbb{S}$, if $\xi^*(\mathbb{S})=0$, $
\textnormal{P}_{\bm{x}}\!\Big[
\textnormal{P}_{\bm{d}}\big[h(\bm{a}^*(\mathbb{S}),\bm{f}(\bm{x},\bm{d}))\geq \gamma\,h(\bm{a}^*(\mathbb{S}),\bm{x})] \geq 1-\alpha_2
\Big]
\ge 1-\alpha_1$.
This implies $\textnormal{P}_{\mathbb{S}}\Big[\mathcal{A}(\mathbb{S})\Rightarrow \textnormal{P}_{\bm{x}}\!\Big[
\textnormal{P}_{\bm{d}}\big[
\bm{f}(\bm{x},\bm{d})\in \mathbb{X}
\big] \geq 1-\alpha_2
\Big]
\ge 1-\alpha_1\Big]\geq 1-\delta$.
\end{theorem}
\begin{remark}
    \label{rm:prob}
    As shown in the proof of Theorem \ref{thm:main-fixedpoint}, one can also derive $\textnormal{P}_{\mathbb{S}}\Big[\mathcal{A}(\mathbb{S})\Rightarrow \textnormal{P}_{(\bm{x},\bm{d})}\big[ \bm{f}(\bm{x},\bm{d})\in \mathbb{X} \big] \geq 1-\alpha \Big]\geq 1-\delta$, where the inner probability is taken with respect to the joint distribution $\textnormal{P}_{\bm{x}}\times\textnormal{P}_{\bm{d}}$. In contrast, the guarantee provided in Theorem \ref{thm:main-fixedpoint}, i.e., $\textnormal{P}_{\mathbb{S}}\Big[\mathcal{A}(\mathbb{S})\Rightarrow \textnormal{P}_{\bm{x}}\!\Big[ \textnormal{P}_{\bm{d}}\big[ \bm{f}(\bm{x},\bm{d})\in \mathbb{X} \big] \geq 1-\alpha_2 \Big] \ge 1-\alpha_1\Big]\geq 1-\delta$,   explicitly decouples the uncertainty arising from the state and disturbance distributions. This separation yields a more refined probabilistic characterization over the state and disturbance spaces, resulting in a clearer and more interpretable one-step safety guarantee for the system. \qed
\end{remark}

The solution to Problem~\ref{prob:1} is relatively general, since the function $h(\bm{a},\cdot): \mathbb{R}^n \to \mathbb{R}$ can be nonlinear in both $\bm{a}$ and $\bm{x}$. This allows the use of expressive models, such as neural networks (e.g., ReLU networks with nonnegative weights for $h_1$ in \eqref{barrier_form}) with finite pseudo-dimension. However, this generality requires many samples, which can be computationally expensive when solving the program \eqref{eq:rbc_rc}. To reduce this burden, we next focus on functions that are linear in the parameters $\bm{a}$, which lowers the sample complexity.
\subsection{PAC Safety Certification Based on Scenario Approaches}
\label{subsec:PACSA}
In this subsection, we present an approach that integrates scenario optimization with RBCs to address Problem \ref{prob:1}.


We consider a class of functions $h(\bm{a},\bm{x})$ of the form \eqref{barrier_form}, where
\begin{equation}
\label{eq:h_r}
h_1(\bm{a},\bm{x})=\sum_{i_{1}=0}^{\kappa}\ldots\sum_{i_{n}=0}^{\kappa} a_{i_{1},\ldots,i_{n}} \prod_{j=1}^n \big(\bm{x}[j] - \underline{\mathbb{X}}[j]\big)^{i_{j}}\big(\overline{\mathbb{X}}[j] - \bm{x}[j]\big)^{\kappa-i_{j}}.
\end{equation}
{This representation, known as the Handelman Representation, offers key computational advantages. By restricting the coefficients $a_{i_1,\ldots,i_n}$ to be non-negative, the polynomial $h_1(\bm{a}, \bm{x})$ is guaranteed to be non-negative on the interval $[\underline{\mathbb{X}}, \overline{\mathbb{X}}]$, which simplifies the enforcement of the positivity constraint. Furthermore, the representation remains linear in the decision variables $\bm{a}$, preserving the convexity of the optimization problem while maintaining the capability to approximate positive polynomials.}
With $a_{i_1,\ldots,i_n} \in [0, U_a]$ and $\underline{\mathbb{X}}, \overline{\mathbb{X}} \subseteq \mathbb{R}^n$ such that $[\underline{\mathbb{X}}, \overline{\mathbb{X}}]$ is an interval containing the safe set $\mathbb{X}$, we have that $h(\bm{a},\bm{x}) \equiv C < 0, \forall (\bm{x},\bm{a}) \in \mathbb{R}^n \setminus \mathbb{X} \times [0,U_a]^m$ and $h(\bm{a},\bm{x}) \geq 0, \forall (\bm{x},\bm{a}) \in \mathbb{X} \times [0,U_a]^m$. Thus, for any $\bm{a}\in [0,U_a]^m$, $h(\bm{a},\bm{x})$ satisfies constraints \eqref{eq:rbf1} and \eqref{eq:rbf2}. 


With $\mathbb{S}=\{(\bm{x}^{(i)},\bm{d}^{(i)})\}_{i=1}^N$ and the constraint \eqref{eq:rbf3}, 
we solve the optimization \eqref{eq:rbc_rc}, which is a linear optimization over $(\bm{a},\xi)$.
In this context, the optimization \eqref{eq:rbc_rc} is a scenario optimization to the following robust convex optimization:
\begin{equation}
    \label{eq:rbc_linear1}
  \begin{split}
    &\textstyle \min_{\xi\in\mathbb{R},\bm{a}\in \mathbb{R}^m } \xi \\
    \text{s.t.}&
    \begin{cases}
        h(\bm{a}, \bm{f}(\bm{x},\bm{d})) \geq \gamma h(\bm{a}, \bm{x}) - \xi, \forall \bm{x}\in \mathbb{X},\forall \bm{d} \in \mathbb{D},  \\
        \xi \in [0,\overline{\xi}]; ~ \bm{a}[k]  \in [0, U_a], \\
         k=1,\ldots,m. \\
    \end{cases}
    \end{split}
\end{equation}
It is observed that optimization \eqref{eq:rbc_linear1} satisfies Assumption \ref{as:convex} and the optimization \eqref{eq:rbc_rc} satisfies Assumption \ref{as:solution}. Following Proposition \ref{prop:pac}, we have the conclusion below. The corresponding proof is shown in Appendix \ref{appendix:proof_theo2}.
\begin{theorem}[PAC Safety Certification I Based on Scenario Approaches]
\label{theo:robust}
Fix probability thresholds $\alpha_1, \alpha_2 \in (0,1)$ and confidence level $\delta \in (0,1)$, and let the sample size satisfy
$N \;\geq\; \frac{2}{\alpha_1 \alpha_2}\!\left(\ln\frac{1}{\delta} + m + 1\right)$.

Let $(\bm{a}^*(\mathbb{S}), \xi^*(\mathbb{S}))$ be the solution to~\eqref{eq:rbc_rc}, with the computational procedure in~\eqref{cp}. Then,  with confidence at least $1 - \delta$ over the random draw of $\mathbb{S}$, if $\xi^*(\mathbb{S}) = 0$, 
$\textnormal{P}_{\bm{x}}\!\Big[
\textnormal{P}_{\bm{d}}\!\big[
h(\bm{a}^*(\mathbb{S}), \bm{f}(\bm{x},\bm{d})) \geq \gamma h(\bm{a}^*(\mathbb{S}), \bm{x})
\big] \geq 1- \alpha_2
\Big]
\;\ge\; 1-\alpha_1$ holds, i.e., $\textnormal{P}_{\mathbb{S}}\Big[
\mathcal{A}(\mathbb{S}) \Rightarrow 
\textnormal{P}_{\bm{x}}\Big[\textnormal{P}_{\bm{d}}\!\big[
\bm{f}(\bm{x},\bm{d}) \in \mathbb{X} 
\big] \ge 1 - \alpha_2
\Big]
\ge 1 - \alpha_1
\Big]
\ge 1 - \delta$.
\end{theorem}
When $h(\bm{a},\bm{x})$ takes the form \eqref{barrier_form} with $h_1(\bm{a},\bm{x})$ as in \eqref{eq:h_r}, Theorem \ref{thm:main-fixedpoint} requires $N \ge \frac{5}{\alpha_1 \alpha_2} \left( \ln{\frac{4}{\delta}} + (m+1) \ln{\frac{40}{\alpha_1 \alpha_2}} \right)$, whereas Theorem \ref{theo:robust} only requires $N \ge \frac{2}{\alpha_1 \alpha_2}\left(\ln\frac{1}{\delta} + m + 1\right)$.
Clearly, Theorem \ref{theo:robust} achieves the same probabilistic guarantee with significantly fewer samples.
\section{PAC Safety Certification with One-to-Many Samples}
\label{sec:nested}
In this section, we present our methods for solving Problem~\ref{prob:2} and Problem~\ref{prob:3}. Subsection~\ref{sec:probabilistic} addresses Problem~\ref{prob:2} using RBCs with scenario approaches, Markov’s inequality, and Hoeffding’s inequality, while Subsection~\ref{sub:pacnpgsb} tackles Problem~\ref{prob:3} using stochastic barrier certificates (SBCs) with the same tools. 
\subsection{PAC Safety Certification for Problem \ref{prob:2}}
\label{sec:probabilistic}
In this section, we present our data-driven approach to address Problem~\ref{prob:2}, which primarily involves solving a linear optimization problem. 

 We work with the class of functions $h(\bm{a},\bm{x})$ defined by \eqref{barrier_form}, with $h_1$ given by \eqref{eq:h_r}. Then, we search for $\bm{a}$ by solving the following optimization \eqref{eq:rbc_linear3}, which is constructed based on the condition \eqref{eq:rbf3} and sample set $\mathbb{S}=\{(\bm{x}^{(i)},\{\bm{d}^{(i,j)}\}_{j=1}^M)\}_{i=1}^N$, 
\begin{equation}
    \label{eq:rbc_linear3}
  \begin{split}
    &\textstyle \min_{\xi\in\mathbb{R},\bm{a}\in \mathbb{R}^m } \xi \\
    \text{s.t.}&
    \begin{cases}
        h(\bm{a}, \bm{f}(\bm{x}^{(i)},\bm{d}^{(i,j)})) \geq \gamma h(\bm{a}, \bm{x}^{(i)}) - \xi,  \\
        \xi \in [0,\overline{\xi}]; ~ \bm{a}[k]  \in [0, U_a], \\
        i=1,\ldots,N;~ j=1,\ldots,M;~ k=1,\ldots,m, \\
    \end{cases}
    \end{split}
\end{equation}
where $\gamma \in (0,1)$, and $U_a$ and $\overline{\xi}$ are the same with the ones in \eqref{eq:rbc_rc}. It is a scenario optimization to the following robust convex optimization:
\begin{equation}
\label{eq:rbc_linear1_M}
  \begin{split}
    &\textstyle \min_{\xi\in\mathbb{R},\bm{a}\in \mathbb{R}^m } \xi \\
    \text{s.t.}&
    \begin{cases}
        \wedge_{j=1}^M h(\bm{a}, \bm{f}(\bm{x},\bm{d}_j)) \geq \gamma h(\bm{a}, \bm{x}) - \xi, \forall (\bm{x},\{\bm{d}_j\}_{j=1}^M) \in \mathbb{X}\times \mathbb{D}^M,  \\
        \xi \in [0,\overline{\xi}]; ~ \bm{a}[k]  \in [0, U_a];~ k=1,\ldots,m. \\
    \end{cases}
    \end{split}
\end{equation}

It is observed that the optimization \eqref{eq:rbc_linear1_M} satisfies Assumption \ref{as:convex} and the optimization \eqref{eq:rbc_linear3} satisfies Assumption \ref{as:solution}. Assume a solution $(\xi^*(\mathbb{S}), \bm{a}^*(\mathbb{S}))$ is obtained by solving \eqref{eq:rbc_linear3}. Then, using scenario approach theory and Markov’s inequality, we conclude: with confidence at least $1-\delta_1$ over the sample set $\mathbb{S}$, if $\xi^*(\mathbb{S}) = 0$, then for at least a $1 - \frac{\alpha_1}{l\delta_2}$ fraction of states $\bm{x} \in \mathbb{X}$, the system \eqref{eq:system}, starting from $\bm{x}$, remains safe at the next step with probability at least $1-\alpha_2$, provided that $\alpha_1,\alpha_2,\delta_1,\delta_2, l \in (0,1)$, $\alpha_1 < l\delta_2$, $N \ge \frac{2}{\alpha_1}\Big(\ln\frac{1}{\delta_1} + m + 1\Big)$, and $M \ge \frac{1}{2\alpha_2^2}\ln\frac{1}{(1-l)\delta_2}$. The proof is shown in Appendix \ref{appendix:a2}.
\begin{theorem}[PAC Safety Certification II Based on RBC]
\label{thm:main-vc-final-corrected}
Fix $\delta_1 \in (0,1)$, $\delta_2 \in (0,1)$, $\alpha_1 \in (0,1)$, $\alpha_2 \in (0,1)$, and $l\in (0,1)$, where $\alpha_1<l\delta_2$.  
Let $
N \geq \frac{2}{\alpha_1}\left(\ln\frac{1}{\delta_1} + m+1\right)$ and $M \geq \frac{1}{2\alpha_2^2}\ln\frac{1}{(1-l)\delta_2}$.

Let $(\bm{a}^*(\mathbb{S}), \xi^*(\mathbb{S}))$ be the solution obtained by solving~\eqref{eq:rbc_linear3}.  Then, we have 
$\textnormal{P}_{\mathbb{S}}\left[\textnormal{P}_{\bm{x}}\left[
\textnormal{P}_{\bm{d}}\left[
h(\bm{a}^*(\mathbb{S}),\bm{f}(\bm{x},\bm{d})) \geq  \gamma h(\bm{a}^*(\mathbb{S}),\bm{x})-\xi^*(\mathbb{S})
\right] \geq 1-\alpha_2 
\right] \geq 1 - \frac{\alpha_1}{l\delta_2}\right]\geq 1-\delta_1$
and thus, $\textnormal{P}_{\mathbb{S}}\left[\xi^*(\mathbb{S})=0\Rightarrow \textnormal{P}_{\bm{x}}\left[
\textnormal{P}_{\bm{d}}\left[
\bm{f}(\bm{x},\bm{d}) \in \mathbb{X}
\right] \geq 1-\alpha_2
\right] \geq 1 - \frac{\alpha_1}{l\delta_2}\right]\geq 1-\delta_1$.
\end{theorem}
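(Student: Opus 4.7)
The plan is to combine three ingredients --- the scenario approach applied to the one-to-many robust program \eqref{eq:rbc_linear1_M}, Markov's inequality in the state direction, and a Hoeffding-style bound on Bernoulli extremes --- in three sequential stages, mirroring the structure of Theorem~\ref{theo:robust} but with an extra layer arising from the $M$-fold disturbance sampling per state.

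First, I would observe that \eqref{eq:rbc_linear3} is the scenario counterpart of \eqref{eq:rbc_linear1_M} with uncertainty $\bm{\delta} := (\bm{x}, \{\bm{d}_j\}_{j=1}^M) \in \mathbb{X} \times \mathbb{D}^M$ and $m+1$ decision variables $(\bm{a}, \xi)$, and that Assumptions~\ref{as:convex} and~\ref{as:solution} hold by the same argument used before. Proposition~\ref{prop:pac} with $\alpha = \alpha_1$, $\beta = \delta_1$, and the stated $N$ then gives, with probability at least $1-\delta_1$ over $\mathbb{S}$, the joint violation bound
\[\textnormal{P}_{(\bm{x}, \{\bm{d}_j\})}\bigl[\exists j:\, h(\bm{a}^*, \bm{f}(\bm{x}, \bm{d}_j)) < \gamma h(\bm{a}^*, \bm{x}) - \xi^*\bigr] \le \alpha_1,\]
where I abbreviate $(\bm{a}^*, \xi^*) := (\bm{a}^*(\mathbb{S}), \xi^*(\mathbb{S}))$. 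Writing $q(\bm{x}) := \textnormal{P}_{\bm{d}}[h(\bm{a}^*, \bm{f}(\bm{x}, \bm{d})) \ge \gamma h(\bm{a}^*, \bm{x}) - \xi^*]$ and using independence of the $M$ disturbance draws for each fixed $\bm{x}$, this is equivalent to $\textnormal{E}_{\bm{x}}[1 - q(\bm{x})^M] \le \alpha_1$. Markov's inequality applied to $1 - q(\bm{x})^M$ at threshold $l\delta_2$ then yields
\[\textnormal{P}_{\bm{x}}\!\left[q(\bm{x})^M > 1 - l\delta_2\right] \ge 1 - \alpha_1/(l\delta_2);\]
the free parameter $l \in (0,1)$ is inserted precisely to split the budget between this step and the Hoeffding step to come, and the hypothesis $\alpha_1 < l\delta_2$ keeps $\alpha_1/(l\delta_2) < 1$.

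The final, and most delicate, step converts $q(\bm{x})^M > 1 - l\delta_2$ into $q(\bm{x}) \ge 1-\alpha_2$. Interpreting $1 - q(\bm{x})^M$ as the probability that at least one of $M$ i.i.d.\ Bernoulli$(1-q(\bm{x}))$ indicators fires, Hoeffding's inequality gives $(1-\alpha_2)^M \le e^{-2M\alpha_2^2}$ whenever $1 - q(\bm{x}) \ge \alpha_2$. Combined with the sample-size choice $M \ge \tfrac{1}{2\alpha_2^2}\ln\tfrac{1}{(1-l)\delta_2}$ and the elementary inequality $(1-l)\delta_2 \le 1 - l\delta_2$ (which holds since $\delta_2 \le 1$), this yields $(1-\alpha_2)^M \le 1 - l\delta_2$. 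Hence $q(\bm{x}) < 1-\alpha_2$ would force $q(\bm{x})^M \le 1 - l\delta_2$, contradicting the Markov conclusion; therefore $q(\bm{x}) \ge 1-\alpha_2$. Chaining the three probability statements over $\mathbb{S}$, $\bm{x}$, and $\bm{d}$ produces the first display. The second display follows because, when $\xi^*(\mathbb{S}) = 0$, we have $h(\bm{a}^*, \bm{f}(\bm{x}, \bm{d})) \ge \gamma h(\bm{a}^*, \bm{x}) \ge 0$ (using $h \ge 0$ on $\mathbb{X}$), which by the construction of $h$ in \eqref{barrier_form} forces $\bm{f}(\bm{x}, \bm{d}) \in \mathbb{X}$.

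The main obstacle I anticipate is the Hoeffding bridge: one must intentionally use the looser Hoeffding bound $(1-\alpha_2)^M \le e^{-2M\alpha_2^2}$ (rather than the tighter $(1-\alpha_2)^M \le e^{-M\alpha_2}$) in order to match the stated $M \gtrsim 1/\alpha_2^2$ form, and the bookkeeping around $l$ is subtle because the Markov step consumes $l\delta_2$ while the Hoeffding step consumes $(1-l)\delta_2$, and both budgets must remain compatible throughout the chain --- precisely the sort of careful threshold accounting highlighted by the remark about correcting Theorem~4 of \cite{xue2020pac}.
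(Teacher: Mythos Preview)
Your proposal is correct and follows the same three-stage skeleton (scenario approach $\to$ Markov in $\bm{x}$ $\to$ Hoeffding in $\bm{d}$) as the paper, but your third stage is executed differently and more directly. The paper introduces a \emph{fresh} batch of $M$ disturbances, applies Hoeffding to compare the true and empirical violation probabilities, union-bounds the Hoeffding event with the event $\mathbb{E}_1$ that all $M$ fresh draws satisfy the constraint (budgets $(1-l)\delta_2$ and $l\delta_2$ respectively), and then removes the resulting $\textnormal{P}_{\bm{d}}^M$ layer by a $0$--$1$ determinism argument. You instead exploit the i.i.d.\ structure to compute $\mu_M(\mathbb{S},\bm{x}) = 1 - q(\bm{x})^M$ in closed form and reduce the bridge to the purely analytic inequality $(1-\alpha_2)^M \le e^{-2M\alpha_2^2} \le (1-l)\delta_2 \le 1-l\delta_2$, which is a direct consequence of Hoeffding for Bernoulli$(\alpha_2)$ variables; this sidesteps the fresh-sample device and the determinism trick entirely.

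What each approach buys: your route is shorter and conceptually cleaner here, and makes transparent that the ``fresh samples'' are only a proof artifact. The paper's route, though heavier, has the advantage of transferring verbatim to the SBC case (Theorem~\ref{thm:main-vc-final-sbc1}), where the per-state constraint is an empirical \emph{average} over $M$ disturbances rather than a conjunction, so no closed form $1-q^M$ is available and Hoeffding must be invoked in its genuine concentration-of-measure form on the empirical mean. One small presentational point: the clause ``whenever $1-q(\bm{x})\ge\alpha_2$'' attached to $(1-\alpha_2)^M \le e^{-2M\alpha_2^2}$ is superfluous, since that scalar inequality does not depend on $q(\bm{x})$; it suffices to state it once and then run the contrapositive.
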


\begin{remark}
\label{remark_com_single_and_nested}
For $M>1$, each sampled state is paired with multiple disturbance samples. If $\xi^*(\mathbb{S}) = 0$, none of these disturbances lead to a violation, so the probability $\textnormal{P}_{\bm{d}}[\bm{f}(\bm{x},\bm{d}) \in \mathbb{S}]$ is generally higher than in the $M=1$ case. This one-to-many state–disturbance sampling is particularly advantageous when only a limited number of states can be sampled. {Intuitively, this improvement stems from decoupling the verification of state and disturbance uncertainties. By evaluating each state against multiple disturbances, the algorithm effectively `averages out' the disturbance noise locally, obtaining a cleaner estimate of each state's safety level. This reduces the variance seen by the outer optimization, allowing it to potentially establish stronger guarantees (lower $\alpha_1$) with significantly fewer state samples ($N$) compared to the single-sample approach.} On the other hand, when $M=1$, Theorem~\ref{thm:main-vc-final-corrected} can theoretically be reduced to Theorem~\ref{theo:robust}, since the portion of the proof relying on Hoeffding’s inequality is no longer needed (see the proof). However, the final guarantee in Theorem~\ref{thm:main-vc-final-corrected} for $M=1$ does not exactly recover Theorem~\ref{theo:robust} and may even be looser in practice, which can be verified via a simple calculation. This looseness stems from our use of Hoeffding’s inequality to handle the case 
$M>1$ (see \textbf{Step 3. Application of Hoeffding’s Inequality} in the proof), which is inherently conservative and leads to a looser probability bound. In future work, we plan to reduce this gap by exploring alternative concentration inequalities. \qed
\end{remark}

\begin{remark}
\label{correction}
When proving Theorem \ref{thm:main-vc-final-corrected}, we obtain the nested PAC certification:
\begin{equation*}
\label{eq:nested_guarantee_final01}
\textnormal{P}_{\mathbb{S}}\left[\textnormal{P}_{\bm{x}}\left[
\textnormal{P}_{\bm{d}}^M\left[
\mu(\bm{a}^*(\mathbb{S}),\bm{x},\xi^*(\mathbb{S}))
\geq 1-\alpha_2 
\right]\geq 1-\delta_2
\right] \geq 1 - \frac{\alpha_1}{l\delta_2}\right]\geq 1-\delta_1.
\end{equation*}
where $\mu(\bm{a}^*(\mathbb{S}),\bm{x},\xi^*(\mathbb{S})):=\textnormal{P}_{\bm{d}}\left[ h(\bm{a}^*(\mathbb{S}),\bm{f}(\bm{x},\bm{d})) \geq \gamma h(\bm{a}^*(\mathbb{S}),\bm{x})-\xi^*(\mathbb{S})\right]$. Further, since $\mu(\bm{a}^*(\mathbb{S}),\bm{x},\xi^*(\mathbb{S})) \geq 1-\alpha_2$ is deterministic with respect to $\textnormal{P}_{\bm{d}}^M$, $\textnormal{P}_{\bm{d}}^M[\mu(\bm{a}^*(\mathbb{S}),\bm{x},\xi^*(\mathbb{S})) \geq 1-\alpha_2]\in \{0,1\}$. Also, since $\textnormal{P}_{\bm{d}}^M[\mu(\bm{a}^*(\mathbb{S}),\bm{x},\xi^*(\mathbb{S})) \geq 1-\alpha_2] > 0$ (as $1-\delta_2 > 0$), $\mu(\bm{a}^*(\mathbb{S}),\bm{x},\xi^*(\mathbb{S}))\geq 1-\alpha_2$ holds deterministically with respect to $\textnormal{P}_{\bm{d}}^M$, which yields the conclusion of Theorem~\ref{thm:main-vc-final-corrected}. \textit{For more details, please refer to the proof.}

In Theorem 4 of \cite{xue2020pac}, a nested PAC characterization similar to the above is established. The analysis is over $(\bm{x},t)$ rather than $(\bm{x},\bm{d})$, with $t$ representing time. The result is obtained by applying the scenario approach twice: once for the outer layer $\bm{x}$ and once for the inner layer $t$. However, this method reuses the same samples for $t$ across both layers, creating a data-dependence issue that invalidates the PAC guarantee. Our approach resolves this by using Markov's inequality and fresh disturbance samples $\{\bm{d}_j\}_{j=1}^M$ in the inner confidence characterization $\textnormal{P}_{\bm{d}}^M$, which are independent of the samples in $\mathbb{S}$.  \qed
\end{remark}

If a more general parametric function $h_1(\bm{a},\bm{x})$ is used, e.g., one that is nonlinear in $\bm{a}$ rather than linear, then the VC-dimension method in Theorem \ref{thm:main-fixedpoint} can replace the scenario approach. We leave this for future work.
\subsection{PAC Safety Certification for Problem \ref{prob:3}}
\label{sub:pacnpgsb}
In this section, we address Problem \ref{prob:3} using SBCs. The concept of SBCs adopted here is based on the formulation proposed in \cite{prajna2007convex}.

\begin{definition} [\cite{prajna2007convex}]
    \label{def:sbf}
    Given the safe set $\mathbb{X}$, a function $h(\cdot):\mathbb{R}^n\rightarrow\mathbb{R}$ is a \textnormal{SBC} for the system 
    \eqref{eq:system} if there exists a scalar $\lambda \in [0,1]$ satisfying the condition:
    \begin{subequations}
    \label{eq:sbf}
    \begin{empheq}[left=\empheqlbrace]{align}
        &h(\bm{x}) \geq 0, &\forall \bm{x}\in \mathbb{X}, \label{eq:sbf1} \\
        &h(\bm{x}) \geq 1, & \forall \bm{x} \in \mathbb{R}^n\setminus\mathbb{X}, \label{eq:sbf2} \\
        &\textnormal{E}_{\bm{d}}[h(\bm{f}(\bm{x},\bm{d}))] - h(\bm{x}) \leq \lambda, &\forall \bm{x} \in \mathbb{X}. \label{eq:sbf3}
    \end{empheq}
    \end{subequations}
\end{definition}

\begin{proposition}[\cite{prajna2007convex}]
\label{pro_one}
    If there exist a \textnormal{SBC} $h(\cdot): \mathbb{R}^n\rightarrow\mathbb{R}$ and a scalar $\lambda\in[0,1]$, then $\textnormal{P}_{\bm{d}}[\bm{f}(\bm{x}, \bm{d}) \in \mathbb{X}] \geq 1 - \lambda - h(\bm{x}), \forall  \bm{x}\in\mathbb{X}$.
\end{proposition}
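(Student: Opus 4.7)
The plan is to obtain the stated lower bound on $\textnormal{P}_{\bm{d}}[\bm{f}(\bm{x},\bm{d})\in\mathbb{X}]$ by bounding the complementary probability $\textnormal{P}_{\bm{d}}[\bm{f}(\bm{x},\bm{d})\notin\mathbb{X}]$ from above via a Markov-style argument, using the two nonnegativity/threshold conditions \eqref{eq:sbf1}--\eqref{eq:sbf2} together with the supermartingale-like expectation condition \eqref{eq:sbf3}.

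First, I would combine \eqref{eq:sbf1} and \eqref{eq:sbf2} into a single pointwise inequality valid on all of $\mathbb{R}^n$: namely $h(\bm{y})\geq 1_{\mathbb{R}^n\setminus\mathbb{X}}(\bm{y})$ for every $\bm{y}\in\mathbb{R}^n$. This is immediate by cases: if $\bm{y}\in\mathbb{X}$ the right-hand side is $0$ and \eqref{eq:sbf1} applies; if $\bm{y}\notin\mathbb{X}$ the right-hand side is $1$ and \eqref{eq:sbf2} applies. Next, I would specialize this inequality to $\bm{y}=\bm{f}(\bm{x},\bm{d})$ and take the expectation with respect to $\bm{d}\sim\textnormal{P}_{\bm{d}}$, obtaining
\[
\textnormal{E}_{\bm{d}}[h(\bm{f}(\bm{x},\bm{d}))]\;\geq\;\textnormal{E}_{\bm{d}}\!\left[1_{\mathbb{R}^n\setminus\mathbb{X}}(\bm{f}(\bm{x},\bm{d}))\right]\;=\;\textnormal{P}_{\bm{d}}[\bm{f}(\bm{x},\bm{d})\notin\mathbb{X}].
\]

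Then, for any $\bm{x}\in\mathbb{X}$, I would invoke \eqref{eq:sbf3} to upper-bound the left-hand side by $\lambda+h(\bm{x})$, giving $\textnormal{P}_{\bm{d}}[\bm{f}(\bm{x},\bm{d})\notin\mathbb{X}]\leq\lambda+h(\bm{x})$. Taking complements yields the claim $\textnormal{P}_{\bm{d}}[\bm{f}(\bm{x},\bm{d})\in\mathbb{X}]\geq 1-\lambda-h(\bm{x})$.

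There is no real obstacle here: the argument is essentially the standard Markov-inequality/super\-martingale observation underlying stochastic barrier functions, and both steps are one-liners once the pointwise bound $h\geq 1_{\mathbb{X}^c}$ is observed. The only thing worth being careful about is the measurability/integrability of $h\circ\bm{f}(\bm{x},\cdot)$, which is implicit in writing $\textnormal{E}_{\bm{d}}[h(\bm{f}(\bm{x},\bm{d}))]$ in \eqref{eq:sbf3}, so no separate justification is needed. The bound is vacuous when $\lambda+h(\bm{x})\geq 1$, which is consistent with the statement allowing any $\lambda\in[0,1]$.
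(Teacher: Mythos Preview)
Your argument is correct and is exactly the standard one-step Markov/indicator argument underlying stochastic barrier certificates: the pointwise bound $h\ge 1_{\mathbb{X}^c}$ from \eqref{eq:sbf1}--\eqref{eq:sbf2}, followed by taking expectation and invoking \eqref{eq:sbf3}. The paper does not supply its own proof of this proposition---it simply cites \cite{prajna2007convex}---and your derivation matches the argument in that reference.
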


Similarly, we parameterize $h(\bm{a},\bm{x})$ as follows:
\begin{equation}
\label{eq:h_p}
h(\bm{a},\bm{x})=1_{\mathbb{X}} (\bm{x})h_1(\bm{a},\bm{x})+1_{\mathbb{R}^n\setminus \mathbb{X}}(\bm{x}),
\end{equation}
where $h_1(\bm{a},\bm{x}): =\sum_{i_1=0}^{\kappa}\ldots\sum_{i_n=0}^{\kappa} a_{i_1,\ldots,i_n} \prod_{j=1}^n\dbinom{{\kappa}}{i_j}  \psi_j^{i_j}(1-\psi_j)^{{\kappa}-i_j}$ is a Bernstein polynomial, $a_{i_1,\ldots,i_n} \in [0, U_a]$, $U_a\geq 1$, and $\psi_j = \frac{\bm{x}[j]-\underline{\mathbb{X}}[j]}{\overline{\mathbb{X}}[j]-\underline{\mathbb{X}}[j]}$. 
Parameterization in the form of \eqref{eq:h_p} allows constraints \eqref{eq:sbf1} and \eqref{eq:sbf2} to hold naturally. Moreover, compared with the polynomial in \eqref{eq:h_r}, parameterization in  \eqref{eq:h_p} ensures that $h(\bm{a},\bm{x}) \leq U_a$ for all $(\bm{x},\bm{a}) \in \mathbb{X} \times [0,U_a]^m$ \cite{farouki2012bernstein}. Moreover, we obtain that $h(\bm{a},\bm{x})\geq 0, \forall (\bm{a},\bm{x}) \in [0,U_a]^m \times \mathbb{X}$ and $h(\bm{a},\bm{x})\equiv 1, \forall (\bm{a},\bm{x}) \in [0,U_a]^m \times \mathbb{R}^n\setminus \mathbb{X}$.


We search for $\bm{a}$ by solving the  optimization \eqref{eq:sbc_linear3} below, which is constructed using the condition \eqref{eq:sbf3} and sample set  $\mathbb{S}=\{(\bm{x}^{(i)}, \{\bm{d}^{(i,j)}\}_{j=1}^M)\}_{i=1}^N$,
\begin{equation}
   \label{eq:sbc_linear3}
   \begin{split}
    &\textstyle \min_{\lambda\in \mathbb{R}, ~\bm{a}\in \mathbb{R}^m} ~\lambda + \frac{1}{N_o}\textstyle\sum_{k=1}^{N_o}h(\bm{a},\bm{x}'_k)\\
    \text{s.t.}&
    \begin{cases}
        \frac{1}{M}\sum_{j=1}^{M}h(\bm{a}, \bm{f}(\bm{x}^{(i)},\bm{d}^{(i,j)})) \leq h(\bm{a},\bm{x}^{(i)})  + \lambda - \tau, \\
        \lambda \in [0,1]; ~ \bm{a}[k] \in [0, U_a]; ~ i = 1,\ldots,N;~ k=1,\ldots,m,
    \end{cases}
    \end{split}
\end{equation}
where $\tau \in (0, 1)$ is a specified positive value and $\{\bm{x}'_i\}_{i=1}^{N_o}$ is a family of specified states distributed evenly over $\mathbb{X}$, which can be generated by random sampling from $\mathbb{X}$. This data set $\{\bm{x}'_i\}_{i=1}^{N_o}$ is fixed over the whole sampling process and is independent of $\mathbb{S}$. We can also see that \eqref{eq:sbc_linear3} is feasible:  the constant function 
$h_1(\bm{x}) \equiv U_a \in [0, U_a]$ on $\mathbb{X}$, together with $\lambda=\tau$, satisfies the constraints.  
\begin{remark}
Ideally, program \eqref{eq:sbc_linear3} searches for a function $\lambda + h(\bm{a},\bm{x})$ that stays small over all $\bm{x}\in\mathbb{X}$, giving the tightest lower bound on the one-step safety probability. The natural objective would be the integral $\lambda + \int_{\mathbb{X}} h(\bm{a},\bm{x}), d\bm{x}$, but this integral is usually intractable. We therefore approximate it using samples, replacing the integral by the empirical sum $\sum_{k=1}^{N_o} h(\bm{a},\bm{x}'_k)$.
\end{remark}

It is observed that the linear problem \eqref{eq:sbc_linear3} is a scenario optimization to the following uncertain convex optimization \eqref{eq:sbc_un_con}: 
\begin{equation}
   \label{eq:sbc_un_con}
   \begin{split}
    &\textstyle \min_{\lambda\in \mathbb{R}, ~\bm{a}\in \mathbb{R}^m}~ \lambda + \frac{1}{N_o}\textstyle\sum_{k=1}^{N_o}h(\bm{a},\bm{x}'_k)\\
    \text{s.t.}&
    \begin{cases}
        \frac{\sum_{j=1}^{M}h(\bm{a}, \bm{f}(\bm{x},\bm{d}_{j}))}{M} \leq h(\bm{a},\bm{x})  + \lambda - \tau, ~\forall (\bm{x},\{\bm{d}_j\}_{j=1}^M) \in \mathbb{X}\times \mathbb{D}^M, \\
        \lambda \in [0,1]; ~ \bm{a}[k] \in [0, U_a];~ k=1,\ldots,m.
    \end{cases}
    \end{split}
\end{equation}


Condition \eqref{eq:sbc_un_con} meets Assumption~\ref{as:convex}, and \eqref{eq:sbc_linear3} meets Assumption~\ref{as:solution}.  Let $(\bm{a}^*(\mathbb{S}), \lambda^*(\mathbb{S}))$ be the optimal solution of \eqref{eq:sbc_linear3}.  
By the scenario approach, Markov's inequality,  and Hoeffding's inequality, with confidence at least $1 - \delta_1$ over the sampled set $\mathbb{S}$, at least a $1 - \frac{\alpha_1}{l\delta_2}$ fraction of states $\bm{x} \in \mathbb{X}$ satisfy: the system \eqref{eq:system}, starting from such $\bm{x}$, stays safe at the next step with probability at least
 $1-h(\bm{a}^*(\mathbb{S}),\bm{x})-\lambda^*(\mathbb{S})$, provided $\alpha_1,\delta_1,\delta_2, l\in (0,1)$, $
N \geq \frac{2}{\alpha_1}\left(\ln\frac{1}{\delta_1} + m+1\right)$, and $M \geq \frac{U_a^2}{2\tau^2}\ln\frac{1}{(1-l)\delta_2}$.
 The proof is shown in Appendix \ref{appendix:a3}.
\begin{theorem}[PAC Safety Certification III Based on SBC]
\label{thm:main-vc-final-sbc1}
Fix $\delta_1 \in (0,1)$, $\delta_2 \in (0,1)$, $\alpha_1 \in (0,1)$, $\alpha_2 \in (0,1)$, and $l\in (0,1)$, where $\alpha_1<l\delta_2$, 
and let $N \geq \frac{2}{\alpha_1}\left(\ln\frac{1}{\delta_1} + m+1\right)$, and $M \geq \frac{U_a^2}{2\tau^2}\ln\frac{1}{(1-l)\delta_2}$.

Let $(\bm{a}^*(\mathbb{S}), \lambda^*(\mathbb{S}))$ be the solution obtained by solving~\eqref{eq:sbc_linear3}. Then, we have 
$\textnormal{P}_{\mathbb{S}}\left[\textnormal{P}_{\bm{x}}\left[
\textnormal{E}_{\bm{d}}[h(\bm{a}^*(\mathbb{S}),\bm{f}(\bm{x},\bm{d}))] \leq  h(\bm{a}^*(\mathbb{S}),\bm{x})+\lambda^*(\mathbb{S})
\right] \geq 1 - \frac{\alpha_1}{l\delta_2}\right]\geq 1-\delta_1$ and thus $\textnormal{P}_{\mathbb{S}}\left[\textnormal{P}_{\bm{x}}\left[
\textnormal{P}_{\bm{d}}[\bm{f}(\bm{x},\bm{d})\in \mathbb{X}]\geq 1-h(\bm{a}^*(\mathbb{S}),\bm{x})-\lambda^*(\mathbb{S})
\right] \geq 1 - \frac{\alpha_1}{l\delta_2}\right]\geq 1-\delta_1$.
\end{theorem}

Similarly, if a more general parametric function $h_1(\bm{a},\bm{x})$ is used, e.g., one that is nonlinear in $\bm{a}$ rather than linear, then the VC-dimension method in Theorem \ref{thm:main-fixedpoint} can replace the scenario approach. We leave this for future work.





\section{Experiments}
\label{sec:ex}
This section provides several examples to evaluate our three PAC safety certification methods. To demonstrate practical applicability, a case study using the CARLA autonomous driving simulator is provided in Appendix~\ref{sec:carla}. 
For evaluation, we include white-box baselines that assume full knowledge of the system and disturbances. When $\bm{f}(\cdot,\cdot)$ is a known polynomial and $\mathbb{D}$, $\mathbb{X}$ are semi-algebraic, constraint \eqref{eq:rbf} can be formulated as an SOS program to produce \textbf{RBC-SOS}, which we compare with our PAC methods \textbf{RBC-I} (Theorem~\ref{theo:robust}) and \textbf{RBC-II} (Theorem~\ref{thm:main-vc-final-corrected}). Similarly, if both $\bm{f}(\cdot,\cdot)$ and the disturbance distribution are known, constraint \eqref{eq:sbf} can be cast as an SOS program to yield \textbf{SBC-SOS}, which we compare with \textbf{SBC-III} (Theorem~\ref{thm:main-vc-final-sbc1}).
 \vspace{-0pt}
\begin{table*}[!h]
\centering
\caption{Summary of performance and computation times ({seconds}).}
\footnotesize
\setlength{\tabcolsep}{0.7mm}{
    \begin{tabular}{*{15}{c}}
    \toprule
    \multirow{2}{*}{EX} & \multirow{2}{*}{n} & \multicolumn{2}{c}{RBC-SOS} & \multicolumn{3}{c}{RBC-I} & \multicolumn{3}{c}{RBC-II} & \multicolumn{1}{c}{SBC-SOS} & \multicolumn{2}{c}{SBC-III}\\\cmidrule(lr){3-4} \cmidrule(lr){5-7} \cmidrule(lr){8-10} \cmidrule(lr){11-11} \cmidrule(lr){12-13}
    & & $T$ & $\xi^*\!\!\!=\!0$ & $N$& $T$ & $\xi^*\!\!\!=\!0$ &$(N,M)$  & $T$ & $\xi^*\!\!\!=\!0$ & $T$ & $T$ &$(N,M)$  \\ \midrule
    \ref{ex:vinc} & 2 & 1.0 & \ding{55} &15053& 0.2 & \ding{51} &(3764,45) & 0.9 & \ding{51}& 2.8 & 1.3 &(3764,1357)\\
    \ref{ex:arch} & 2 & 1.3 & \ding{51} &15053& 0.2 & \ding{51} &(3764,45)& 0.9 & \ding{51} & 2.5 & 0.8 &(3764,1357)\\
    \ref{ex:stable} & 3 & 8.7 & \ding{51} &18253& 0.3 & \ding{51} &(4564,45)& 1.0 & \ding{51} & 2.4 & 1.6 &(4564,1357)\\
    \ref{ex:4d} & 4 & 1.4 & \ding{55} &24653& 0.3 & \ding{51}&(6164,45) & 1.8 & \ding{51} & 2.6 & 2.8 &(6164,1357)\\
    \ref{ex:6d} & 6 & 17.7 & \ding{51} &63053& 0.8 & \ding{51} &(15764,45)& 8.2 & \ding{51} & 21.4 & 16.1 &(15764,1357)\\
    \ref{ex:lotka} & 2 & 1.0 & \ding{55} &15053& 0.2 & \ding{55} &(3764,45)& 0.9 & \ding{55} & 10.3 & 20.4 &(27164,631)\\
    \ref{ex:pendulum} & 2 & - & - &15053& 0.2 & \ding{55} &(3764,45)& 0.9 & \ding{55} & -  & 19.2 &(27164,631)\\
    \ref{ex:sank} & 4 & 3.9 & \ding{55} &24653& 0.3 & \ding{55} &(6164,45)& 1.8 & \ding{55} & 153.4 & 9.2 &(19164,340)\\
    \ref{ex:lorenz} & 7 & 2.4 & \ding{55}&114253 & 2.5 & \ding{55} &(28564,45)& 38.3 & \ding{55} & 84.4 & 17.6 &(28564,340) \\
\bottomrule
\end{tabular}}
\label{tab:results}
\end{table*}
\vspace{-0pt}


For all examples, the settings are as follows. \textbf{RBC-I:} $\delta = 10^{-6}$, $\alpha_1 = \alpha_2 = 0.05$. \textbf{RBC-II:} $\delta_1 = 10^{-6}$, $\alpha_1 = 0.01$, $\delta_2 = 0.999$, $\alpha_2 = 0.05$, $l = 0.2$ ($1 - \frac{\alpha_1}{l\delta_2} \approx 0.95$). \textbf{SBC-III:} $\delta_1 = 10^{-6}$, $\alpha_1 = 0.01$, $\delta_2 = 0.999$, $l = 0.2$. Table~\ref{tab:results} reports the results, where $T$ denotes the total computation time in seconds, 
$\xi^* = 0$ indicates a valid safety guarantee, and $N$ and $M$ denote the numbers of state and disturbance samples, respectively. Additional experimental details are in Appendix~\ref{sec:app_setting}, and results for other hyperparameters are in Appendix~\ref{sec:app_para}. Examples~\ref{ex:vinc} and~\ref{ex:lotka} are discussed as representative cases.

\begin{example}
    \label{ex:vinc}
    Consider the following system adapted from \cite{vincent1997nonlinear}:
\begin{equation*}
    \begin{cases}
x(t+1)=x(t)+0.01\big(y(t) - x(t)  (d(t)+0.5)\big),\\
y(t+1)=y(t)+0.01\big(-(1-x^2(t))x(t) - y(t)\big),
\end{cases}
\end{equation*}
where $d(t)$ follows a truncated normal distribution with mean 0 and standard deviation 
$0.1$, limited to $[-0.7, 0.7]$. The safe set is $\mathbb{X}=\{\,(x,y)\mid x^2+y^2 \leq 0.64\,\}$.

In this example, the RBC-SOS method fails to provide a valid RBC and some disturbances lead to the safety violation, as shown in Fig. \ref{fig:vinc_traj}. In contrast, RBC-I and RBC-II provide high-confidence probabilistic guarantees. 
For reference, we performed a Monte Carlo study by uniformly sampling $10^6$ states from $\mathbb{X}$ and simulating $10^3$ successors for each; all sampled successors stayed within $\mathbb{X}$. This result agrees with the guarantees from RBC-I and RBC-II. 

Besides, although RBC-II is more time-consuming, it would provide higher one-step safety probability by using multiple disturbance samples per state (see Remark~\ref{remark_com_single_and_nested}). This becomes evident when we fix the sample size rather than parameters such as $\alpha_1$ and $\alpha_2$. For RBC-I, using $10^3$ samples of $(\bm{x},\bm{d})$ gives the guarantee $\textnormal{P}_{\mathbb{S}}\Big[ \textnormal{P}_{\bm{x}}\Big[\textnormal{P}_{\bm{d}}\!\big[ \bm{f}(\bm{x},\bm{d}) \in \mathbb{X} \big] \ge 0.80 \Big] \ge 0.80 \Big] \ge 1-10^{-6}$. For RBC-II, using $10^3$ samples of $\bm{x}$ and $10$ samples of $\bm{d}$ per $\bm{x}$ (a one-to-many strategy not possible in RBC-I) yields $\textnormal{P}_{\mathbb{S}}\Big[ \textnormal{P}_{\bm{x}}\Big[\textnormal{P}_{\bm{d}}\!\big[ \bm{f}(\bm{x},\bm{d}) \in \mathbb{X} \big] \ge 0.96  \Big] \ge 0.80 \Big] \ge 1-10^{-6}$. The resulting safety probability $\textnormal{P}_{\bm{d}}[ \bm{f}(\bm{x},\bm{d}) \in \mathbb{X}]$ from RBC-II is higher.

The contour plot of $1-\lambda^*-h(\bm{a}^*,\bm{x})$ from SBC-SOS is shown in Fig. \ref{fig:vinc_sos}. In contrast, SBC-III yields $1-\lambda^*(\mathbb{S})-h(\bm{a}^*(\mathbb{S}),\bm{x}) \approx 0.99$ for every $\bm{x} \in \mathbb{X}$. Although SBC-III provides only a one-step guarantee, the lower probability bound it delivers is significantly tighter than that of SBC-SOS.
\vspace{-0pt}
\begin{figure}[!h]
    \centering
    \subfigure[\scriptsize Trajectories in Ex. \ref{ex:vinc}]{\includegraphics[width=0.3\linewidth]{./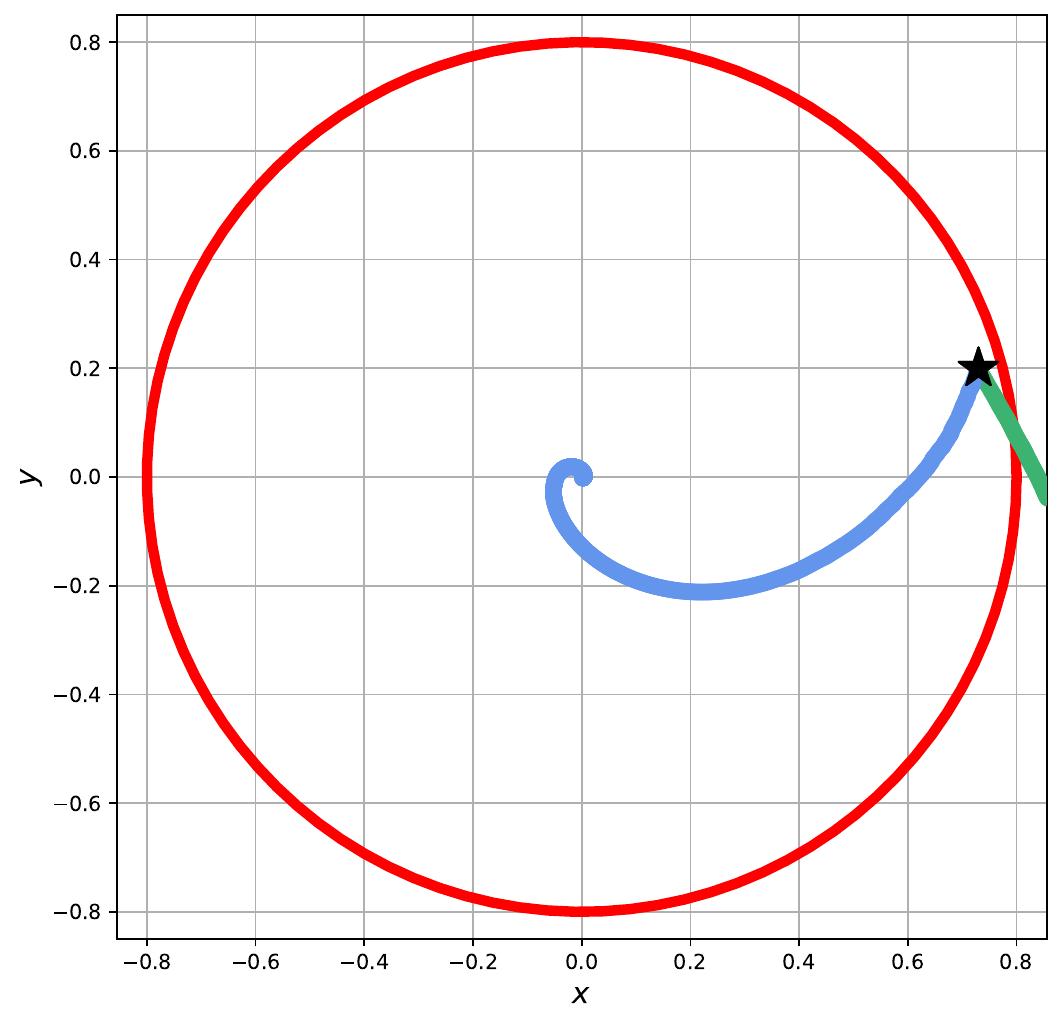}
    \label{fig:vinc_traj}}
    \subfigure[\scriptsize SBC-SOS in Ex. \ref{ex:vinc}]{\includegraphics[width=0.3\linewidth]{./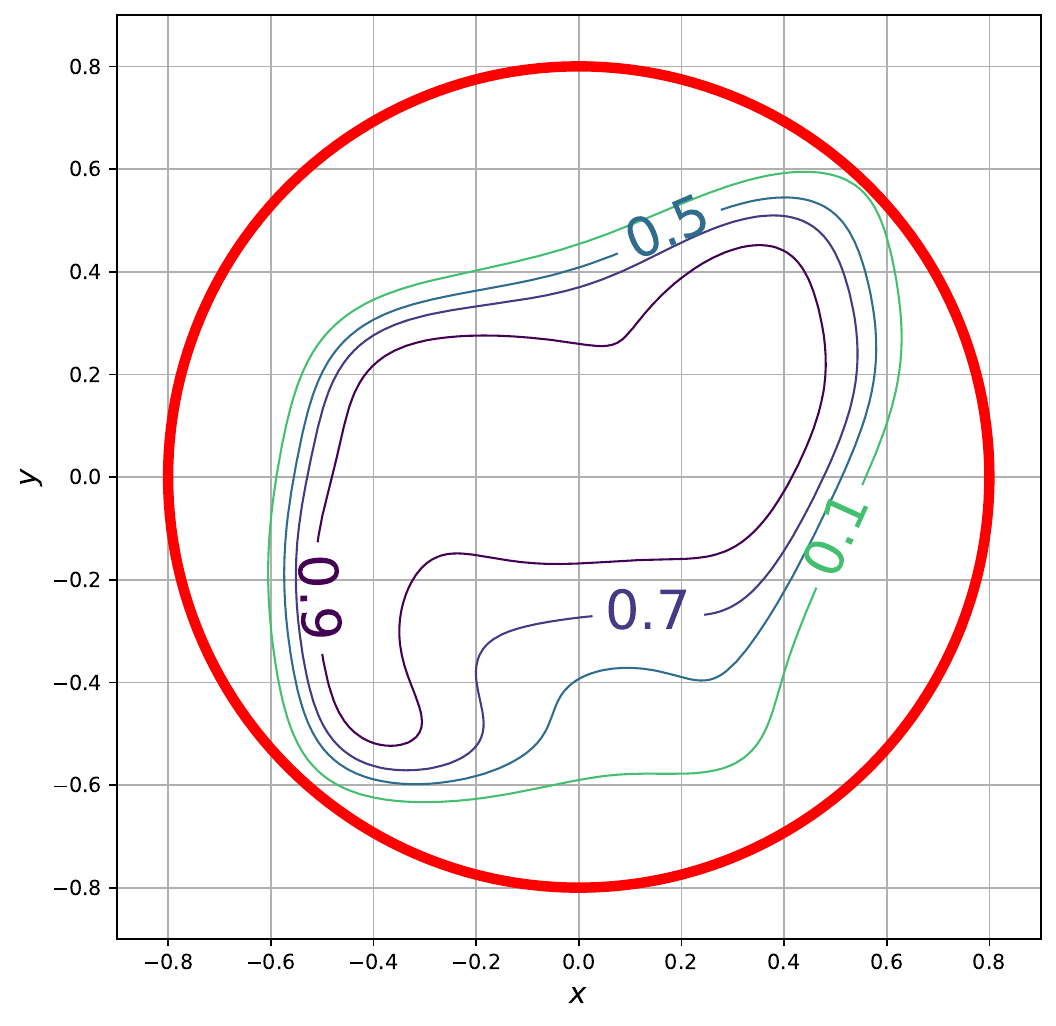}
    \label{fig:vinc_sos}}
    \vspace{-0pt}
    \caption{Results for Ex. \ref{ex:vinc}}
\end{figure}
\vspace{-0pt}
\end{example}

\setcounter{example}{5}
\begin{example}
\label{ex:lotka}
    Consider the discrete-time Lotka–Volterra model, widely used to describe the dynamics of interacting biological species \cite{lotka2002contribution}:
\begin{equation*}
    \begin{cases}
x(t+1)=rx(t)-ay(t)x(t),\\
y(t+1)=sy(t)+acy(t)x(t),
\end{cases}
\end{equation*}
where $r=0.5$, $a=c=1$, and $s=-0.5+d(t)$ with $d(\cdot)\colon\mathbb{N}\rightarrow \mathbb{D}=[-1,1]$ drawn from the uniform distribution on $\mathbb{D}$. The safe set is $\mathbb{X}=\{\,(x,y)\mid x^2+y^2 \leq 1\,\}$.

In this example, five trajectories initialized at $(-0.8, -0.5)$ and simulated for five time steps are shown in Fig. \ref{fig:lotka_traj}. It is evident that the system leaves the safe set under certain disturbances, indicating that no RBC exists in this example. Therefore, RBC-SOS, RBC-I(for $\alpha_1 \leq 0.9$ and $\alpha_2\leq0.9$), and RBC-II(for $\alpha_1\leq0.15$, $\alpha_2\leq0.9$) cannot provide valid guarantees, and we focus instead on the SBC-based methods.

The contour plots of $1-\lambda^*(\mathbb{S})-h(\bm{a}^*(\mathbb{S}),\bm{x})$ from SBC-III and $1-\lambda^*-h(\bm{a}^*,\bm{x})$ from SBC-SOS, are shown in Fig. \ref{fig:lotka_sbc_nested} and Fig. \ref{fig:lotka_sbc_sos}, respectively. 
Fig.~\ref{fig:lotka_sbc_sos} shows that SBC-SOS provides probability lower bounds that quickly drop to zero as the system state approaches the boundary of the safe set $\mathbb{X}$, which will be trivial and uninformative. In contrast, SBC-III avoids this issue, though it provides only a PAC one-step safety guarantee. This performance advantage likely stems from the piecewise template in \eqref{eq:h_p}, which enforces constraints \eqref{eq:sbf1} and \eqref{eq:sbf2} by construction. By comparison, SBC-SOS seeks a single polynomial satisfying \eqref{eq:sbf1}, \eqref{eq:sbf2}, and \eqref{eq:sbf3} simultaneously, imposing significantly stronger requirements on expressiveness and global approximation accuracy. We plan to investigate this issue in detail in future work. 
Tab.~\ref{tab:lotka} (in Appendix~\ref{sec:app_results}) compares SBC-III and SBC-SOS along the purple trajectory shown in Fig.~\ref{fig:lotka_traj}. For reference, we also report Monte Carlo estimates. As observed in Tab.~\ref{tab:lotka}, SBC-III gives a less conservative lower bound than SBC-SOS. 
\vspace{-0pt}
\begin{figure}[!h]
    \centering
    \subfigure[\scriptsize Trajectories in Ex. \ref{ex:lotka}]{\includegraphics[width=0.3\linewidth]{./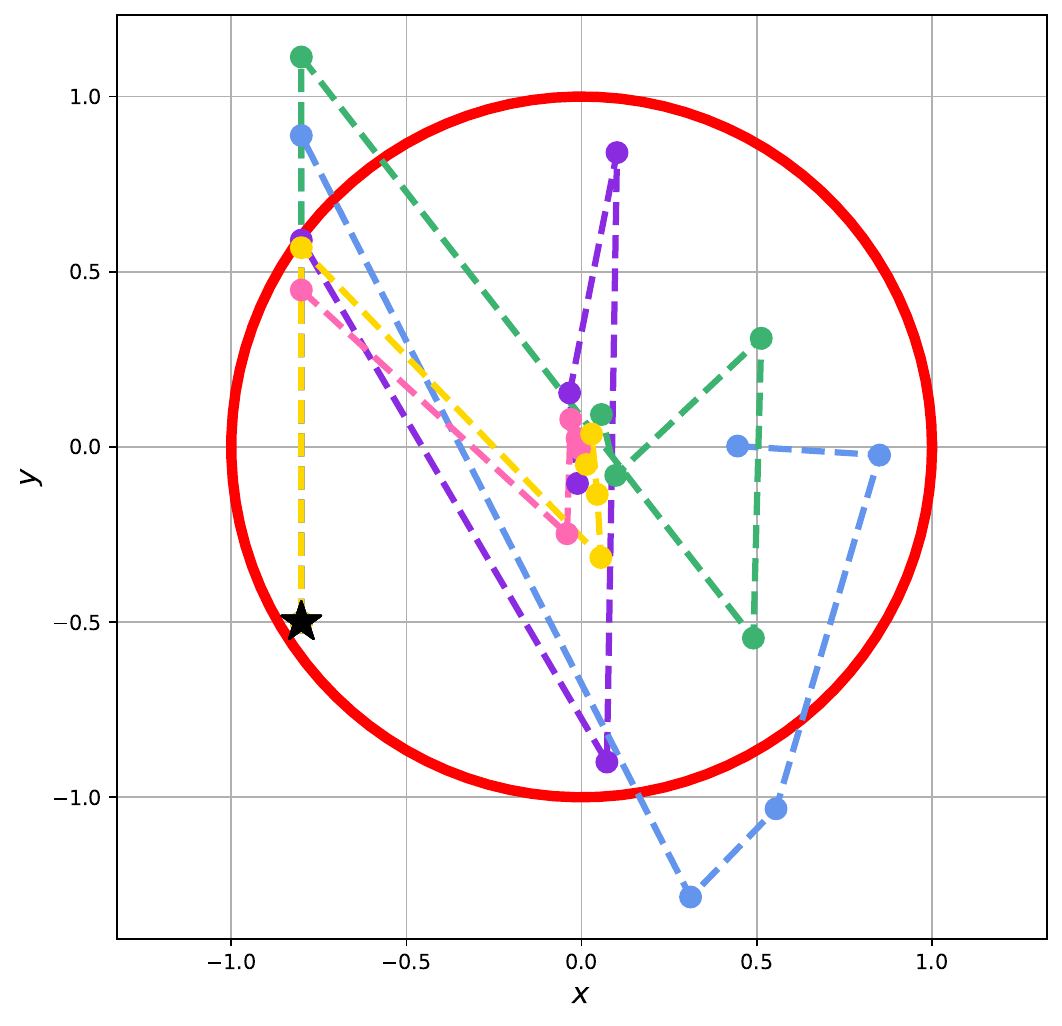}
    \label{fig:lotka_traj}}
    \subfigure[ \scriptsize SBC-III in Ex. \ref{ex:lotka}]{\includegraphics[width=0.3\linewidth]{./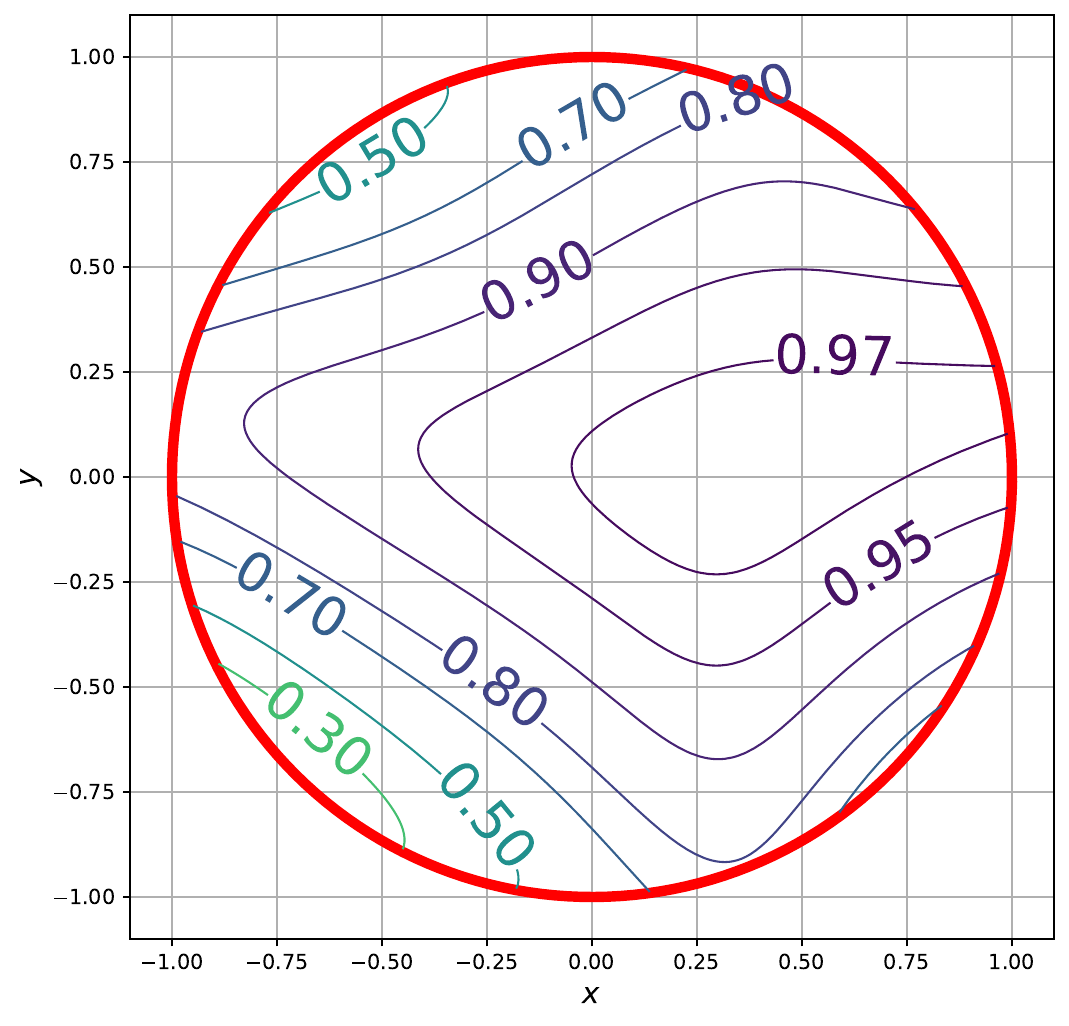}
    \label{fig:lotka_sbc_nested}}
    \subfigure[\scriptsize SBC-SOS in Ex. \ref{ex:lotka}]{\includegraphics[width=0.3\linewidth]{./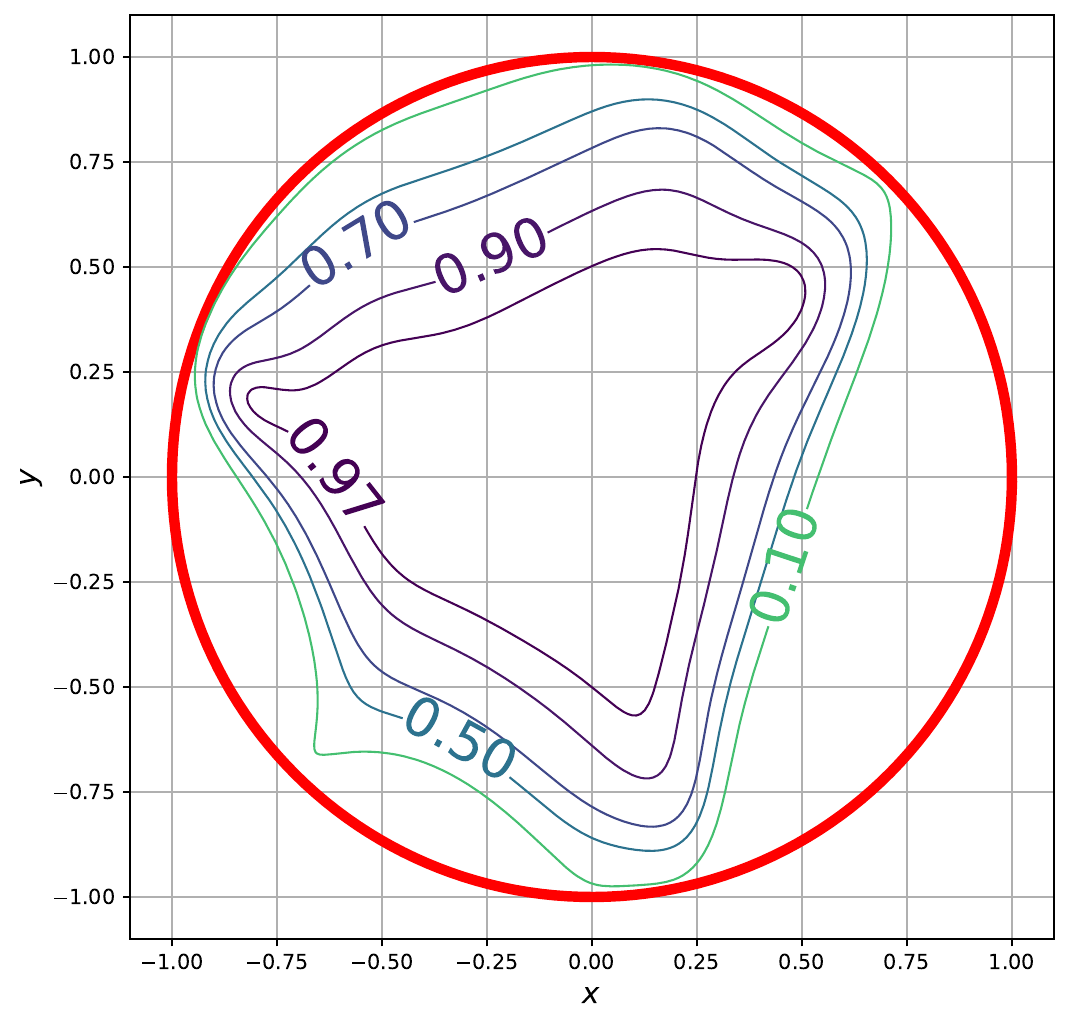}
    \label{fig:lotka_sbc_sos}}
    \vspace{-0pt}
    \caption{Results for Example \ref{ex:lotka}}
    \vspace{0pt}
    \label{fig:placeholder}
\end{figure}
\end{example}

\textbf{Summary}: Overall, the results demonstrate that the proposed methods provide formal PAC one-step safety guarantees without any model knowledge. In Examples~\ref{ex:vinc} and~\ref{ex:4d}, RBC-SOS fails to produce a valid certificate, while both RBC-I and RBC-II still succeed. Similarly, SBC-III consistently achieves probability lower bounds that match or exceed those of SBC-SOS. In addition, SOS-based methods require the system dynamics to be a known polynomial and the safe set $\mathbb{X}$ to be semi-algebraic, whereas our methods have no such restrictions.


Our computational methods primarily involve solving linear programs, whose size mainly depends on the sample complexity. The sample complexity, in turn, depends only on parameters such as $m$, $\alpha_1$, $\alpha_2$, $\delta_1$, and/or $(\delta_2,l)$, and is independent of the system dimension $n$. Hence, all three methods theoretically scale well to high-dimensional systems. In practice, RBC-I performs particularly efficiently, with minimal computation time across all examples. For SBC-III, the number of constraints in the optimization problem \eqref{eq:sbc_linear3} depends only on the state sample size $N$ and not on $M$, keeping the problem tractable. For example, in Example~\ref{ex:lorenz}, solving SBC-III requires only 0.8 seconds, with most of the total time spent on data collection and preprocessing.

\section{Conclusion}
\label{sec:con}
In this paper, we presented a data-driven framework for one-step safety certification of black-box discrete-time stochastic systems, where both the system dynamics and disturbance distributions are unknown and only sampled data are available. Based on robust and stochastic barrier certificates, we proposed three methods for establishing formal one-step safety guarantees based on scenario approaches, VC dimension, and Markov's and Hoeffding's inequalities. The effectiveness of our approaches was demonstrated through several examples. 

Besides the future work discussed in Remarks \ref{remark:barrier} and \ref{remark_com_single_and_nested}, we plan to incorporate recent advances in scenario optimization \cite{campi2018wait} and other methods to further reduce sample complexity and improve computational efficiency, as well as to extend the proposed approaches to PAC verification for probabilistic programs \cite{zaiser2025guaranteed} and quantum computations \cite{ying2010quantum}. {In this context, the program or algorithm's input-output execution is modeled as a stochastic transition, where input states are sampled from their intrinsic (often non-uniform) distributions and the program or algorithm executes according to its probabilistic semantics.} We also plan to extend our approach to (controlled) continuous-time systems, which are ubiquitous in the physical world.

\bibliographystyle{elsarticle-num}
\bibliography{reference}

\newpage
\section*{Appendix}
\setcounter{secnumdepth}{2}
\setcounter{section}{0}
\renewcommand\thesubsection{\thesection.\arabic{subsection}}
\renewcommand\thesection{\Alph{section}}
\section{Proofs of the Theorems}
 \label{appendix:pot}
\subsection{The Proof of Theorem \ref{thm:main-fixedpoint}}
\label{appendix:a1}

\setcounter{theorem}{0}
\begin{theorem}[PAC Safety Certification I Based on VC Dimension]
Fix probability threshold $\alpha_1, \alpha_2 \in (0,1)$  and cofidence level $\delta \in (0,1)$.  Let $(\bm{a}^*(\mathbb{S}), \xi^*(\mathbb{S}))$ denote the solution obtained by solving~\eqref{eq:rbc_rc}. 
Define the computational procedure:
\begin{equation*}
\mathcal{A}(\mathbb{S}) = 1_{\{\xi^*(\mathbb{S}) = 0\}}(\mathbb{S}).
\end{equation*}
If $N$ satisfies \eqref{NN1} with $\alpha=\alpha_1 \alpha_2$, then, with confidence at least $1 - \delta$ over the sample set $\mathbb{S}$, if $\xi^*(\mathbb{S})=0$, $
\textnormal{P}_{\bm{x}}\!\Big[
\textnormal{P}_{\bm{d}}\big[h(\bm{a}^*(\mathbb{S}),\bm{f}(\bm{x},\bm{d}))\geq \gamma\,h(\bm{a}^*(\mathbb{S}),\bm{x})] \geq 1-\alpha_2
\Big]
\ge 1-\alpha_1$.
This implies $\textnormal{P}_{\mathbb{S}}\Big[\mathcal{A}(\mathbb{S})\Rightarrow \textnormal{P}_{\bm{x}}\!\Big[
\textnormal{P}_{\bm{d}}\big[
\bm{f}(\bm{x},\bm{d})\in \mathbb{X}
\big] \geq 1-\alpha_2
\Big]
\ge 1-\alpha_1\Big]\geq 1-\delta$.
\end{theorem}
\begin{proof}
The proof follows three steps.

\textbf{1. PAC Guarantees Based on Lemma \ref{coro4}.} According to Lemma \ref{coro4}, we have 
\[\textnormal{P}_{\mathbb{S}}[\textnormal{P}_{(\bm{x},\bm{d})}[h(\bm{a}^*(\mathbb{S}),\bm{f}(\bm{x},\bm{d}))\ge\gamma\,h(\bm{a}^*(\mathbb{S}),\bm{x})-\xi^*(\mathbb{S})]\geq 1-\alpha]\geq 1-\delta,\]
if 
$N\geq \frac{5}{\alpha}\big( \ln{\frac{4}{\delta}}+\operatorname{vc}(\mathbb{H}) \ln{\frac{40}{\alpha}}\big)$.
Thus, 
\[\textnormal{P}_{\mathbb{S}}[\textnormal{P}_{(\bm{x},\bm{d})}[h(\bm{a}^*(\mathbb{S}),\bm{f}(\bm{x},\bm{d}))<\gamma\,h(\bm{a}^*(\mathbb{S}),\bm{x})-\xi^*(\mathbb{S})]\leq \alpha]\geq 1-\delta.\]

\textbf{2. Guarantee Decomposition over $\textnormal{P}_{(\bm{x},\bm{d})}$ Based on Markov's Inequality.} 
Due to Markov's, we have for all $\mathbb S$,
\[
\begin{split}
&\textnormal{P}_{\bm{x}}[\textnormal{P}_{\bm{d}}[h(\bm{a}^*(\mathbb{S}),\bm{f}(\bm{x},\bm{d}))<\gamma\,h(\bm{a}^*(\mathbb{S}),\bm{x})-\xi^*(\mathbb{S})\mid \bm{x}]
\geq \alpha_2]\\
&\leq  \frac{\textnormal{E}_{\bm{x}}[\textnormal{P}_{\bm{d}}[h(\bm{a}^*(\mathbb{S}),\bm{f}(\bm{x},\bm{d}))<\gamma\,h(\bm{a}^*(\mathbb{S}),\bm{x})-\xi^*(\mathbb{S})\mid \bm{x}]]}{\alpha_2}\\
&=\frac{\textnormal{P}_{(\bm{x},\bm{d})}[h(\bm{a}^*(\mathbb{S}),\bm{f}(\bm{x},\bm{d}))<\gamma\,h(\bm{a}^*(\mathbb{S}),\bm{x})-\xi^*(\mathbb{S})]}{\alpha_2} \leq \alpha_1,
\end{split}
\]

 \textbf{3. Conclusion.}
Thus, we can obtain 
\begin{align*}
1-\delta&\leq \textnormal{P}_{\mathbb{S}}[\textnormal{P}_{(\bm{x},\bm{d})}[h(\bm{a}^*(\mathbb{S}),\bm{f}(\bm{x},\bm{d}))<\gamma\,h(\bm{a}^*(\mathbb{S}),\bm{x})-\xi^*(\mathbb{S})]\leq \alpha]\\
&\leq \textnormal{P}_{\mathbb{S}}[\textnormal{P}_{\bm{x}}[\textnormal{P}_{\bm{d}}[h(\bm{a}^*(\mathbb{S}),\bm{f}(\bm{x},\bm{d}))<\gamma\,h(\bm{a}^*(\mathbb{S}),\bm{x})-\xi^*(\mathbb{S})\mid \bm{x}]
\geq \alpha_2]\leq \frac\alpha{\alpha_2}]\\
&= \textnormal{P}_{\mathbb{S}}[\textnormal{P}_{\bm{x}}[\textnormal{P}_{\bm{d}}[h(\bm{a}^*(\mathbb{S}),\bm{f}(\bm{x},\bm{d}))<\gamma\,h(\bm{a}^*(\mathbb{S}),\bm{x})-\xi^*(\mathbb{S})\mid \bm{x}]
< \alpha_2]\geq 1-\alpha_1]\\
&= \textnormal{P}_{\mathbb{S}}[\textnormal{P}_{\bm{x}}[\textnormal{P}_{\bm{d}}[h(\bm{a}^*(\mathbb{S}),\bm{f}(\bm{x},\bm{d}))\geq\gamma\,h(\bm{a}^*(\mathbb{S}),\bm{x})-\xi^*(\mathbb{S})\mid \bm{x}]
\geq 1-\alpha_2]\geq 1-\alpha_1],
\end{align*}
implying
\[
\begin{split}
&\textnormal{P}_{\mathbb{S}}\Bigg[\xi^*(\mathbb{S})=0 \Rightarrow \textnormal{P}_{\bm{x}}\left[
\begin{split}&\textnormal{P}_{\bm{d}}\left[
\begin{split}
   & h(\bm{a}^*(\mathbb{S}),\bm{f}(\bm{x},\bm{d}))\\
   &\ge\gamma\,h(\bm{a}^*(\mathbb{S}),\bm{x})
\end{split}
\right]\geq 1-\alpha_2
\end{split}
\right]\geq 1-\alpha_1 \Bigg]\geq 1-\delta,
\end{split}
\]
and 
$\textnormal{P}_{\mathbb{S}}\Bigg[\xi^*(\mathbb{S})=0 \Rightarrow \textnormal{P}_{\bm{x}} \left[
\textnormal{P}_{\bm{d}}[\bm{f}(\bm{x},\bm{d}) \in \mathbb{X}]\geq 1-\alpha_2
\right] \ge 1-\alpha_1\Bigg]\geq 1-\delta$.
This proves the claim. \qed
\end{proof}

\subsection{The Proof of Theorem \ref{theo:robust}}
\label{appendix:proof_theo2}
\begin{theorem}[PAC Safety Certification I Based on Scenario Approaches]
Fix probability thresholds $\alpha_1, \alpha_2 \in (0,1)$ and confidence level $\delta \in (0,1)$, and let the sample size satisfy
$N \;\geq\; \frac{2}{\alpha_1 \alpha_2}\!\left(\ln\frac{1}{\delta} + m + 1\right)$.

Let $(\bm{a}^*(\mathbb{S}), \xi^*(\mathbb{S}))$ be the solution to~\eqref{eq:rbc_rc}, with the computational procedure in~\eqref{cp}. Then,  with confidence at least $1 - \delta$ over the random draw of $\mathbb{S}$, if $\xi^*(\mathbb{S}) = 0$, 
$\textnormal{P}_{\bm{x}}\!\Big[
\textnormal{P}_{\bm{d}}\!\big[
h(\bm{a}^*(\mathbb{S}), \bm{f}(\bm{x},\bm{d})) \geq \gamma h(\bm{a}^*(\mathbb{S}), \bm{x})
\big] \geq 1- \alpha_2
\Big]\ge 1-\alpha_1$ holds, i.e., $\textnormal{P}_{\mathbb{S}}\Big[
\mathcal{A}(\mathbb{S}) \Rightarrow 
\textnormal{P}_{\bm{x}}\Big[\textnormal{P}_{\bm{d}}\!\big[
\bm{f}(\bm{x},\bm{d}) \in \mathbb{X} 
\big] \ge 1 - \alpha_2
\Big]
\ge 1 - \alpha_1
\Big]
\ge 1 - \delta$.
\end{theorem}

\begin{proof}

According to Proposition \ref{prop:pac}, we have that, with probability at least $1-\delta$ over the sample $\mathbb{S}$ of size $N$, the optimal solution $(\bm{a}^*(\mathbb{S}),\xi^*(\mathbb{S}))$ satisfies $
\textnormal{P}_{(\bm{x},\bm{d})}\left[h(\bm{a}^*(\mathbb{S}),\bm{f}(\bm{x},\bm{d})) < \gamma h(\bm{a}^*(\mathbb{S}),\bm{x}) - \xi^*(\mathbb{S})\right] \le \alpha_1 \alpha_2$, where $\textnormal{P}_{(\bm{x},\bm{d})}:=\textnormal{P}_{\bm{x}}\times \textnormal{P}_{\bm{d}}$. Following the proof of Theorem \ref{thm:main-fixedpoint}, we obtain the conclusion. \qed
\end{proof}

\subsection{The Proof of Theorem \ref{thm:main-vc-final-corrected}}
\label{appendix:a2}

\begin{theorem}[PAC Safety Certification II Based on RBC]
Fix $\delta_1 \in (0,1)$, $\delta_2 \in (0,1)$, $\alpha_1 \in (0,1)$, $\alpha_2 \in (0,1)$, and $l\in (0,1)$, where $\alpha_1<l\delta_2$.  
Let $
N \geq \frac{2}{\alpha_1}\left(\ln\frac{1}{\delta_1} + m+1\right)$ and $M \geq \frac{1}{2\alpha_2^2}\ln\frac{1}{(1-l)\delta_2}$.

Let $(\bm{a}^*(\mathbb{S}), \xi^*(\mathbb{S}))$ be the solution obtained by solving~\eqref{eq:rbc_linear3}.  Then, we have 
\begin{equation*}
\label{eq:nested_guarantee_final0}
\begin{split}
&\textnormal{P}_{\mathbb{S}}\left[\textnormal{P}_{\bm{x}}\left[
\begin{split}
\begin{split}
\textnormal{P}_{\bm{d}}\left[
\begin{split}
&h(\bm{a}^*(\mathbb{S}),\bm{f}(\bm{x},\bm{d})) \\
&\geq  \gamma h(\bm{a}^*(\mathbb{S}),\bm{x})-\xi^*(\mathbb{S})
\end{split}
\right] \geq 1-\alpha_2 
\end{split}
\end{split}
\right] \geq 1 - \frac{\alpha_1}{l\delta_2}\right]\geq 1-\delta_1
\end{split}
\end{equation*}
\begin{equation*}
\begin{split}
\text{and thus, }\textnormal{P}_{\mathbb{S}}\left[\xi^*(\mathbb{S})=0\Rightarrow \textnormal{P}_{\bm{x}}\left[
\begin{split}
\begin{split}
\textnormal{P}_{\bm{d}}\left[
\begin{split}
\bm{f}(\bm{x},\bm{d}) \in \mathbb{X}
\end{split}
\right] \geq 1-\alpha_2
\end{split} 
\end{split}
\right] \geq 1 - \frac{\alpha_1}{l\delta_2}\right]\geq 1-\delta_1.
\end{split}
\end{equation*}
\end{theorem}

\begin{proof}
The proof proceeds in three main steps.%

\textbf{Step 1: Outer Layer Guarantee via Scenario Approaches}

According to Proposition \ref{prop:pac}, we have that,  with probability at least $1-\delta_1$ over the sample $\mathbb{S}$, the solution satisfies:
\[
\textnormal{P}_{\bm{x},\mathbb{D}^M}\left[\vee_{j=1}^M h(\bm{a}^*(\mathbb{S}),\bm{f}(\bm{x},\bm{d}_j)) < \gamma h(\bm{a}^*(\mathbb{S}),\bm{x})-\xi^*(\mathbb{S})\right] \leq \alpha_1,
\]
where $\{\bm{d}_j\}_{j=1}^M$ is a fresh, independent sample of $M$ disturbances, and $\textnormal{P}_{\bm{x},\mathbb{D}^M}:=\textnormal{P}_{\bm{x}}\times \textnormal{P}_{\bm{d}}^M$. Let $\mathbb{E}_{\text{outer}}$ denote this event. We have $\textnormal{P}_{\mathbb{S}}[\mathbb{E}_{\text{outer}}] \geq 1-\delta_1$. Define for any $\mathbb{S}$ and $\bm{x}$,
\[
\mu_M(\mathbb{S},\bm{x}) := \textnormal{P}_{\bm{d}}^M\left[\vee_{j=1}^M h(\bm{a}^*(\mathbb{S}),\bm{f}(\bm{x},\bm{d}_j)) < \gamma h(\bm{a}^*(\mathbb{S}),\bm{x})-\xi^*(\mathbb{S})\right].
\]
The outer guarantee implies that for $\mathbb{S} \in \mathbb{E}_{\text{outer}}$, 
$\textnormal{E}_{\bm{x}}[\mu_M(\mathbb{S},\bm{x})] \leq \alpha_1$ holds.

\textbf{Step 2: Decomposition on $\textnormal{P}_{\bm{x},\mathbb{D}^M}$ Based on Markov's Inequality.} Consider any $\mathbb{S} \in \mathbb{E}_{\text{outer}}$. By Markov's inequality applied to $\mu_M(\mathbb{S},\bm{x})$, we have
\[
\textnormal{P}_{\bm{x}}\left[\mu_M(\mathbb{S},\bm{x}) \geq  l\delta_2 \right] \leq \frac{\textnormal{E}_{\bm{x}}[\mu_M(\mathbb{S},\bm{x})]}{l\delta_2} \leq \frac{\alpha_1}{l\delta_2}.
\]
This implies $
\textnormal{P}_{\bm{x}}\left[\mu_M(\mathbb{S},\bm{x}) \leq l\delta_2 \right] \geq 1 - \frac{\alpha_1}{l\delta_2}$.

Now, for any $\bm{x}$ with $\mu_M(\mathbb{S},\bm{x}) \leq l\delta_2$, we have
\[
\textnormal{P}_{\bm{d}}^M\left[\vee_{j=1}^M h(\bm{a}^*(\mathbb{S}),\bm{f}(\bm{x},\bm{d}_j)) < \gamma h(\bm{a}^*(\mathbb{S}),\bm{x})-\xi^*(\mathbb{S})\right] \leq l\delta_2,\]
which implies $
\textnormal{P}_{\bm{d}}^M\left[\wedge_{j=1}^M h(\bm{a}^*(\mathbb{S}),\bm{f}(\bm{x},\bm{d}_j)) \geq \gamma h(\bm{a}^*(\mathbb{S}),\bm{x})-\xi^*(\mathbb{S})\right] \geq 1-l\delta_2$.

\textbf{Step 3: Application of Hoeffding's Inequality}

For fixed $(\bm{a},\bm{x},\xi)$, define the true violation probability,
\[
\mu(\bm{a},\bm{x},\xi) := \textnormal{P}_{\bm{d}}\left[h(\bm{a},\bm{f}(\bm{x},\bm{d})) < \gamma h(\bm{a},\bm{x})-\xi\right]
\]
and the empirical violation probability based on fresh disturbances,
\[
\hat{\mu}_M(\bm{a},\bm{x},\xi,\{\bm{d}_j\}_{j=1}^M) := \frac{1}{M}\sum_{j=1}^M 1_{\{h(\bm{a},\bm{f}(\bm{x},\bm{d}_j)) < \gamma h(\bm{a},\bm{x})-\xi\}}(\bm{d}_j).
\]

By Hoeffding's inequality, for any fixed $(\bm{a},\bm{x},\xi)$, we have 
\[
\textnormal{P}_{\bm{d}}^M\left[\mu(\bm{a},\bm{x},\xi) \geq \hat{\mu}_M(\bm{a},\bm{x},\xi,\{\bm{d}_j\}_{j=1}^M) + \alpha_2\right] \leq \exp(-2M\alpha_2^2).
\]

Setting $M \geq \frac{1}{2\alpha_2^2}\ln\frac{1}{(1-l)\delta_2}$ ensures that for $(\bm{a}^*(\mathbb{S}),\bm{x},\xi^*(\mathbb{S}))$,
\[
\textnormal{P}_{\bm{d}}^M\left[\mu(\bm{a}^*(\mathbb{S}),\bm{x},\xi^*(\mathbb{S})) \geq  \hat{\mu}_M(\bm{a}^*(\mathbb{S}),\bm{x},\xi^*(\mathbb{S}),\{\bm{d}_j\}_{j=1}^M) + \alpha_2\right] \leq (1-l)\delta_2.
\]
This holds since $\{\bm{d}_j\}_{j=1}^M \in \mathbb{D}^M$ is independent of $\mathbb{S}$, so $(\bm{a}^*(\mathbb{S}),\xi^*(\mathbb{S}))$ is independent of $\{\bm{d}_j\}_{j=1}^M$.

Let
\begin{enumerate}
    \item $\mathbb{E}_1$: The event that $\wedge_{j=1}^M h(\bm{a}^*(\mathbb{S}),\bm{f}(\bm{x},\bm{d}_j)) \geq \gamma h(\bm{a}^*(\mathbb{S}),\bm{x})-\xi^*(\mathbb{S})$. On this event, $\hat{\mu}_M(\bm{a}^*(\mathbb{S}),\bm{x},\xi^*(\mathbb{S}),\{\bm{d}_j\}_{j=1}^M) = 0$.
    \item $\mathbb{E}_2$: The event that $\mu(\bm{a}^*(\mathbb{S}),\bm{x},\xi^*(\mathbb{S})) < \hat{\mu}_M(\bm{a}^*(\mathbb{S}),\bm{x},\xi^*(\mathbb{S}),\{\bm{d}_j\}_{j=1}^M) + \alpha_2$.
\end{enumerate}

We have $\textnormal{P}_{\bm{d}}^M[\mathbb{E}_1] \geq 1-l\delta_2$ and $\textnormal{P}_{\bm{d}}^M[\mathbb{E}_2] \geq 1-(1-l) \delta_2$. By the union bound, we have
\[
\textnormal{P}_{\bm{d}}^M[\mathbb{E}_1 \cap \mathbb{E}_2] \geq 1 - \delta_2.
\]
Since $\hat{\mu}_M(\bm{a}^*(\mathbb{S}),\bm{x},\xi^*(\mathbb{S}),\{\bm{d}_j\}_{j=1}^M) = 0$ for $\{\bm{d}_j\}_{j=1}^M \in \mathbb{E}_1$, we have
\begin{equation}
\label{17}
\begin{split}
\textnormal{P}_{\bm{d}}^M\left[
\begin{split}
&\mu(\bm{a}^*(\mathbb{S}),\bm{x},\xi^*(\mathbb{S})) < \alpha_2+\hat{\mu}_M(\bm{a}^*(\mathbb{S}),\bm{x},\xi^*(\mathbb{S}),\{\bm{d}_j\}_{j=1}^M) \\
&\wedge \hat{\mu}_M(\bm{a}^*(\mathbb{S}),\bm{x},\xi^*(\mathbb{S}),\{\bm{d}_j\}_{j=1}^M)=0
\end{split}
\right] \geq 1 - \delta_2.
\end{split}
\end{equation}

\oomit{
\textcolor{red}{Actually, according to \eqref{17}, we can obtain $\mu(\bm{a}^*(\mathbb{S}),\bm{x},\xi^*(\mathbb{S})) < \alpha_2$ holds  deterministically with respect to $\textnormal{P}_{\bm{d}}^M$ for $\bm{x}$ satisfying $\mu_M(\mathbb{S},\bm{x}) \leq l \delta_2$: since $1-\delta_2>0$, there exists at least one $\{\bm{d}_j\}_{j=1}^M \in \mathbb{D}^M$ such that $\mu(\bm{a}^*(\mathbb{S}),\bm{x},\xi^*(\mathbb{S})) < \alpha_2+\hat{\mu}_M(\bm{a}^*(\mathbb{S}),\bm{x},\xi^*(\mathbb{S}),\{\bm{d}_j\}_{j=1}^M) \wedge \hat{\mu}_M(\bm{a}^*(\mathbb{S}),\bm{x},\xi^*(\mathbb{S}),\{\bm{d}_j\}_{j=1}^M)=0$ holds, implying $\mu(\bm{a}^*(\mathbb{S}),\bm{x},\xi^*(\mathbb{S})) < \alpha_2$ holds deterministically. However, I proceed with the proof (the following proof is logically correct) in order to derive the full nested structure and to connect it with the conclusion in \cite{xue2020pac}, as clarified in Remark 4. }
}

Therefore, we conclude 
\begin{equation}
\label{delta_2}
\textnormal{P}_{\bm{d}}^M\left[\mu(\bm{a}^*(\mathbb{S}),\bm{x},\xi^*(\mathbb{S})) < \alpha_2\right] \geq 1 - \delta_2.
\end{equation}

\textbf{Step 4. Conclusion.} \eqref{delta_2} holds for any $\bm{x}$ with $\mu_M(\mathbb{S},\bm{x}) \leq l \delta_2$. 
Since $\textnormal{P}_{\bm{x}}[\mu_M(\mathbb{S},\bm{x}) \leq l\delta_2] \geq 1 - \frac{\alpha_1}{l\delta_2}$, we conclude that for $\mathbb{S} \in \mathbb{E}_{\text{outer}}$,
\[
\textnormal{P}_{\bm{x}}\left[\textnormal{P}_{\bm{d}}^M\left[\mu(\bm{a}^*(\mathbb{S}),\bm{x},\xi^*(\mathbb{S})) < \alpha_2\right] \geq 1 - \delta_2\right] \geq 1 - \frac{\alpha_1}{l\delta_2}.
\]
Thus, we obtain $\textnormal{P}_{\mathbb{S}}\left[\textnormal{P}_{\bm{x}}\left[\textnormal{P}_{\bm{d}}^M\left[\mu(\bm{a}^*(\mathbb{S}),\bm{x},\xi^*(\mathbb{S})) < \alpha_2\right] \geq 1 - \delta_2\right] \geq 1 - \frac{\alpha_1}{l\delta_2}\right]\geq 1-\delta_1$ and further
\begin{equation*}
\begin{split}
&\textnormal{P}_{\mathbb{S}}\left[\textnormal{P}_{\bm{x}}\left[
\begin{split}
&\textnormal{P}_{\bm{d}}^M\left[
\begin{split}
&\textnormal{P}_{\bm{d}}\left[
\begin{split}
&h(\bm{a}^*(\mathbb{S}),\bm{f}(\bm{x},\bm{d})) \\
&\geq \gamma h(\bm{a}^*(\mathbb{S}),\bm{x})-\xi^*(\mathbb{S})
\end{split}
\right] \\
&\geq 1-\alpha_2 
\end{split}
\right]\\
&\geq 1-\delta_2
\end{split}
\right] \geq 1 - \frac{\alpha_1}{l\delta_2}\right]\geq 1-\delta_1.
\end{split}
\end{equation*}

Further, since $\textnormal{P}_{\bm{d}}\left[ h(\bm{a}^*(\mathbb{S}),\bm{f}(\bm{x},\bm{d})) \geq \gamma h(\bm{a}^*(\mathbb{S}),\bm{x})-\xi^*(\mathbb{S})\right] \geq 1-\alpha_2$ is a deterministic event with respect to $\textnormal{P}_{\bm{d}}^M$, it follows that \[\textnormal{P}_{\bm{d}}^M[\textnormal{P}_{\bm{d}}\left[ h(\bm{a}^*(\mathbb{S}),\bm{f}(\bm{x},\bm{d})) \geq \gamma h(\bm{a}^*(\mathbb{S}),\bm{x})-\xi^*(\mathbb{S})\right] \geq 1- \alpha_2]\in \{0,1\}.\] Moreover, since $\textnormal{P}_{\bm{d}}^M[\textnormal{P}_{\bm{d}}\left[ h(\bm{a}^*(\mathbb{S}),\bm{f}(\bm{x},\bm{d})) \geq \gamma h(\bm{a}^*(\mathbb{S}),\bm{x})-\xi^*(\mathbb{S})\right] \geq 1- \alpha_2] > 0$ (as $1-\delta_2 > 0$), it must be that $\textnormal{P}_{\bm{d}}\left[ h(\bm{a}^*(\mathbb{S}),\bm{f}(\bm{x},\bm{d})) \geq \gamma h(\bm{a}^*(\mathbb{S}),\bm{x})-\xi^*(\mathbb{S})\right] \geq 1- \alpha_2$ holds deterministically with respect to $\textnormal{P}_{\bm{d}}^M$ for such $\bm{x}$ with $\mu_M(\mathbb{S},\bm{x}) \leq l \delta_2$. As a result, we obtain 
\begin{equation*}
\begin{split}
\textnormal{P}_{\mathbb{S}}\left[\textnormal{P}_{\bm{x}}\left[
\begin{split}
&\textnormal{P}_{\bm{d}}\left[h(\bm{a}^*(\mathbb{S}),\bm{f}(\bm{x},\bm{d})) \geq  \gamma h(\bm{a}^*(\mathbb{S}),\bm{x})-\xi^*(\mathbb{S})
\right]\\
&\geq 1-\alpha_2
  \end{split}
  \right] \geq 1 - \frac{\alpha_1}{l\delta_2}\right]\geq 1-\delta_1.
\end{split}
\end{equation*}

The proof is completed with $\xi^*(\mathbb{S})=0$. \qed
\end{proof}

\subsection{The Proof of Theorem \ref{thm:main-vc-final-sbc1}}
\label{appendix:a3}

\begin{theorem}[PAC Safety Certification III Based on SBC]
Fix $\delta_1 \in (0,1)$, $\delta_2 \in (0,1)$, $\alpha_1 \in (0,1)$, $\alpha_2 \in (0,1)$, and $l\in (0,1)$, where $\alpha_1<l\delta_2$, 
and let $N \geq \frac{2}{\alpha_1}\left(\ln\frac{1}{\delta_1} + m+1\right)$, and $M \geq \frac{U_a^2}{2\tau^2}\ln\frac{1}{(1-l)\delta_2}$.


Let $(\bm{a}^*(\mathbb{S}), \lambda^*(\mathbb{S}))$ be the solution obtained by solving~\eqref{eq:sbc_linear3}. Then, we have 
$\textnormal{P}_{\mathbb{S}}\left[\textnormal{P}_{\bm{x}}\left[
\textnormal{E}_{\bm{d}}[h(\bm{a}^*(\mathbb{S}),\bm{f}(\bm{x},\bm{d}))] \leq  h(\bm{a}^*(\mathbb{S}),\bm{x})+\lambda^*(\mathbb{S})
\right] \geq 1 - \frac{\alpha_1}{l\delta_2}\right]\geq 1-\delta_1$,
and thus, $\textnormal{P}_{\mathbb{S}}\left[\textnormal{P}_{\bm{x}}\left[
\textnormal{P}_{\bm{d}}[\bm{f}(\bm{x},\bm{d})\in \mathbb{X}]\geq 1-h(\bm{a}^*(\mathbb{S}),\bm{x})-\lambda^*(\mathbb{S})
\right] \geq 1 - \frac{\alpha_1}{l\delta_2}\right]\geq 1-\delta_1$.
\end{theorem}

\begin{proof}
The proof proceeds in three main steps.

\textbf{Step 1: Outer Layer Guarantee via Scenario Approaches}

According to Proposition \ref{prop:pac}, we have that, with probability at least $1-\delta_1$ over the sample set $\mathbb{S}$, the solution satisfies:
\[
\textnormal{P}_{\bm{x},\mathbb{D}^M}\left[\frac{1}{M}\sum_{j=1}^M h(\bm{a}^*(\mathbb{S}),\bm{f}(\bm{x},\bm{d}_j)) > h(\bm{a}^*(\mathbb{S}),\bm{x}) + \lambda^*(\mathbb{S})-\tau\right] \leq \alpha_1,\]
where $\{\bm{d}_j\}_{j=1}^M \in \mathbb{D}^M$ denotes a fresh, independent sample of $M$ disturbances, and $\textnormal{P}_{\bm{x},\mathbb{D}^M}:=\textnormal{P}_{\bm{x}}\times \textnormal{P}_{\bm{d}}^M$. Let $\mathbb{E}_{\text{outer}}$ denote this event. We have $\textnormal{P}_{\mathbb{S}}[\mathbb{E}_{\text{outer}}] \geq 1-\delta_1$. Define for any $\mathbb{S}$ and $\bm{x}$,
\[
\mu_M(\mathbb{S},\bm{x}): = \textnormal{P}_{\bm{d}}^M\left[\frac{1}{M}\sum_{j=1}^M h(\bm{a}^*(\mathbb{S}),\bm{f}(\bm{x},\bm{d}_j)) > h(\bm{a}^*(\mathbb{S}),\bm{x}) + \lambda^*(\mathbb{S})-\tau\right].
\]
The outer guarantee implies that for $\mathbb{S} \in  \mathbb{E}_{\text{outer}}$, $
\textnormal{E}_{\bm{x}}[\mu_M(\mathbb{S},\bm{x})] \leq \alpha_1$ holds.

\textbf{Step 2: Decomposition on $\textnormal{P}_{\bm{x},\mathbb{D}^M}$ Based on Markov's Inequality.}
Consider any $\mathbb{S} \in  \mathbb{E}_{\text{outer}}$. By applying Markov's inequality to $\mu_M(\mathbb{S},\bm{x})$, we obtain
\[
\textnormal{P}_{\bm{x}}\left[\mu_M(\mathbb{S},\bm{x}) \geq  l \delta_2\right] \leq \frac{\textnormal{E}_{\bm{x}}[\mu_M(\mathbb{S},\bm{x})]}{l\delta_2} \leq \frac{\alpha_1}{l\delta_2}.
\]
This implies $\textnormal{P}_{\bm{x}}\left[\mu_M(\mathbb{S},\bm{x}) \leq l \delta_2\right] \geq 1 - \frac{\alpha_1}{l\delta_2}$.

Now, for any $\bm{x}$ with $\mu_M(\mathbb{S},\bm{x}) \leq l \delta_2$, we have
\[
\textnormal{P}_{\bm{d}}^M\left[\frac{1}{M}\sum_{j=1}^M h(\bm{a}^*(\mathbb{S}),\bm{f}(\bm{x},\bm{d}_j)) > h(\bm{a}^*(\mathbb{S}),\bm{x}) + \lambda^*(\mathbb{S})-\tau\right] \leq l \delta_2,
\]
which implies
\[
\textnormal{P}_{\bm{d}}^M\left[\frac{1}{M}\sum_{j=1}^M h(\bm{a}^*(\mathbb{S}),\bm{f}(\bm{x},\bm{d}_j)) \leq h(\bm{a}^*(\mathbb{S}),\bm{x}) + \lambda^*(\mathbb{S})-\tau\right] \geq 1-l \delta_2.
\]

\textbf{Step 3: Application of Hoeffding's Inequality} 

For fixed $(\bm{a},\bm{x})$, define the empirical mean:
\[
\hat{\mu}_M(\bm{a},\bm{x},\{\bm{d}_j\}_{j=1}^M): = \frac{1}{M}\sum_{j=1}^M h(\bm{a},\bm{f}(\bm{x},\bm{d}_j))
\]
and the true expectation: $
\mu(\bm{a},\bm{x}) := \textnormal{E}_{\bm{d}}[h(\bm{a},\bm{f}(\bm{x},\bm{d}))]$.

By Hoeffding's inequality, for any fixed $(\bm{a},\bm{x})$, we have
\[
\textnormal{P}_{\bm{d}}^M\left[\mu(\bm{a},\bm{x}) \geq  \hat{\mu}_M(\bm{a},\bm{x},\{\bm{d}_j\}_{j=1}^M) + \tau\right] \leq \exp(\frac{-2M\tau^2}{U_a^2}).
\]

Setting $M \geq \frac{U_a^2}{2\tau^2}\ln\frac{1}{(1-l)\delta_2}$ ensures that for ($\bm{a}^*(\mathbb{S}),\bm{x})$, 
\[
\textnormal{P}_{\bm{d}}^M\left[\mu(\bm{a}^*(\mathbb{S}),\bm{x}) \geq  \hat{\mu}_M(\bm{a}^*(\mathbb{S}),\bm{x},\{\bm{d}_j\}_{j=1}^M) + \tau\right] \leq (1-l)\delta_2
\] holds. This holds since $\{\bm{d}_j\}_{j=1}^M$ is independent of $\mathbb{S}$, so $\bm{a}^*(\mathbb{S})$ is independent of $\{\bm{d}_j\}_{j=1}^M$.

Let
\begin{enumerate}
    \item $\mathbb{E}_1$: The event that $\frac{1}{M}\sum_{j=1}^M h(\bm{a}^*(\mathbb{S}),\bm{f}(\bm{x},\bm{d}_j)) \leq h(\bm{a}^*(\mathbb{S}),\bm{x}) + \lambda^*(\mathbb{S}))-\tau$.
    \item $\mathbb{E}_2$: The event that $\mu(\bm{a}^*(\mathbb{S}),\bm{x}) < \hat{\mu}_M(\bm{a}^*(\mathbb{S}),\bm{x},\{\bm{d}_j\}_{j=1}^M) + \tau$.
\end{enumerate}

We have $\textnormal{P}_{\bm{d}}^M[\mathbb{E}_1] \geq 1-l \delta_2$ and $\textnormal{P}_{\bm{d}}^M[\mathbb{E}_2] \geq 1-(1-l) \delta_2$. By the union bound, we obtain
\[
\textnormal{P}_{\bm{d}}^M[\mathbb{E}_1 \cap \mathbb{E}_2] \geq 1 - \delta_2.
\]

On the event $\mathbb{E}_1 \cap \mathbb{E}_2$, we have
\[
\mu(\bm{a}^*(\mathbb{S}),\bm{x}) < \hat{\mu}_M(\bm{a}^*(\mathbb{S}),\bm{x},\{\bm{d}_j\}_{j=1}^M) + \tau \leq h(\bm{a}^*(\mathbb{S}),\bm{x}) + \lambda^*(\mathbb{S}).
\]

Therefore, we obtain
\begin{equation*}
\textnormal{P}_{\bm{d}}^M\left[
\begin{split}
   & \textnormal{E}_{\bm{d}}[h(\bm{a}^*(\mathbb{S}),\bm{f}(\bm{x},\bm{d}))] <  \hat{\mu}_M(\bm{a}^*(\mathbb{S}),\bm{x},\{\bm{d}_j\}_{j=1}^M) + \tau \wedge \\
   &\hat{\mu}_M(\bm{a}^*(\mathbb{S}),\bm{x},\{\bm{d}_j\}_{j=1}^M) + \tau\leq  h(\bm{a}^*(\mathbb{S}),\bm{x}) + \lambda^*(\mathbb{S})
    \end{split}
    \right] \geq 1 - \delta_2,
\end{equation*}
implying 
\begin{equation}
\label{E-delta_2}
\textnormal{P}_{\bm{d}}^M\left[\textnormal{E}_{\bm{d}}[h(\bm{a}^*(\mathbb{S}),\bm{f}(\bm{x},\bm{d}))] \leq   h(\bm{a}^*(\mathbb{S}),\bm{x}) + \lambda^*(\mathbb{S})\right] \geq 1 - \delta_2.
\end{equation}

Since $\textnormal{P}_{\bm{d}}^M[\mathbb{E}_1 \cap \mathbb{E}_2] > 0$ (as $1-\delta_2 > 0$), we obtain $\textnormal{E}_{\bm{d}}[h(\bm{a}^*(\mathbb{S}),\bm{f}(\bm{x},\bm{d}))] \leq  h(\bm{a}^*(\mathbb{S}),\bm{x}) + \lambda^*(\mathbb{S})$ holds deterministically with respect to $\textnormal{P}_{\bm{d}}^M$ for such $\bm{x}$ satisfying $\mu_M(\mathbb{S},\bm{x}) \geq  l \delta_2$.

\textbf{Step 4: Conclusion.}  Since $\textnormal{E}_{\bm{d}}[h(\bm{a}^*(\mathbb{S}),\bm{f}(\bm{x},\bm{d}))] \leq  h(\bm{a}^*(\mathbb{S}),\bm{x}) + \lambda^*(\mathbb{S})$ holds deterministically with respect to $\textnormal{P}_{\bm{d}}^M$ for such $\bm{x}$ satisfying $\mu_M(\mathbb{S},\bm{x}) \geq  l \delta_2$ and $\textnormal{P}_{\bm{x}}[\mu_M(\mathbb{S},\bm{x}) \leq l \delta_2] \geq 1 - \frac{\alpha_1}{l\delta_2}$, we conclude that for $\mathbb{S} \in  \mathbb{E}_{\text{outer}}$,
\[
\textnormal{P}_{\bm{x}}\left[\textnormal{E}_{\bm{d}}[h(\bm{a}^*(\mathbb{S}),\bm{f}(\bm{x},\bm{d}))] \leq h(\bm{a}^*(\mathbb{S}),\bm{x}) + \lambda^*(\mathbb{S})\right] \geq 1 - \frac{\alpha_1}{l\delta_2}.
\]

Since $\textnormal{P}_{\mathbb{S}}[\mathbb{E}_{\text{outer}}] \geq 1-\delta_1$, the overall probability is at least $1-\delta_1$. Also, based on Proposition \ref{pro_one}, we have the conclusion.\qed
\end{proof}
\section{Additional Experimental Results and Implementation Details}
\label{sec:app_ex}

\subsection{Implementation of the Sum-of-Squares Programming Baselines}
\label{sec:app_sos}

In this subsection, we present the details of the RBC-SOS and SBC-SOS baseline methods based on sum-of-squares programming. By leveraging the SOS decomposition for multivariate polynomials, constraints \eqref{eq:rbf} in Definition \eqref{def:rbc} and \eqref{eq:sbf} in Definition \ref{def:sbf} can be recast as semidefinite programmings, which can be solved efficiently in polynomial time via interior-point methods. However, constraints~\eqref{eq:rbf1} and ~\eqref{eq:sbf2} require the inequalities to hold for all $\bm{x}\in \mathbb{R}^n\setminus\mathbb{X}$, which is overly stringent in practice. To address this issue, we introduce relaxed formulations of the constraints that are better suited to SOS programming:
\begin{definition}
    Given $\gamma \in (0,1)$ and a safe set $\mathbb{X}$, a function $h(\cdot):\mathbb{R}^n\rightarrow\mathbb{R}$ is a RBC for the system 
    \eqref{eq:system} if the following inequalities hold:
    \begin{subequations}
    \label{eq:rbf_hat}
    \begin{empheq}[left=\empheqlbrace]{align}
        &h(\bm{x}) < 0, & \forall \bm{x} \in \widehat{\mathbb{X}}\setminus\mathbb{X},  \\
        &h(\bm{x}) \geq 0, &\forall \bm{x}\in \mathbb{X},  \\
        &h(\bm{f}(\bm{x},\bm{d})) \geq \gamma h(\bm{x}), & \forall \bm{x} \in \mathbb{X}, ~\forall \bm{d} \in \mathbb{D}. 
    \end{empheq}
    \end{subequations}
    where $\widehat{\mathbb{X}}$ is a set containing the union of the set $\mathbb{X}$ and all reachable states starting from $\mathbb{X}$ within one step, i.e., 
\begin{equation}
\label{eq:x_hat}
\widehat{\mathbb{X}}\supset \{\bm{y}\in \mathbb{R}^n\mid \bm{y}=\bm{f}(\bm{x},\bm{d}), \bm{x}\in \mathbb{X}, \bm{d}\in \mathbb{D}\}\cup \mathbb{X}.
\end{equation}
\end{definition}
\begin{definition}[\cite{xue2024finite}]
    \label{def:sbf_hat}
    Given a safe set $\mathbb{X}$, a function $h(\cdot):\mathbb{R}^n\rightarrow\mathbb{R}$ is a SBC for the system 
    \eqref{eq:system} if there exists a scalar $\lambda \in [0,1]$ such that the following inequalities hold:
    \begin{subequations}
    \label{eq:sbf_hat}
    \begin{empheq}[left=\empheqlbrace]{align}
        &h(\bm{x}) \geq 0, &\forall \bm{x}\in \mathbb{X}, \\
        &h(\bm{x}) \geq 1, & \forall \bm{x} \in \widehat{\mathbb{X}}\setminus\mathbb{X},  \\
        &\mathbb{E}_{\bm{d}}[h(\bm{f}(\bm{x},\bm{d}))] - h(\bm{x}) \leq \lambda, &\forall \bm{x} \in \mathbb{X},
    \end{empheq}
    \end{subequations}
    where $\widehat{\mathbb{X}}$ is a set satisfying \eqref{eq:x_hat}.
\end{definition}

Compared with constraints \eqref{eq:rbf} and \eqref{eq:sbf}, constraints \eqref{eq:rbf_hat} and \eqref{eq:sbf_hat} replace $\mathbb{R}^n$ with the set $\widehat{\mathbb{X}}$. This substitution substantially relaxes the constraints and makes them easier to satisfy. Assume that $\mathbb{X}=\{\bm{x}\in\mathbb{R}^n \mid g(\bm{x}) \leq  0\}$ and $\widehat{\mathcal{X}} =\{\bm{x}\in\mathbb{R}^n \mid \hat{g}(\bm{x}) \leq  0\}$, where $g(\bm{x})$ and $\hat{g}(\bm{x})$ are polynomial functions over $\bm{x} \in \mathbb{R}^n$, and that the disturbance set is $\mathbb{D} = \{\bm{d}\in\mathbb{R}^d \mid p(\bm{d}) \leq 0\}$, where $p(\bm{d})$ is a known polynomial function. Under these assumptions, constraint \eqref{eq:rbf_hat} can be recast as an SOS programming in \eqref{eq:rbc_sos}, corresponding to the RBC-SOS baseline, while constraint \eqref{eq:sbf_hat} can be recast as an SOS programming in \eqref{eq:sbc_sos}, corresponding to the SBC-SOS baseline. In \eqref{eq:rbc_sos} and \eqref{eq:sbc_sos}, $\sum[\bm{y}]$ denotes the set of sum-of-squares polynomials over variables $\bm{y}$, i.e., 
\[\sum[\bm{y}]=\{p\in \mathbb{R}[\bm{y}]\mid p=\sum_{i=1}^k q^2_i(\bm{y}), q_i(\bm{y})\in \mathbb{R}[\bm{y}],i=1,\ldots,k\},\] where $\mathbb{R}[\bm{y}]$ denotes the ring of polynomials in variables $\bm{y}$.
\begin{align}
    \label{eq:rbc_sos}
    &\min_{\xi,\bm{a},s_1(\bm{x},\bm{d}),s_2(\bm{x},\bm{d}),s_3(\bm{x}),s_4(\bm{x}),s_5(\bm{x})} \xi\\
    \text{s.t.}&\begin{cases}
     h(\bm{a},\bm{f}(\bm{x},\bm{d})) - \gamma h(\bm{a},\bm{x}) + \xi + s_1(\bm{x}, \bm{d}) g(\bm{x}) + s_2(\bm{x}, \bm{d}) p(\bm{d}) \in \sum [\bm{x},\bm{d}], \\
        -h(\bm{a},\bm{x}) - \epsilon_0 + s_3(\bm{x}) \hat{g}(\bm{x}) - s_4(\bm{x}) g(\bm{x}) \in \sum [\bm{x}],\\
        h(\bm{a},\bm{x}) + s_5(\bm{x}) g(\bm{x}) \in \sum[\bm{x}], \\
        s_1(\bm{x}, \bm{d}),s_2(\bm{x},\bm{d}) \in \sum [\bm{x},\bm{d}]; s_3(\bm{x}),s_4(\bm{x}),s_5(\bm{x}) \in \sum [\bm{x}],\\
    \end{cases}\notag
\end{align}

\begin{align}
    \label{eq:sbc_sos}
    &\min_{\lambda,\bm{a},s_i(\bm{x}),i=1,\ldots,4} \lambda + \frac{1}{N_o}\textstyle\sum_{k=1}^{N_o}h(\bm{a},\bm{x}'_k)\\
    \text{s.t.}&
    \begin{cases}
       h(\bm{a},\bm{x}) + s_1(\bm{x}) g(\bm{x}) \in \sum [\bm{x}],\\
       h(\bm{a},\bm{x}) - 1 + s_2(\bm{x}) \hat{g}(\bm{x}) -s_3(\bm{x}) g(\bm{x}) \in \sum[\bm{x}],\\
       h(\bm{a},\bm{x}) + \lambda -\mathbb{E}_{\bm{d}}[h(\bm{a},\bm{f}(\bm{x},\bm{d}))] + s_4(\bm{x}) g(\bm{x}) \in \sum [\bm{x}],\\
       s_1(\bm{x}),s_2(\bm{x}),s_3(\bm{x}),s_4(\bm{x}) \in \sum [\bm{x}],\\
        \lambda \in [0,1],
    \end{cases}\notag
\end{align}

For numerical stability, we constrain each coefficient in $\bm{a}$ to lie within the interval $[-100, 100]$. In \eqref{eq:rbc_sos}, we set $\epsilon_0 = 10^{-6}$ to ensure strict satisfaction of the inequality. The objective in \eqref{eq:sbc_sos} matches that of \eqref{eq:sbc_linear3} to ensure a fair comparison. In the experiments, we keep the degree of $h(\bm{a},\bm{x})$ in the SOS-based methods as close as possible to the degree used in our methods. If the SOS solver encounters numerical issues or exceeds the time limit, we adjust the degree of $h(\bm{a},\bm{x})$ accordingly.

\subsection{Experiments Settings and Hyperparameters}
\label{sec:app_setting}
In this subsection, we describe the experimental setup and hyperparameter configurations used in Tab. \ref{tab:results}. 

All experiments were run on a system with Ubuntu 22.04, a 72-core Intel Xeon Gold 6154 CPU (3.00 GHz), an NVIDIA A100 GPU, and 512 GB of RAM. Linear programs were solved using Gurobi 11.0.3 \cite{gurobi}, while sum-of-squares programs were handled with the YALMIP toolbox in MATLAB  \cite{lofberg2004yalmip} and the Mosek 10.1.21 solver \cite{aps2019mosek}. We impose a computation time limit of 0.5 hour for all examples.

The specific settings for examples \ref{ex:arch}-\ref{ex:lorenz} are shown below.


\setcounter{example}{1}
\begin{example}
\label{ex:arch}
    Consider the following system adapted from \cite{sogokon2016non}:
\begin{equation*}
    \begin{cases}
x(t+1)=x(t)+0.01\big(x(t)-x^3(t)+y(t)-x(t)y^2(t)+d_1(t)\big),\\
y(t+1)=y(t)+0.01\big(-x(t) + y(t) - x^2(t)y(t) - y^3(t) + d_2(t)\big),
\end{cases}
\end{equation*}
where $d_1(t)$ and $d_2(t)$ are independent and follow the uniform distribution on the interval $[-0.5, 0.5]$. The safe set is $\mathbb{X}=\{\,(x,y)\mid -3 \leq x,y \leq 3 \,\}$.
\end{example}


\begin{example}
\label{ex:stable}
    Consider the following system adapted from \cite{ben2015stability}:
\begin{equation*}
\begin{cases}
x(t+1)=x(t)+0.01\big(-x(t)+y(t) -z(t)-x(t)d_1(t)\big),\\
y(t+1)=y(t)+0.01\big(-x(t)(z(t)+1)-y(t)-y(t)d_2(t)\big),\\
z(t+1)=z(t)+0.01\big(0.76524x(t)-4.7037z(t)-z(t)d_3(t)\big),\\
\end{cases}
\end{equation*}
where $d_1(t)$, $d_2(t)$, and $d_3(t)$ are independent and follow uniform distributions on $[1, 2]$, $[1,2]$, and $[2,3]$, respectively. The safe set is $\mathbb{X}=\{\,(x,y,z)\mid -1 \leq x,y,z \leq 1\,\}$.
\end{example}

\begin{example}
    \label{ex:4d}
    Consider a four dimensional system:
    \begin{equation*}
        \begin{cases}
            x_1(t+1) = x_1(t) + 0.01(-x_1(t) + d(t)),\\
            x_2(t+1) = x_2(t) + 0.01(x_1(t)-2x_2(t)),\\
            x_3(t+1) = x_3(t) + 0.01(x_1(t)-4x_3(t)),\\
            x_4(t+1) = x_4(t) + 0.01(x_1(t)-3x_4(t)),
\end{cases}
    \end{equation*}
    where $d(t)$ follows a $\text{Beta}(10,10)$ distribution. The safe set is $\mathbb{X}=\{\,(x_1,\dots,x_4)^\top\mid \sum_{i=1}^4 x_i^2 \leq 1\,\}$.
\end{example}

\begin{example}
    \label{ex:6d} Consider a model adapted from \cite{edwards2024fossil} with a discrete time $\tau_s=0.01$,
    \begin{equation*}
        \begin{cases}
        x_1(t+1)=x_1(t) + \tau_s(x_2(t)x_4(t)-x_1^3(t)),\\
        x_2(t+1)=x_2(t) + \tau_s(-3x_1(t)x_4(t)-x_2^3(t)),\\
        x_3(t+1)=x_3(t) + \tau_s(-x_3(t)-3x_1(t)x_4^3(t)),\\
        x_4(t+1)=x_4(t) + \tau_s(-x_4(t)+x_1(t)x_3(t)),\\
        x_5(t+1)=x_5(t) + \tau_s(-x_5(t)+x_6^3(t)),\\
        x_6(t+1)=x_6(t) + \tau_s(-x_5(t)-x_6(t) + x_3^4(t) - x_6(t) d(t)),\\
        \end{cases}
    \end{equation*}
    with the safe set $\mathbb{X}=\{\,(x_1,\ldots,x_6)\mid \sum_{i=1}^6 x_i^2-1 \leq 0\,\}$. The disturbance $d(t)$ follows the uniform distribution on the interval $[0.5, 1.0]$.
\end{example}

\setcounter{example}{6}
\begin{example}
\label{ex:pendulum}
    Consider a discrete-time system describing a whirling pendulum \cite{chesi2009estimating}:
\begin{equation*}
\begin{cases}
x(t+1)=x(t) + 0.1y(t),\\
y(t+1)=y(t) + 0.1(-\frac{2y(t)}{d(t)}+0.81\sin(x(t))\cos(x(t)) - \sin(x(t))),
\end{cases}
\end{equation*}
 where $d(\cdot)$ is uniformly distributed over $[0.9,1.1]$. The safe set is $\mathbb{X}=\{\,(x,y)\mid -1\leq x,y \leq 1\,\}$.
\end{example}


\begin{example}
    \label{ex:sank} Consider a model adapted from \cite{sankaranarayanan2013lyapunov} with a discrete time $0.01$,
    \begin{equation*}
        \begin{cases}
        x_1(t+1)=x_1(t) + \tau_s(-x_1(t) + x_2^3(t) - 3x_3(t)x_4(t) + d(t)),\\
        x_2(t+1)=x_2(t) + \tau_s(-x_1(t) - x_2^3(t)),\\
        x_3(t+1)=x_3(t) + \tau_s(x_1(t)x_4(t)-x_3(t)),\\
        x_4(t+1)=x_4(t) + \tau_s(x_1(t)x_3(t)-x_4^3(t)),\\
        \end{cases}
    \end{equation*}
    with $\tau_s=0.01$ and the safe set $\mathbb{X}=\{\,(x_1,x_2,x_3,x_4)^{\top}\mid \sum_{i=1}^4 x_i^2 \leq 1\,\}$. The disturbance $d(t)$ follows the uniform distribution on the interval $[-1,1]$.
\end{example} 

\begin{example}
\label{ex:lorenz}
    Consider a 7-dimensional Lorenz model adapted from \cite{lorenz1996predictability}:
\begin{equation*}
\begin{split}
    &x_i(t+1) = x_i(t) + \tau_s\Big(\big(x_{i+1}(t) - x_{i-2}(t)\big)x_{i-1}(t)- x_i(t) + d_i(t)\Big),~ i = 1,\ldots,7
\end{split}
\end{equation*}
with $\tau_s = 0.01$, $x_{0}(t) = x_7(t)$, $x_{-1}(t) = x_6(t)$, and $x_8(t) = x_1(t)$. The disturbance terms $d_1(t), \ldots, d_7(t)$ are independently and uniformly distributed over $[-1,1]$.
The safe set is defined as $\mathbb{X}=\{\,(x_1,\ldots,x_7)\mid -1 \leq x_1,\ldots,x_7 \leq 1\,\}$. 
\end{example}

All experiments reported in Tab. \ref{tab:results} are conducted with the random number seed set to 0. 
Next, we summarize the parameter settings used in the experiments reported in Tab. \ref{tab:results}. In all experiments, the number of state samples $N_o$ used in the objective is fixed at $1000$. For RBC-I and RBC-II, we set $\gamma$ in Definition \ref{def:rbc} to $0.01$, the constant $C$ in \eqref{barrier_form} to $-1$, and $\kappa$ in \eqref{eq:h_r} to 1. The upper bound $U_{\bm{a}}$ on the coefficients $\bm{a}$ is set to 10, and the upper bound $\overline{\xi}$ on $\xi$ is also set to 10. For SBC-III, the degree of $h_1(\bm{a},\bm{x})$ in \eqref{eq:h_p}, parameter $\tau$ used in \eqref{eq:sbc_linear3},  and the upper bound $U_{\bm{a}}$ on the coefficients $\bm{a}$ follow the specifications listed in Tab. \ref{tab:para_mc}. 
\vspace{-10pt}
\begin{table}[ht]
\centering
\caption{Degree of $h_1$, $\tau$, and $U_{\bm{a}}$ settings of SBC-III in Tab.~\ref{tab:results} experiments.}
\setlength{\tabcolsep}{1.5mm}{
    \begin{tabular}{*{11}{c}}
    \toprule
    Examples & \ref{ex:vinc} & \ref{ex:arch} & \ref{ex:stable} & \ref{ex:4d} & \ref{ex:6d} & \ref{ex:lotka} & \ref{ex:pendulum} & \ref{ex:sank} & \ref{ex:lorenz}  \\ \midrule
    Degree & 2 & 2 & 3 & 4 & 6 & 20 & 20 & 8 & 7\\
    $\tau$ & 0.01 & 0.01 & 0.01 & 0.01 & 0.01 & 0.02 & 0.02 & 0.02 & 0.02 \\
    $U_{\bm{a}}$ & 1.1 & 1.1 & 1.1 & 1.1 & 1.1 & 1.5 & 1.5 & 1.1 & 1.1 \\
    \bottomrule
\end{tabular}}
\label{tab:para_mc}
\end{table}
\vspace{-15pt}
\subsection{Additional Results}
\label{sec:app_results}
In this subsection, we present additional experimental results. We begin with Table~\ref{tab:lotka} from Example~\ref{ex:lotka}, which compares the SBC-III, SBC-SOS, and Monte Carlo methods. We then report the detailed optimization results obtained by SBC-SOS and SBC-III for the experiments reported in Tab.~\ref{tab:results}. 

\begin{table}[!htbp]
\centering
\caption{Comparison in Example \ref{ex:lotka}.}
\setlength{\tabcolsep}{2mm}{
    \begin{tabular}{*{5}{c}}
    \toprule
    Step & State & $\textnormal{P}_{\text{SBC-III}}(\bm{x})$ & $\textnormal{P}_{\text{SBC-SOS}}(\bm{x})$ & $\textnormal{P}_{\textit{MC}}(\bm{x})$ \\ \midrule
    0 & $(-0.8000,-0.5000)$ & 0.2993 & 0.0000 & 0.4448  \\
    1 & $(-0.8000,0.5902)$  & 0.5440 & 0.0230 & 0.6901  \\
    2 & $(0.0721,-0.8994)$  & 0.7107 & 0.4341 & 0.8337 \\
    3 & $(0.1009,0.8400)$   & 0.7558 & 0.6622 & 0.8980 \\
    4 & $(-0.0343,0.1541)$  & 0.9668 & 0.9987 & 1.0000 \\
    5 & $(-0.0119,-0.1044)$  & 0.9678 & 0.9984 & 1.0000 \\
\bottomrule
\end{tabular}}
\label{tab:lotka}
\end{table}

We first present Tab.~\ref{tab:lotka} from Example~\ref{ex:lotka}. In Tab.~\ref{tab:lotka}, $\textnormal{P}_{\text{SBC-III}}(\bm{x}) = 1-\lambda^*(\mathbb{S})-h(\bm{a}^*(\mathbb{S}),\bm{x})$ and $\textnormal{P}_{\text{SBC-SOS}}(\bm{x})=1-\lambda^*-h(\bm{a}^*,\bm{x})$ denote the results computed by the two methods, respectively. Moreover, $\textnormal{P}_{\textit{MC}}(\bm{x})$ denotes the Monte Carlo estimate, computed by generating $10^6$ successor samples for each state $\bm{x}$ and calculating the fraction that remains in the safe set $\mathbb{X}$.

The detailed optimization results obtained by SBC-SOS and SBC-III are listed in Tab.~\ref{tab:results_sbc}. In Tab.~\ref{tab:results_sbc}, $J^*(\mathbb{S})$ denotes the optimum of the optimization problem \eqref{eq:sbc_linear3}, i.e., $J^*(\mathbb{S}) = \lambda^*(\mathbb{S}) + \frac{1}{N_o}\textstyle\sum_{k=1}^{N_o}h(\bm{a}^*(\mathbb{S}),\bm{x}'_k)$. Similarly, $J^*$ denotes the optimum of \eqref{eq:sbc_sos}. $J^*(\mathbb{S})$ and $J^*$ indicate, to some extent, the conservatism of the resulting lower probability bounds $1-h(\bm{a}^*(\mathbb{S}),\bm{x})-\lambda^*(\mathbb{S})$ and $1-h(\bm{a}^*,\bm{x})-\lambda^*$, respectively, and smaller values are more desirable. As shown in Tab.~\ref{tab:results_sbc}, our method attains values of $J^*(\mathbb{S})$ comparable to or better than $J^*$ obtained by SBC-SOS, demonstrating its effectiveness.
\begin{table*}[!h]
\centering
\caption{Results of SBC-SOS and SBC-III in Table \ref{tab:results} Experiments}
\setlength{\tabcolsep}{5mm}{
    \begin{tabular}{*{5}{c}}
    \toprule
    \multirow{2}{*}{EX} & \multicolumn{2}{c}{SBC-SOS} & \multicolumn{2}{c}{SBC-III} \\ \cmidrule(lr){2-3} \cmidrule(lr){4-5}
    & $J^*$ & $\lambda^*$ & $J^*(\mathbb{S})$ & $\lambda^*(\mathbb{S})$  \\ \midrule
    \ref{ex:vinc}     & 0.1308 & 0.0008 & 0.0100 & 0.0100\\
    \ref{ex:arch}     & 0.0007 & 0.0003 & 0.0100 & 0.0100\\
    \ref{ex:stable}   & 0.1218 & 0.0031 & 0.0100 & 0.0100\\
    \ref{ex:4d}       & 0.2353 & 0.0034 & 0.0100 & 0.0100\\
    \ref{ex:6d}       & 0.4035 & 0.0000 & 0.0100 & 0.0100\\
    \ref{ex:lotka}    & 0.3256 & 0.0007 & 0.1760 & 0.0218\\
    \ref{ex:pendulum} & -      & -      & 0.2892 & 0.0790\\
    \ref{ex:sank}     & 0.4624 & 0.0000 & 0.1949 & 0.1870 \\
    \ref{ex:lorenz}   & 1.0000 & 0.0000 & 0.5141 & 0.3767 \\
\bottomrule
\end{tabular}}
\label{tab:results_sbc}
\end{table*}

\oomit{\begin{table}[!htbp]
\centering
\caption{Summary of sample size.}
\setlength{\tabcolsep}{2mm}{
    \begin{tabular}{*{12}{c}}
    \toprule
    \multirow{2}{*}{EX} & \multirow{2}{*}{n}  & {RBC-I} & \multicolumn{2}{c}{RBC-II} & \multicolumn{2}{c}{SBC-III}\\\cmidrule(lr){3-3} \cmidrule(lr){4-5} \cmidrule(lr){6-7}
    & & $N$ & $N$ & $M$ & $N$ & $M$ \\ \midrule
    \ref{ex:vinc} & 2     & 15053 & 3764  & 45 & 3764 & 1357\\
    \ref{ex:arch} & 2     & 15053 & 3764  & 45 & 3764 & 1357 \\
    \ref{ex:stable} & 3   & 18253 & 4564  & 45 & 4564 & 1357\\
    \ref{ex:4d} & 4       & 24653 & 6164  & 45 & 6164 & 1357\\
    \ref{ex:6d} & 6       & 63053 & 15764 & 45 & 15764 & 1357\\
    \ref{ex:lotka} & 2    & 15053 & 3764  & 45 & 27164 & 631\\
    \ref{ex:pendulum} & 2 & 15053 & 3764  & 45 & 27164 & 631\\
    \ref{ex:sank} & 4     & 24653 & 6164  & 45 & 19164 & 340\\
    \ref{ex:lorenz} & 6   & 63053 & 15764 & 45 & 28564 & 340\\
\bottomrule
\end{tabular}}
\label{tab:sample_size}
\end{table}
}



\subsection{Impact of Hyperparameters}
\label{sec:app_para}
In this part, we summarize the behavior of RBC-I, RBC-II, and SBC-III under different hyperparameter settings.

\paragraph{RBC-I under Varying Hyperparameters.} 
We use Example~\ref{ex:vinc} to examine how different hyperparameters affect the performance of RBC-I.
Fig.~\ref{fig:vinc_time} reports the computation times for different choices of $\alpha_1$ and $\alpha_2$. For all settings shown in Fig.~\ref{fig:vinc_time},  RBC-I yields $\xi^* = 0$, thereby providing a valid guarantee stated in Theorem~\ref{theo:robust}. Its computation time also remains consistently small.

Moreover, the influence of $\delta$ on the required sample size $N$ is mild. Hence, $\delta$ can typically be set extremely close to zero to obtain a very high confidence level, as in Tab.~\ref{tab:results} where $\delta = 10^{-6}$. In addition, due to the template structure in \eqref{barrier_form}, RBC-I does not require a high polynomial degree for $h_1(\bm{a},\bm{x})$. 
Other hyperparameters, including $\overline{\xi}$, $U_{\bm{a}}$, and $C$, have only a minor influence on both the guarantee results and the computation time.


\paragraph{RBC-II under Varying Hyperparameters.} We analyze the influence of hyperparameters on RBC-II using Example~\ref{ex:vinc}.
The values of $1-\frac{\alpha_1}{l\delta_2}$ and computation times under different $\alpha_1$ and $l$ are shown in Fig.~\ref{fig:vinc_nested_p} and \ref{fig:vinc_nested_t}, while the computation times under different $\alpha_2$ are presented in Fig.~\ref{fig:vinc_nested_alpha2}. Across all settings, RBC-II consistently yields $\xi^*=0$ and thus provides the valid guarantee stated in Theorem~\ref{thm:main-vc-final-corrected}.
Since the required sample sizes $N$ and $M$ satisfy
\[N \geq \frac{2}{\alpha_1}\left(\ln\frac{1}{\delta_1} + m+1\right), \quad M \geq \frac{1}{2\alpha_2^2}\ln\frac{1}{(1-l)\delta_2},\]
smaller $\alpha_1$ or $\alpha_2$, or a larger $l$, lead to stronger probabilistic guarantees but increase the sample size, which leads to longer computation times. 
In practice, $\alpha_1$, $\alpha_2$, and $l$ should be chosen to balance conservativeness and computational efficiency.

\begin{figure}[!htbp]
    \centering
    \subfigure[\scriptsize RBC-I: Times (in second)]{\includegraphics[width=0.31\linewidth]{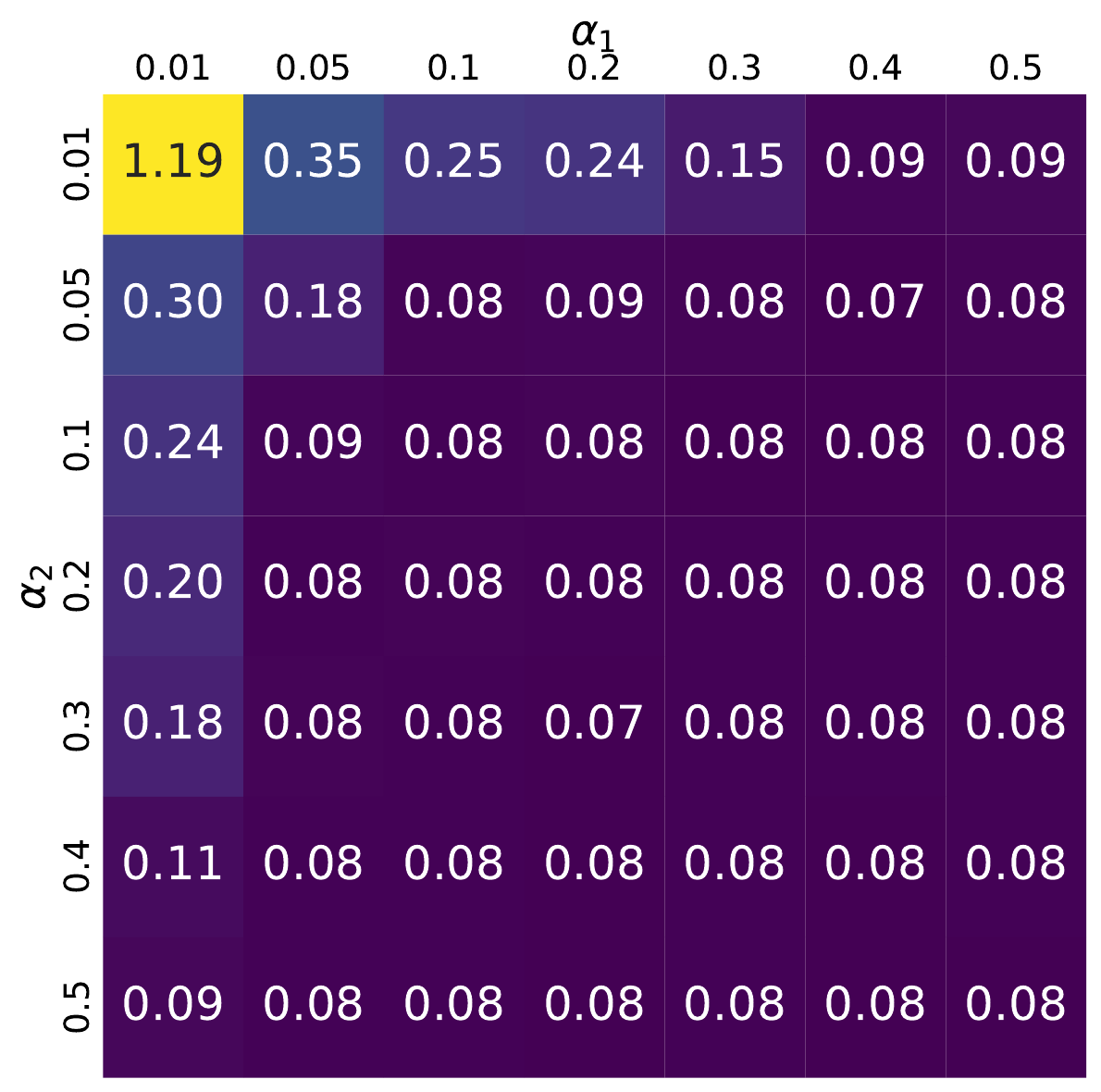}
    \label{fig:vinc_time}}
    \subfigure[\scriptsize RBC-II: Values of $1-\frac{\alpha_1}{l\delta_2}$]{\includegraphics[width=0.31\linewidth]{./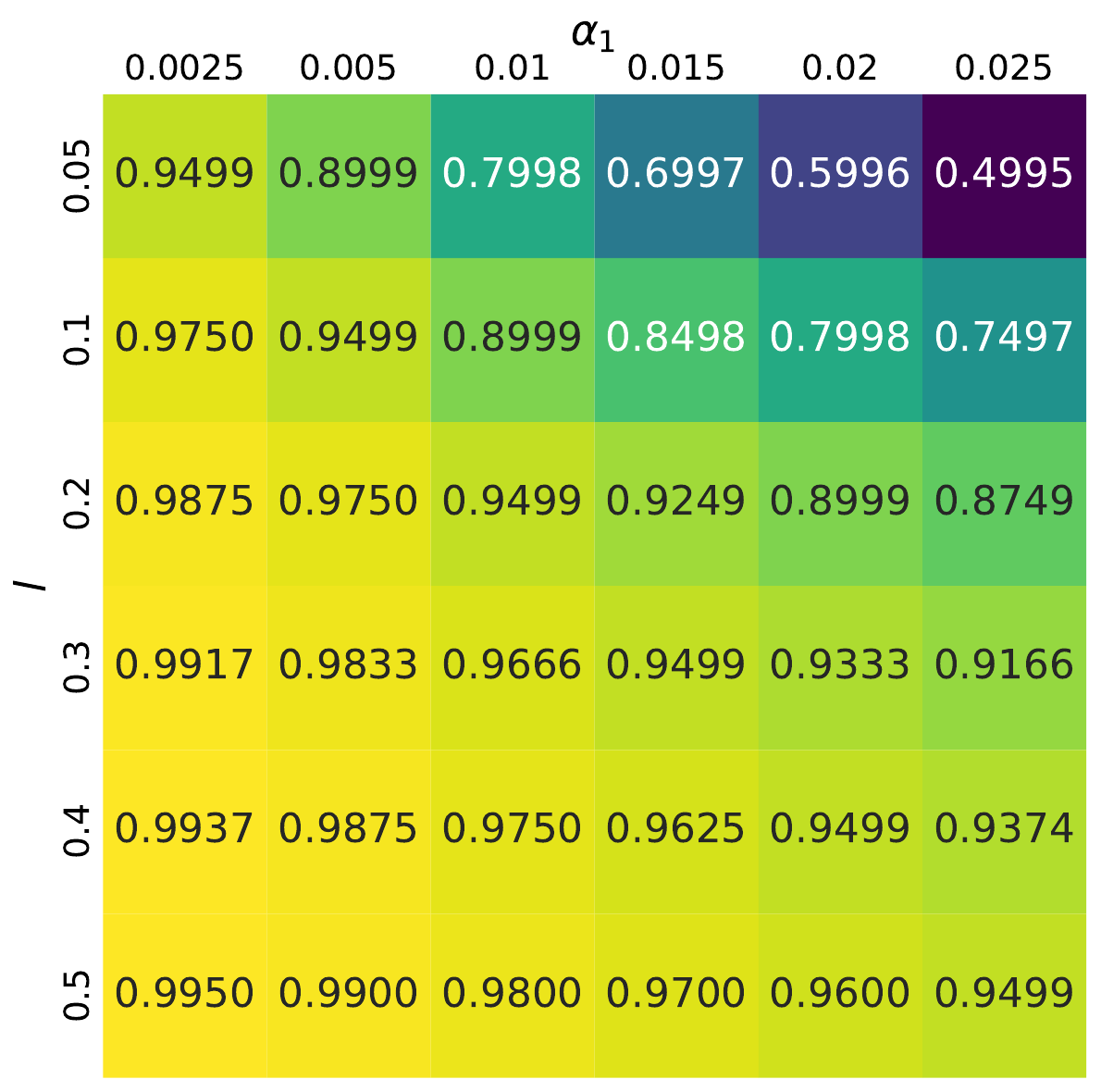}
    \label{fig:vinc_nested_p}}
    \subfigure[\scriptsize RBC-II: Times (in second)]{\includegraphics[width=0.31\linewidth]{./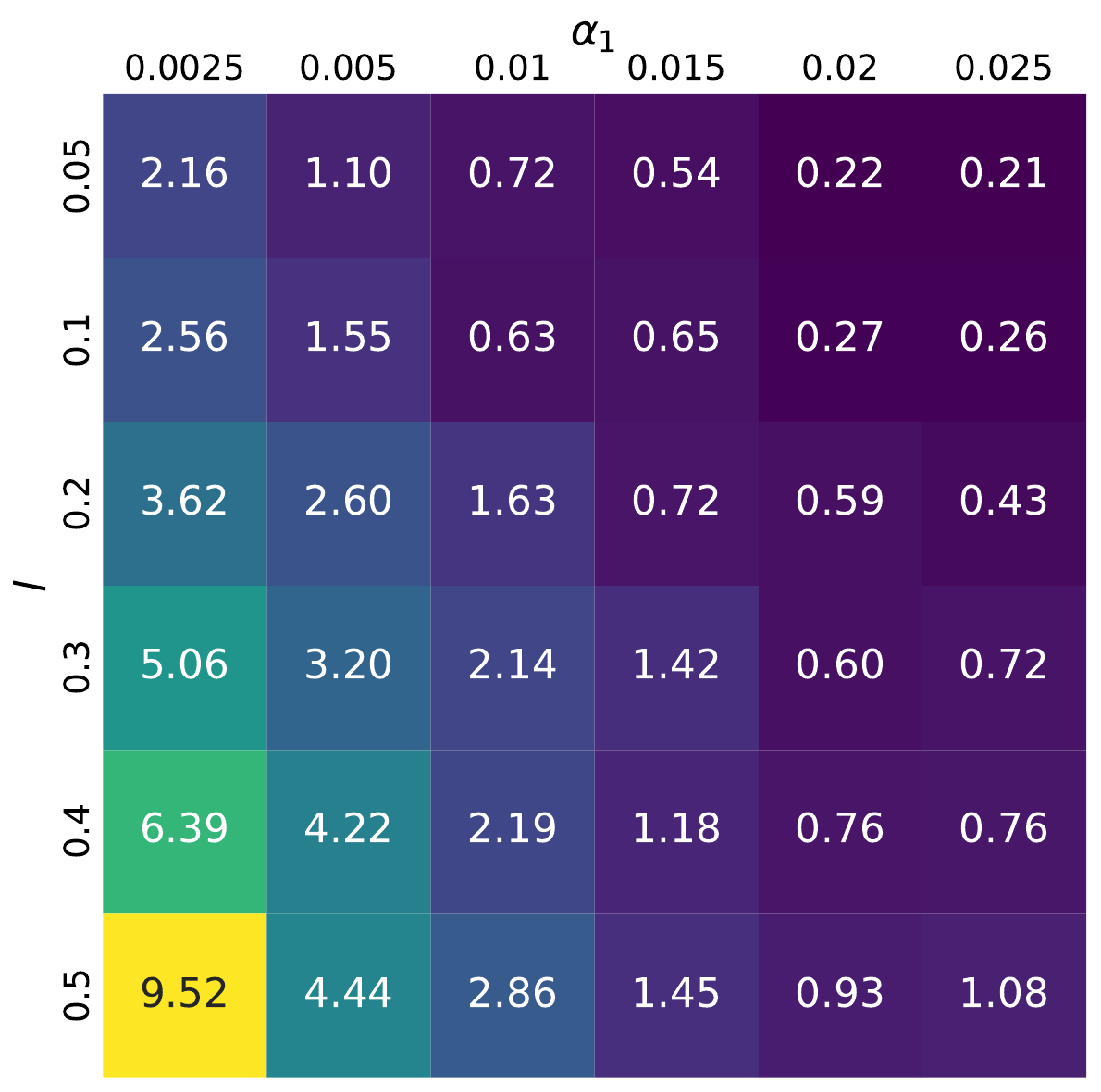}\label{fig:vinc_nested_t}}
    \caption{Performance and computation time.}
    \label{fig:vinc_nested}
\end{figure}

\begin{figure}[!htbp]
    \centering
    \includegraphics[width=0.6\linewidth]{./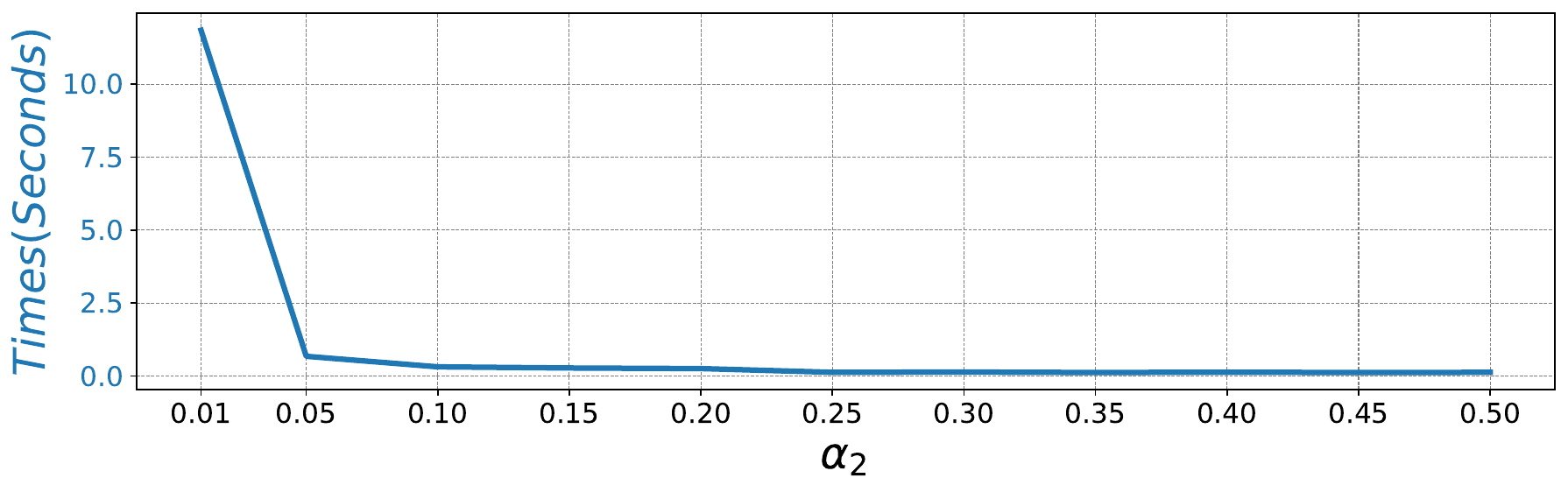}
    \caption{Computation time of RBC-II under different $\alpha_2$.}
    \label{fig:vinc_nested_alpha2}
\end{figure}


Similar to RBC-I, $\delta_1$ can be set to a value extremely close to zero, such as $10^{-6}$, to ensure a high confidence level, and the polynomial degree of $h_1(\bm{a},\bm{x})$ can be kept low to reduce the disturbance sample size $M$. For $\delta_2$, choosing a value close to one reduces the required disturbance sample size and increases the value of $1 - \frac{\alpha_1}{l\delta_2}$, thereby yielding a less conservative guarantee. 
Other hyperparameters, including $\overline{\xi}$, $U_{\bm{a}}$, and $C$, have only a minor impact on both the guarantee results and the computation time.

\paragraph{SBC-III under Varying Hyperparameters.} We analyze the influence of hyperparameters on SBC-III using Example~\ref{ex:lotka}. Fig.~\ref{fig:lotka_para1} reports the optimal values $J^*(\mathbb{S})$ and the computation times of SBC-III as we vary $\tau$, $U_{\bm{a}}$, and the polynomial degree of $h_1(\bm{a},\bm{x})$. Since the polynomial degree also influences the performance of the SBC-SOS baseline, we include its results and computation times in the third subfigure of Fig.~\ref{fig:lotka_para1} for comparison.

\begin{figure}[!h]
    \centering
    \subfigure{\includegraphics[width=0.6\linewidth]{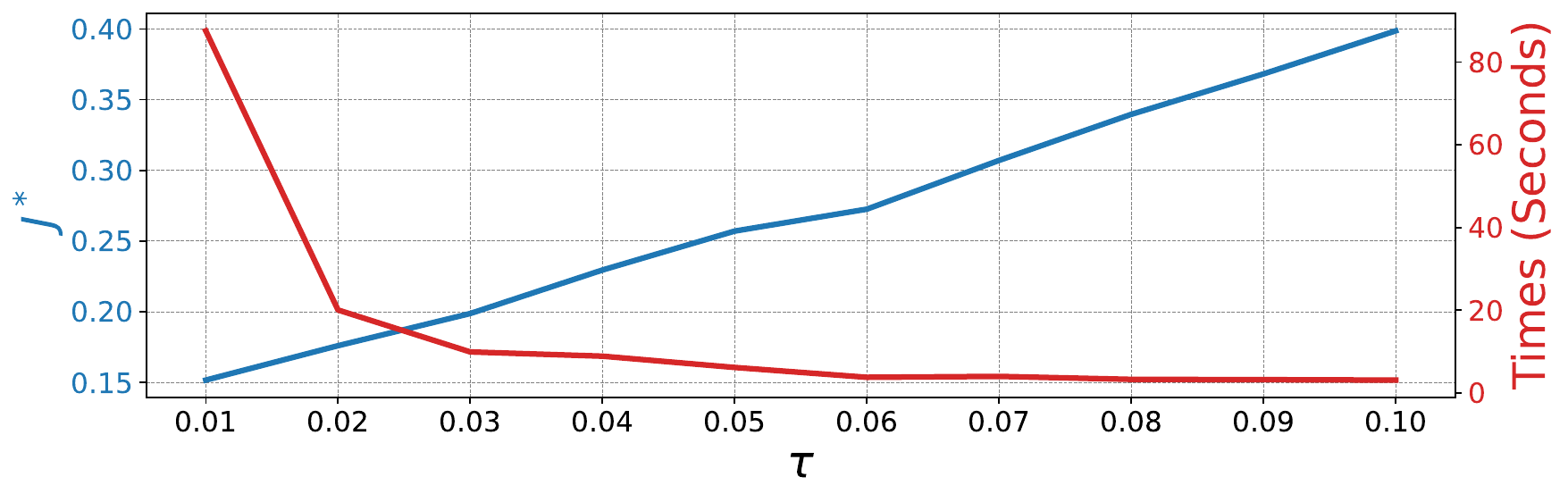}}
    \subfigure{\includegraphics[width=0.6\linewidth]{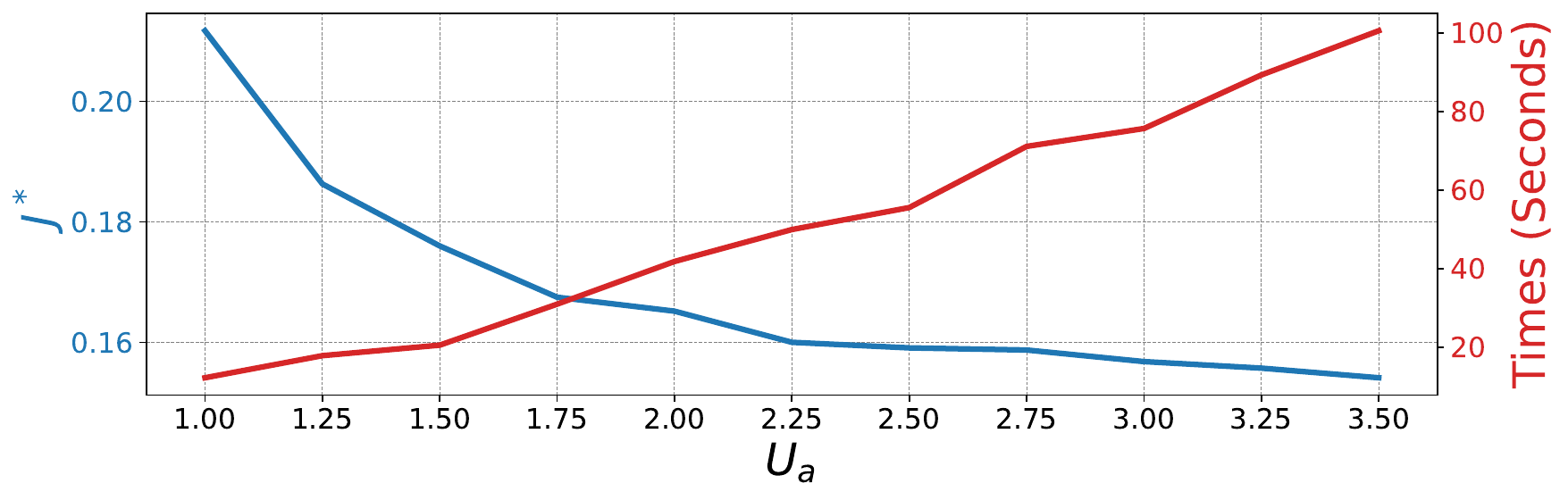}}
    \subfigure{\includegraphics[width=0.6\linewidth]{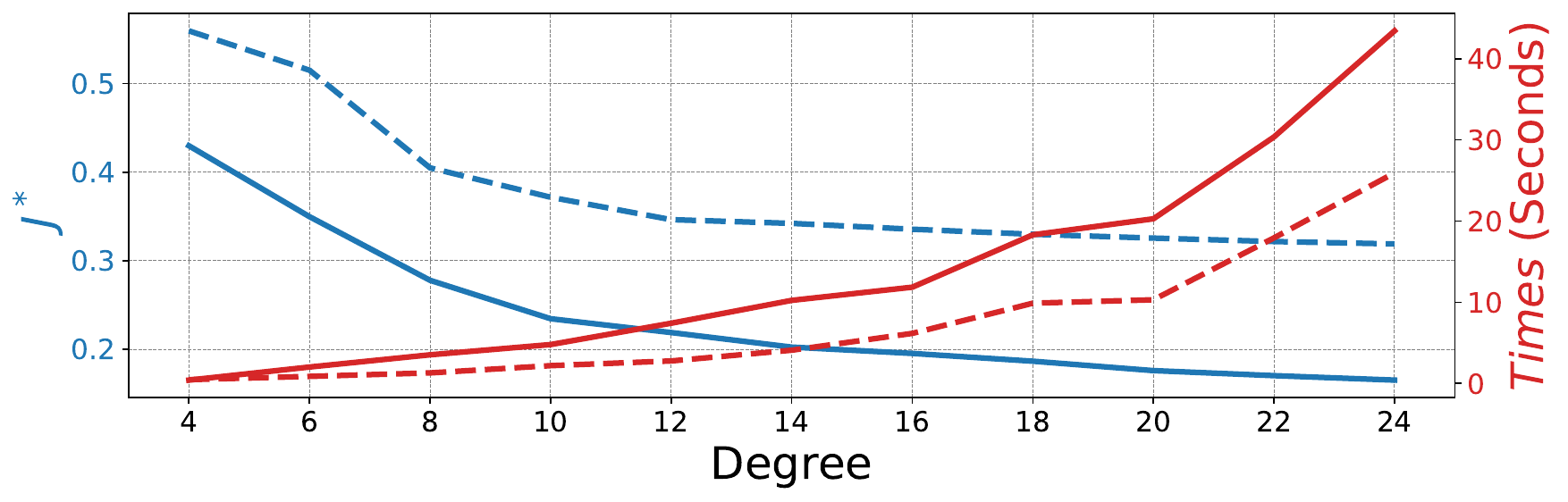}}
    \caption{$J^*(\mathbb{S})$ and computation times of SBC-III under different hyperparameters. Blue lines show the values of $J^*(\mathbb{S})$, and red lines show the computation times. In the third subfigure, solid lines correspond to our SBC-III method, and dashed lines correspond to the SBC-SOS baseline.}
    \label{fig:lotka_para1}
\end{figure}

For parameter $\tau$, a smaller value relaxes the constraint in \eqref{eq:sbc_linear3}, which leads to a better $J^*(\mathbb{S})$. However, the disturbance sample size $M$ in SBC-III must satisfy $M \geq \frac{U_a^2}{2\tau^2}\ln\frac{1}{(1-l)\delta_2}$,
so decreasing $\tau$ requires more disturbance samples and leads to longer computation times. For parameter $U_{\bm{a}}$, increasing its value expands the feasible region of $\bm{a}$, which enables better $J^*(\mathbb{S})$, though this effect gradually saturates as $U_{\bm{a}}$ becomes sufficiently large. A larger $U_{\bm{a}}$ also enlarges the required disturbance sample size $M$, which further increases computation time. Regarding the degree of $h_1(\bm{a},\bm{x})$, Fig.~\ref{fig:lotka_para1} shows that our method consistently yields better $J^*$ than the white-box SBC-SOS baseline across all degrees, highlighting the strength of our approach. A higher degree enhances the expressive power of $h(\bm{a},\bm{x})$ and reduces $J^*(\mathbb{S})$, but it also increases the dimension $m$ of $\bm{a}$, thereby increasing the required state-sample size $N$ and consequently the computation time.

The optimal values $J^*(\mathbb{S})$ and computation times under different $\alpha_1$ and $l$ settings are listed in Fig.~\ref{fig:lotka_para2}. Across these settings, $J^*(\mathbb{S})$ varies only slightly, indicating that SBC-III is robust to variations in $\alpha_1$ and $l$. The trends in sample size and computation time with respect to $\alpha_1$ and $l$ are similar to those observed for RBC-II.

\begin{figure}[!h]
    \centering
    \subfigure[$1-\frac{\alpha_1}{l\delta_2}$]{\includegraphics[width=0.31\linewidth]{./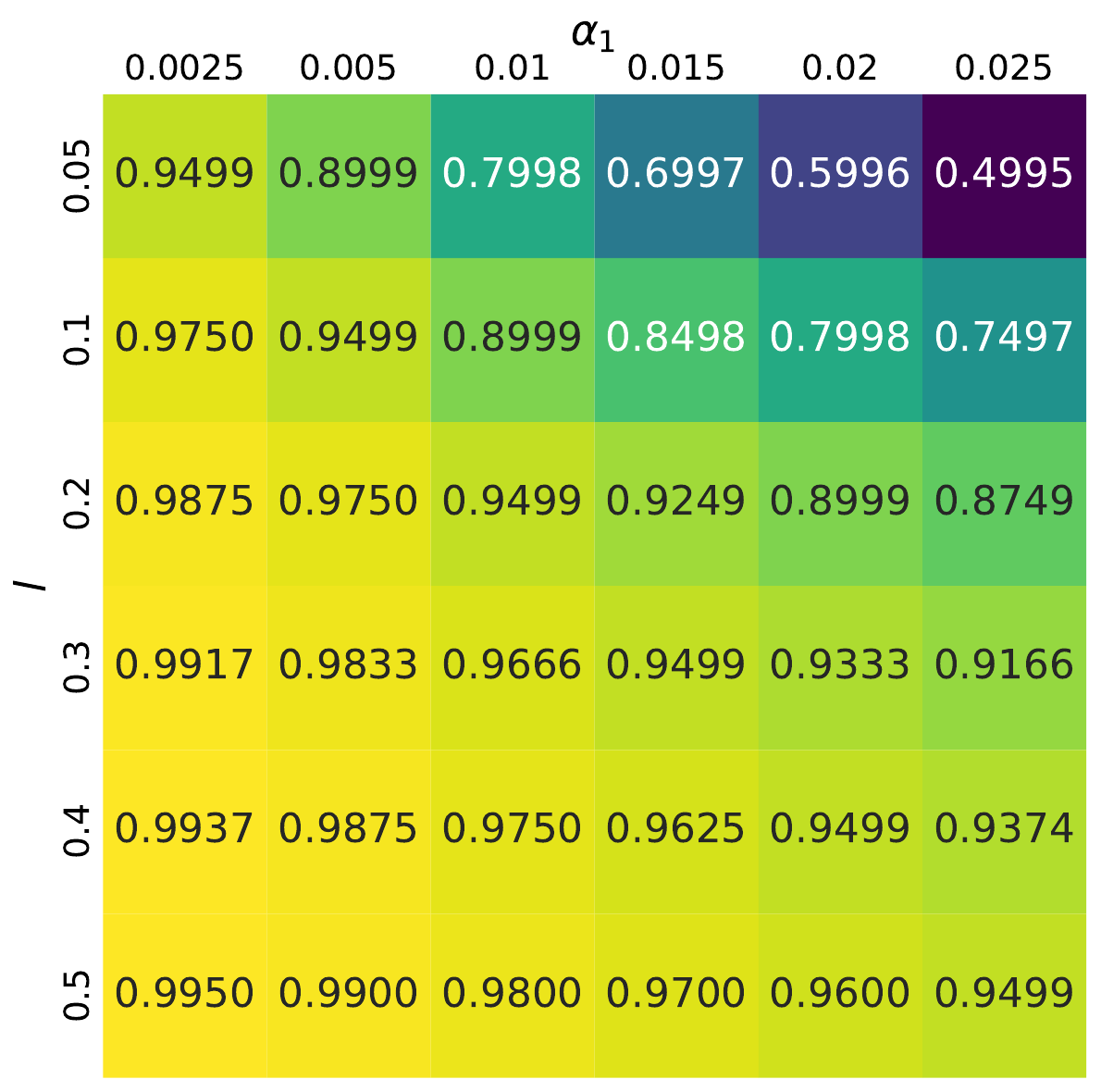}
    \label{fig:lotka_a_d}}
    \subfigure[$J^*(\mathbb{S})$]{\includegraphics[width=0.31\linewidth]{./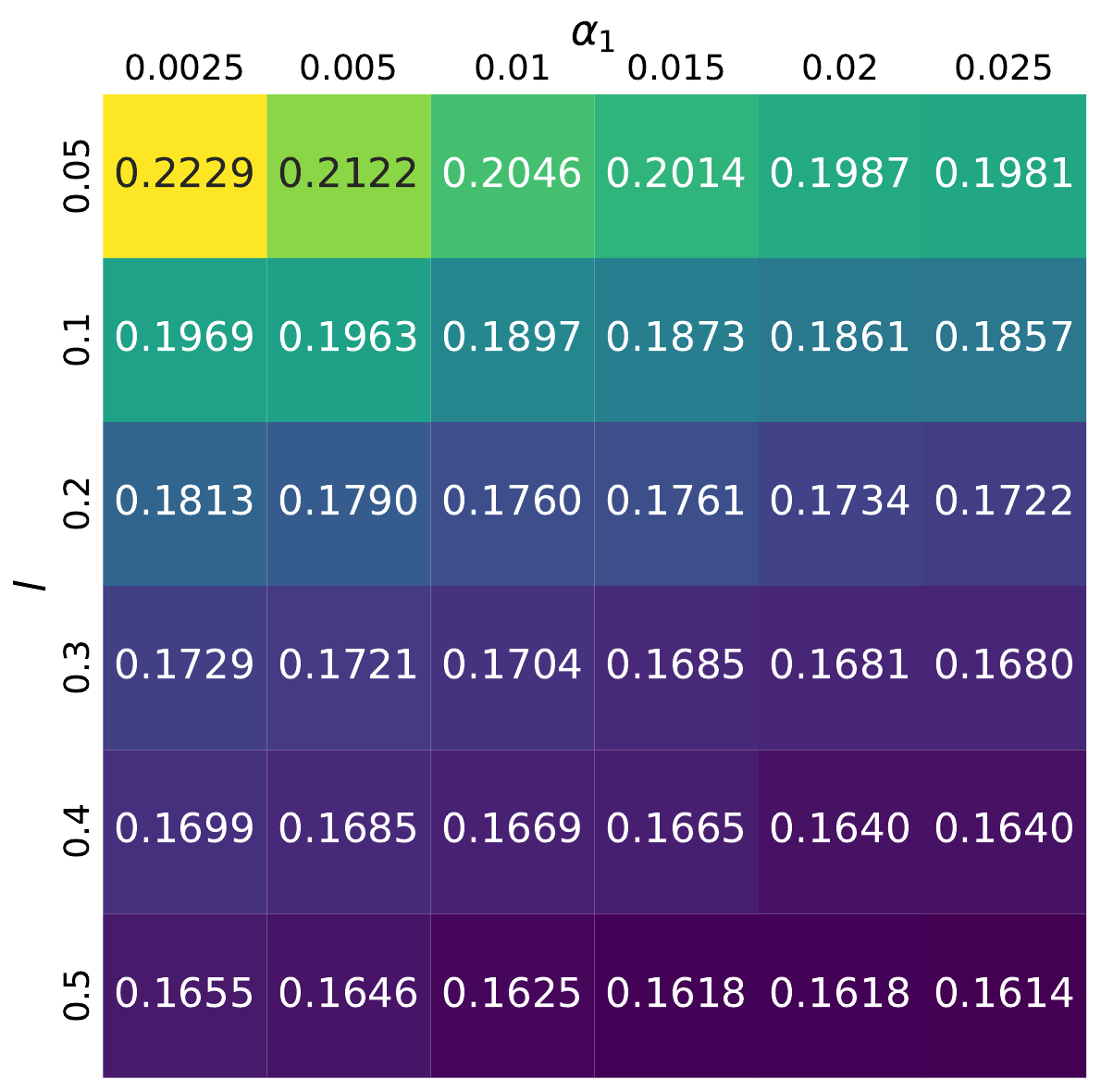}
    \label{fig:lotka_a_d_j}}
    \subfigure[Times (in seconds)]{\includegraphics[width=0.31\linewidth]{./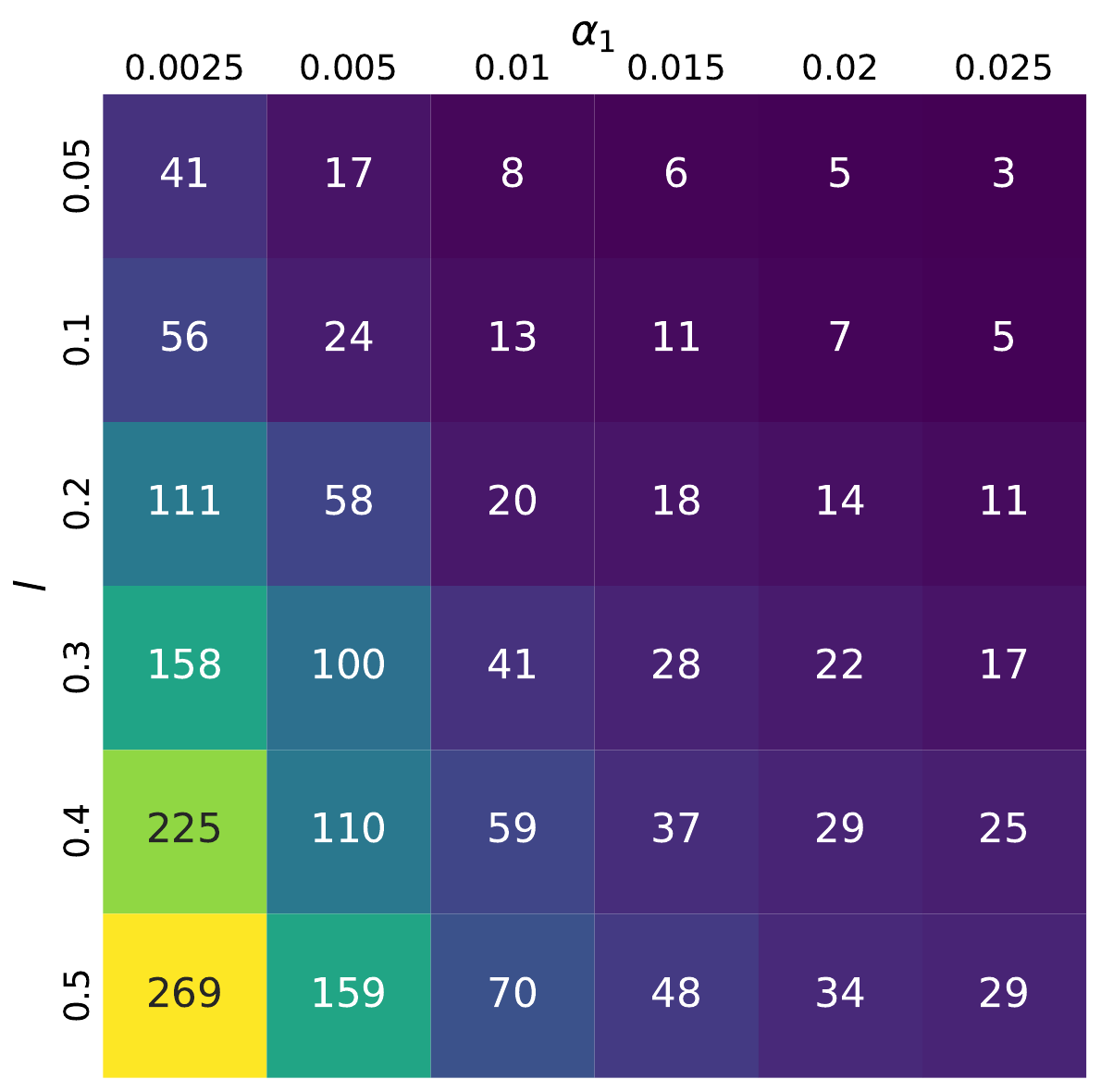}
    \label{fig:lotka_a_d_t}}
    \caption{The values of $1-\frac{\alpha_1}{l\delta_2}$, $J^*(\mathbb{S})$, and computation times of SBC-III under different $\alpha_1$ and $\delta_2$.}
    \label{fig:lotka_para2}
\end{figure}

\subsection{Application of Our Method in CARLA}
\label{sec:carla}

To further demonstrate the practical applicability of the proposed method, we evaluate our PAC one-step safety certification approaches in the CARLA 0.9.15 simulator \cite{Dosovitskiy17}, a high-fidelity platform for autonomous driving research. Specifically, we consider two application cases: a speed-maintenance case and a car-following case, as shown in Fig.~\ref{fig:carla}. In both cases, the precise mathematical form of system dynamics is unknown and stochastic.

\begin{figure}[!h]
    \centering
     \subfigure[\scriptsize Case 1]{\includegraphics[width=0.45\linewidth]{./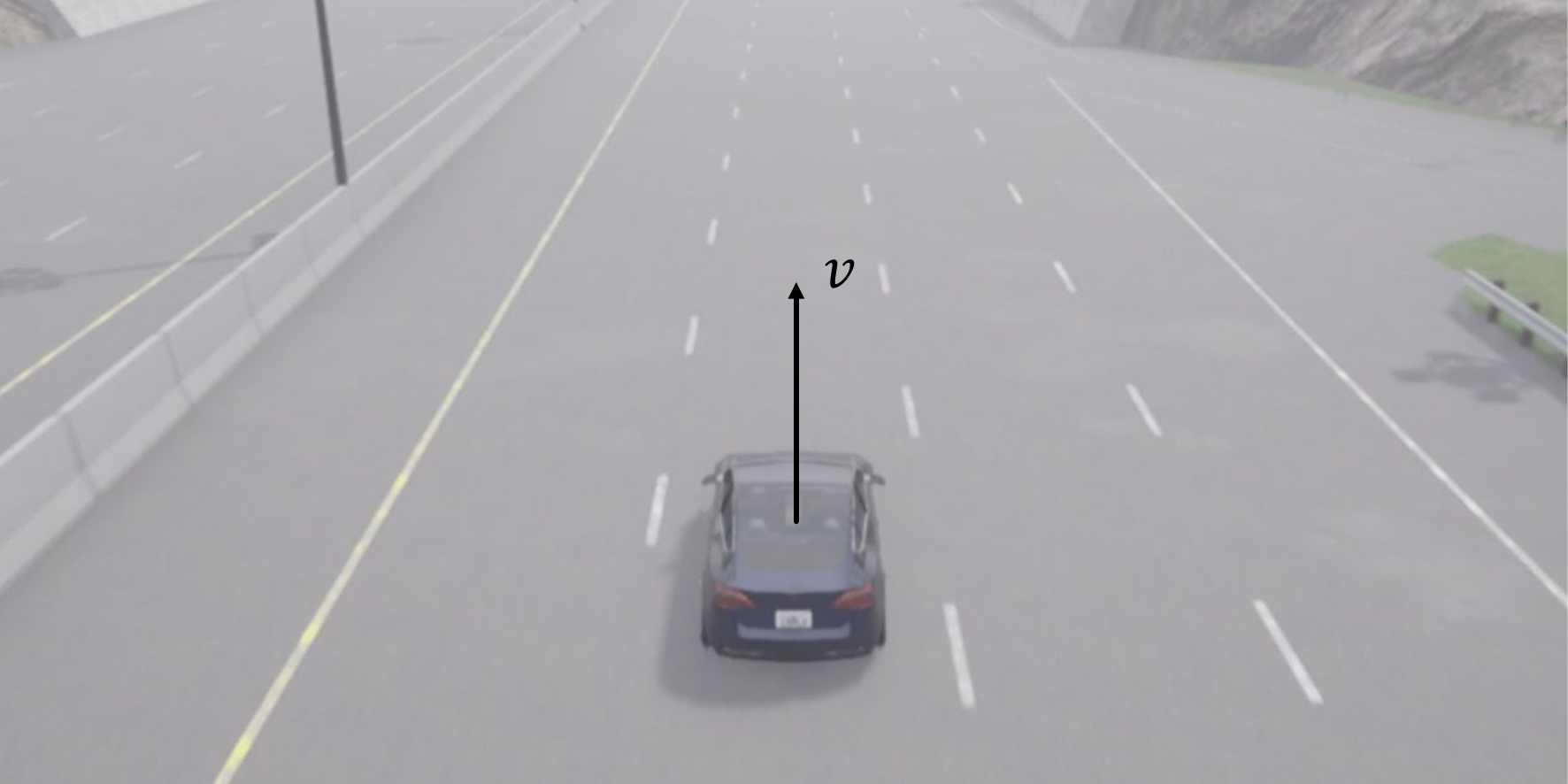}
     \label{fig:carla1}}
    \subfigure[\scriptsize Case 2]{\includegraphics[width=0.45\linewidth]{./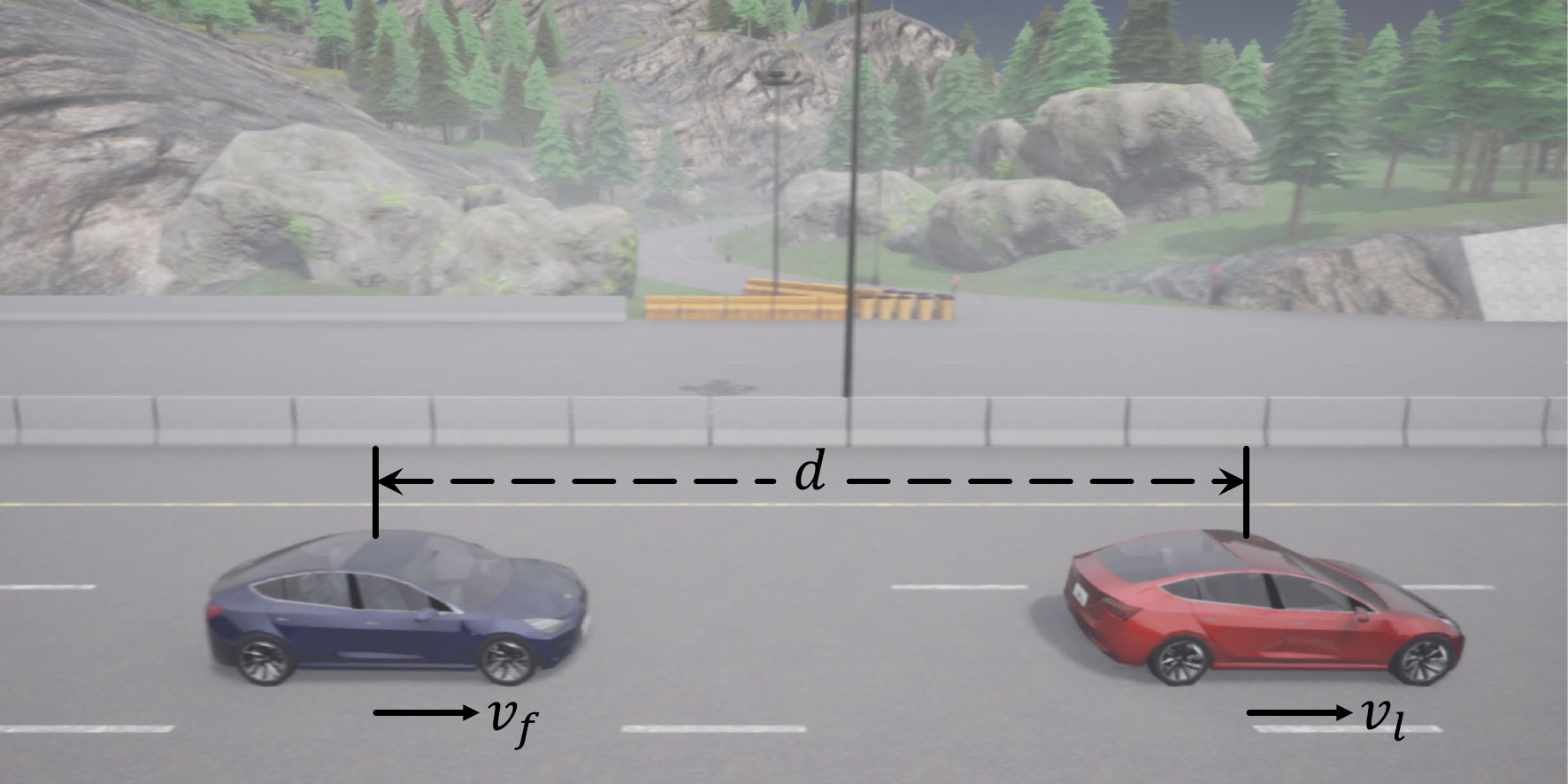}
    \label{fig:carla2}}
    \caption{Illustration of the CARLA simulation environment}
    \label{fig:carla}
\end{figure}

\subsubsection{Case 1 (Speed Maintenance).}
Consider an autonomous vehicle driving along a straight road, with the objective of maintaining its speed $v$ within the range of $20 ~km/h$ to $60 ~km/h$. 
The vehicle is controlled by CARLA's built-in PID controller, which attempts to regulate the speed around $40 ~km/h$. We apply our method to assess whether the closed-loop system 
can be certified to remain within the prescribed safe speed range.

We first apply RBC-I with $\delta = 10^{-6}$ and $\alpha_1 = \alpha_2 = 0.05$, and we obtain $\xi^*(\mathbb{S})=0$, which yields the following formal guarantee:
\[\textnormal{P}_{\mathbb{S}}\bigg[ \textnormal{P}_{\bm{x}}\Big[\textnormal{P}_{\bm{d}}\!\big[ \bm{f}(\bm{x},\bm{d}) \in \mathbb{X} \big] \ge 0.95 \Big] \ge 0.95 \bigg] \ge 1-10^{-6}.\]
For RBC-II, we set $\delta_1 = 10^{-6}$, $\alpha_1 = 0.005$, $\alpha_2 = 0.05$, $\delta_2 = 0.999$, and $l = 0.1$. RBC-II gives $\xi^*(\mathbb{S})=0$, leading to the guarantee
\[\textnormal{P}_{\mathbb{S}}\bigg[ \textnormal{P}_{\bm{x}}\Big[\bm{f}(\bm{x},\bm{d}) \in \mathbb{X} \big] \ge 0.95 \Big] \ge 0.95 \Big] \ge 1- 10^{-6}.\]
For SBC-III, we use $\delta_1 = 10^{-6}$, $\alpha_1 = 0.005$, $\delta_2 = 0.999$, $\tau=0.02$ and $l = 0.1$. Solving \eqref{eq:sbc_linear3} yields $J^*(\mathbb{S})=0.0200$ and the guarantee
\[\textnormal{P}_{\mathbb{S}}\bigg[ \textnormal{P}_{\bm{x}}\Big[\textnormal{P}_{\bm{d}}\!\big[ \bm{f}(\bm{x},\bm{d}) \in \mathbb{X} \big] \ge 1-h(\bm{a}^*(\mathbb{S}), \bm{x}) - \lambda^*(\mathbb{S}) \Big] \ge 0.95 \bigg] \ge 1- 10^{-6}.\]
where SBC-III further satisfies $1-\lambda^*(\mathbb{S})-h(\bm{a}^*(\mathbb{S}),\bm{x}) \approx 0.98$ for every $\bm{x} \in \mathbb{X}$.

Overall, all three methods provide highly non-conservative guarantees for this case. Using Monte Carlo simulation with 1000 randomly sampled initial speeds, the evolution of the vehicle speed over time is shown in Fig. \ref{fig:carla_sim}. In all sampled trajectories, the system remains strictly within the  safe speed range, which is consistent with the guarantees delivered by the three certification methods.

\begin{figure}
    \centering
    \includegraphics[width=0.6\linewidth]{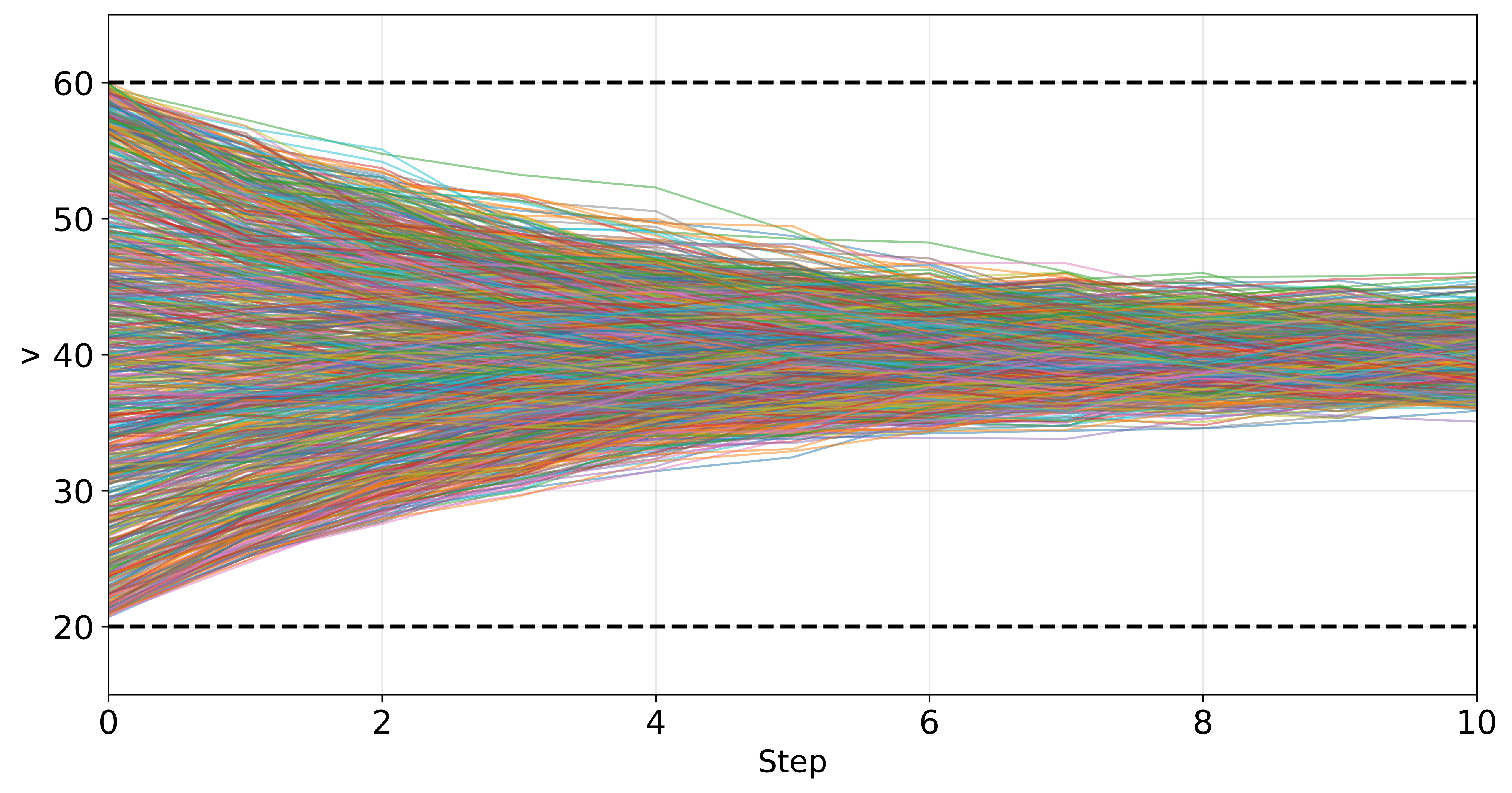}
    \caption{Simulated system trajectories for Case 1.}
    \label{fig:carla_sim}
\end{figure}

\subsubsection{Case 2 (Car Following).}
We next consider a car-following case involving a leading vehicle and a following vehicle, where the goal is to keep their relative distance and relative speed within prescribed safe bounds. 
The system state is defined by the relative distance $d$ between the two vehicles and the relative speed $v = v_f-v_l$, where $v_f$ and $v_l$ denote the speeds of the following and leading vehicles, respectively. The system is considered safe when $d\in [5,15]$ meters and $v \in [-20, 20] ~km/h$. The speed of the leading vehicle $v_l$ is modeled as a random variable uniformly distributed over $[45, 75] ~ km/h$, introducing stochasticity into the system dynamics. The following vehicle is controlled by a fixed policy obtained using the proximal policy optimization algorithm. 

In this case, RBC-I fails to provide a valid guarantee for $\alpha_1\leq 0.5$ and $\alpha_2 \leq 0.5$, as it consistently results in $\xi^*(\mathbb{S}) > 0$. Similarly, RBC-II is unable to provide a valid guarantee for $\alpha_1\leq 0.25$, $\alpha_2 \leq 0.5$ and $\delta_2 \leq 0.5$. Therefore, we focus on the SBC-III method. Setting $\delta_1=10^{-6}$, $\alpha_1 = 0.01$, $\delta_2 = 0.999$, and $l=0.2$, SBC-III yields $J^*(\mathbb{S})=0.1592$ and $\lambda^*(\mathbb{S}) = 0.0611$. A video included at \href{https://github.com/TaoranWu/PSC-BDSS}{https://github.com/TaoranWu/PSC-BDSS} demonstrates the system simulation starting from the initial state $d=14~m$ and $v=-15~km/h$, together with the corresponding safety certification provided by SBC-III.

\end{document}